\documentclass[12pt,draftclsnofoot,letterpaper,onecolumn,romanappendices]{IEEEtran}
\usepackage[dvips]{graphicx}
\usepackage{cite}
\usepackage{epsfig}
\usepackage{amsthm}
\usepackage{amsmath,amssymb,amsfonts,amstext,amsbsy,amsopn,dsfont}
\usepackage{cases}
\usepackage{sublabel}
\usepackage{array}

\oddsidemargin -0.2in
\evensidemargin -0.1in
\textwidth 6.9in
\topmargin -0.6in
\textheight 9.4in

\newtheorem{theorem}{\textbf{Theorem}}
\newtheorem{proposition}{\textbf{Proposition}}
\newtheorem{lemma}{\textbf{Lemma}}
\newtheorem{remark}{\textbf{Remark}}
\newtheorem{definition}{\textbf{Definition}}
\newtheorem{corollary}{\textbf{Corollary}}

\newcommand{\defn}{\triangleq}
\newcommand{\dif}{\textmd{d}}

\begin{document}

\title{Distributed SIR-Aware Scheduling in Large-Scale Wireless Networks}

\author{Chun-Hung Liu and Jeffrey G. Andrews
\thanks{C.-H. Liu and J. G. Andrews are with the Department of Electrical and Computer Engineering, the University of Texas at Austin, Austin TX 78712-0204, USA. The contact author is J. G. Andrews (Email: jandrews@ece.utexas.edu). Manuscript date: \today.}}

\maketitle

\begin{abstract}
Opportunistic scheduling and routing can in principle greatly increase the throughput of decentralized wireless networks, but to be practical such algorithms must do so with small amounts of timely side information. In this paper, we propose three related techniques for low-overhead distributed opportunistic scheduling (DOS) and precisely determine their affect on the overall network outage probability and transmission capacity (TC).  The first is distributed channel-aware scheduling (DCAS), the second is distributed interferer-aware scheduling (DIAS), and the third generalizes and combines those two and is called distributed interferer-channel-aware scheduling (DICAS). One contribution is determining the optimum channel and interference thresholds that a given isolated transmitter should estimate and apply when scheduling their own transmissions. Using this threshold, the precise network-wide gain of each technique is quantified and compared.  We conclude by considering interference cancellation at the receivers, and finding how much it improves the outage probability.
\end{abstract}

%\begin{keywords}
%Distributed scheduling, opportunistic scheduling, network capacity, wireless ad hoc network, and stochastic geometry.
%\end{keywords}

\section{Introduction}

Opportunistic scheduling exploits channel variations to improve network throughput and reliability. In a point-to-point link, small-scale fading can be exploited by changing the modulation rate/power depending on the channel quality \cite{AJGPPV97}.  Opportunism is considerably more attractive in a multiuser setting, since the overall throughput can be further improved by scheduling a subset of users with good channels \cite{RKPH95}\cite{LLAJG01}.  This can be done optimally in a centralized network system. For example, in modern cellular systems the base station collects channel state information (possibly including interference levels) from candidate users and schedules them according to a pre-determined criterion, such as proportional fairness\cite{SRJMH98,FPKAKMDKHT98,SLVBRS99,XLEKPCNBS01,JWCJMSC09}. However, in a decentralized network, it is difficult for each transmitter to find optimal opportunistic scheduling strategies because global channel information is unavailable. In this paper, we study how to approach the optimality of opportunistic scheduling in a large-scale ad hoc (or other decentralized) network\footnote{A large-scale network in this paper means that a network has an infinitely large size and number of nodes, while a small-scale network in this paper means a network with a fixed size and number of nodes.}, where there is no central scheduler and transmission decisions must be made in a distributed fashion by the transmitters themselves, based only on local information.

Interference is a main performance limiting factor in a wireless ad hoc network. So a good opportunistic scheduling technique should consider interference when scheduling transmitters. In this paper, we propose and analytically evaluate three distributed opportunistic scheduling (DOS) strategies of increasing complexity. The performance metric used in this paper is \emph{transmission capacity} (TC) which was introduced in \cite{SWXYJGAGDV05} (see \cite{SWJGANJ10} for a tutorial treatment) and measures the overall area spectral efficiency with outage constraints. This metric has the merit of penalizing selfish strategies that optimize a single-link at the expense of increasing interference in the network. We model the nodes as a homogeneous Poisson point process (PPP) on the plane and obtain and compare the transmission capacity of the proposed DOS techniques. A schematic example of the pair-wise network model considered in this paper is shown in Fig. \ref{Fig:NetworkModel}.

\subsection{Related Work on Distributed Opportunistic Scheduling}

Considerable work has been done on opportunistic scheduling, and the field is fairly mature for centralized networks \cite{MAKKKRASPWRV01,PVDNTRL02,XLEKPCNBS03,SBorst05,MJNeely06}). Distributed approaches have been considered for ad hoc networks, but results for large decentralized wireless networks are relatively sparse. The main reason is that distributed scheduling depends strongly on local network state information which is not very tractable in an optimization framework, especially when the network is large. Simpler, suboptimal approaches that still exploit opportunism have been proposed, for example ``threshold scheduling'' and ``opportunistic Aloha'' \cite{SWJGANJ07,FBBBPM09}.  These techniques schedule transmitters whose channel gains are higher than some (fixed, but optimized) threshold.  They do not base scheduling decisions on the interference environment, because in that case scheduling decisions would become coupled.  For example, transmitter $A$ affects the perceived Signal-to-Interference Ratio (SIR) of a transmitter $B$, and vice versa.

To account for interference in distributed scheduling, CSMA-based and SIR-based approaches can be introduced. Carrier-Sense Multiple Access (CSMA) is a popular random access protocol where transmitting nodes defer transmission if they detect interference above a threshold amount \cite{QZQCFYXSZN07,DZMOPWGJZHVP08,CTPSJZMOPHVPDZ10,DZWGJZ09,SSTDZJZJZ10}.  From a scheduling point of view, these works tend to favor transmitter-receiver pairs with good channel conditions through multiple runs of channel probing and scheduling.  Some works (e.g. \cite{DZWGJZ09,SSTDZJZJZ10}) use game theory to design a distributed scheduling scheme, but the resulting complexity appears extremely high for a large or dense network. A more recent work \cite{YKFBGDV11} proposed two channel-aware CSMA scheduling protocols in a large-scale network, using a model and approach similar to that of the present paper. Our results can be viewed as complementary to theirs (they consider CSMA and we do not); notably they do not find the optimum transmission thresholds and the overall network throughput, whereas that is the main contribution of the present paper.  A final related recent approach corresponds to Qualcomm's FlashLinQ network architecture proposed in \cite{XWSTSSTRJLRLAJ10}. FlashLinQ scheduling is devised for a small-scale network and can include CSMA, and \cite{FBJLTRSSSSXW11} proposed an inverse-square-root power control for optimizing SIR-based CSMA scheduling in a large-scale network, resulting in a similar conclusion to \cite{NJSWJGA08}.

\subsection{Overview of the Proposed Scheduling Techniques and Contributions}\label{SubSec:OverViewDOS}

To summarize the previous subsection, we observe three limitations in the prior work on distributed opportunistic scheduling. First, the majority are devised for a small-scale or centralized network. Second, CSMA-based are considered, which may be excessively conservative in terms of spatial reuse. Third, most do not consider only the channel strength, and not the interference level. This motivates the three DOS techniques of this paper, which are described next.  For all three proposed DOS techniques, we succeeded in deriving lower and upper bounds on the outage probability, corresponding upper and lower bounds on the transmission capacity, and using an asymptotic analysis on those bounds, design guidelines for the corresponding threshold functions.

The first and simplest technique, which is shown in Fig. \ref{Fig:DisOppTrans} and entitled \emph{distributed channel-aware scheduling} (DCAS), is essentially the aforementioned threshold scheduling, but with the channel threshold dynamically adjusted based on an estimate of the actual transmitter density (after scheduling) rather than statically.  This dynamic threshold allows DCAS to significantly increase spatial reuse and TC compared to threshold scheduling \cite{SWJGANJ07} or opportunistic Aloha \cite{FBBBPM09}.

We next move to considering interference as well.  To avoid the coupling mentioned in the previous subsection, we propose a suboptimal approach called \emph{distributed interferer-aware scheduling} (DIAS). This technique -- an example is shown in Fig. \ref{Fig:DisOppTrans} -- suppresses transmission if the interference channel gain at the single closest unintended receiver $\tilde{H}_*\tilde{D}_*$ is above a (different) threshold. This single interference value can be learned through reciprocity or minimal feedback. Despite its simplicity, we see that DIAS can significantly enhance the average SIR over the network by avoiding transmissions that are likely to cause large interference to others, which can be determined with high probability just by considering the closest unintended receiver.

Finally, we combine DCAS and DIAS to create a third more general technique termed \emph{distributed interferer-channel aware scheduling} (DICAS), which uses both channel and strong-interference thresholds. As the example in Fig. \ref{Fig:DisOppTrans} shows, DICAS schedules a transmission only when $H_0D^{-\alpha}_0$ \emph{and} $\tilde{H}_*\tilde{D}^{-\alpha}_*$ both satisfy their threshold constraints. Naturally, this technique outperforms DCAS and DIAS.  In contrast to nearly all prior work on DOS, we are able to obtain fairly tight upper and lower bounds on the overall network-wide impact of each of these techniques in terms of all the salient network parameters, rather than only providing scaling laws or numerical results.

Finally, we consider the reduction in outage probability from receiver interference cancellation. Revised bounds on the outage probability for DCAS, DIAS, and DICAS with interference cancellation are obtained which show the respective gains. They indicate that interference cancellation  benefits DIAS much more than DCAS, and it can assist DICAS especially when the two thresholds are not jointly designed.

\section{Network Model and Preliminaries}\label{Sec:SysModel}
\subsection{The Network Model}
In this paper, we consider an infinitely large wireless ad hoc network in which there are many transmitter-receiver pairs and they are independently and randomly distributed. Specifically, all transmitting nodes are assumed to form a marked homogeneous Poisson point process (PPP) on the plane $\mathbb{R}^2$ denoted by
\begin{equation}
\Pi_t \defn \{(X_j, T_j, D_j, H_j) : X_j\in \mathbb{R}^2, T_j\in\{0,1\}, D_j \geq 1, H_j\in \mathbb{R}_+\},
\end{equation}
where $X_j$ denotes the transmitter of pair $j$ and its location, $T_j$ represents the transmission status of a transmitter (If $T_j=1$, the transmitter is allowed to transmit; otherwise $T_j=0$.), $D_j$ represents the distance between the transmitter and the intended receiver of pair $j$, and $H_j$ denotes the fading channel gain between the transmitter and the receiver of pair $j$. The spatial density of this PPP is denoted by $\lambda_t$, which gives the average number of transmitting nodes per unit area. A random access protocol in the style of slotted Aloha without power control is operated in the network. All transmission distances $\{D_j\}$ are assumed to be i.i.d. random variables, and all fading channel gains are i.i.d. exponential random variables with unit mean and variance. The notation of main network parameters, processes and functions are listed in Table \ref{Tab:MainVars}.

\begin{table}[!t]
  \centering
  \caption{Notation of Main Variables, Processes and Functions}\label{Tab:MainVars}
  \begin{tabular}{|c|c|}
  \hline
  Symbol & Definition\\ \hline
  $\Pi_t$  & Homogeneous PPP of transmitters \\
  $\Pi_c (\Pi_i, \Pi_{ic})$ & PPP of transmitters using DCAS (DIAS, DICAS) \\
  $\lambda_c\,(\lambda_i,\,\lambda_{ic})$ & Density of $\Pi_c\,(\Pi_i,\,\Pi_{ic})$\\
  $\underline{\lambda}\, (\overline{\lambda})$ & Lower (upper) bound on $\lambda$\\
  $\bar{\lambda}_{\epsilon}$ & Maximum contention density \\
  $\beta$ & SIR threshold \\
  $\alpha$ & Path loss exponent ($\alpha>2$)\\
  $\epsilon$ & Upper bound of outage probability \\
  $X_j (Y_j)$ & Transmitter (Receiver) of pair $j$\\
  $D_j$ & $|X_j-Y_j|$, Random transmission distance of pair $j$\\
  $\Delta_c, \Delta_i, \Delta_{ic}$ & Transmission threshold for DCAS, DIAS, DICAS\\
  $p_c(\cdot), p_i(\cdot), p_{ic}(\cdot)$ & Transmission probability for DCAS, DIAS, DICAS\\
  $q(\lambda_c), q(\lambda_i), q(\lambda_{ic})$ & Outage probabilities for DCAS, DIAS, DICAS\\
  $\underline{q}(\cdot) (\overline{q}(\cdot)) $ & Lower (upper) bound on outage probability $q(\cdot)$\\
  $\mathcal{C}_{\Delta}$ & $\Delta$-level dominant interferer coverage \\
  $\mathcal{C}^{\texttt{c}}_c$ ($\mathcal{C}^{\texttt{c}}_i$, $\mathcal{C}^{\texttt{c}}_{ic})$ & Interference cancellation coverage for DCAS (DIAS, DICAS)\\
  $\mu(\mathcal{A})$ & (Mean) Lebesgue measure of a bounded set $\mathcal{A}$\\
  $f_Z(\cdot), F_Z(\cdot), F^{\texttt{c}}_Z(\cdot)$ & PDF, CDF, CCDF of random variable $Z$\\
   \hline
  \end{tabular}
\end{table}

All transmitted signals undergo path loss and fading before they reach their intended receivers and all transmitters have the same unit transmit power. Let pair $0$ be the reference pair whose receiver (called reference receiver) is located at the origin. So the desired signal power is $H_0 D_0^{-\alpha}$ and its received interference can be written as a Poisson shot noise process\cite{ESS90,ESSJAS90,SBLMCT90,ESS92,JIDH98},
\begin{equation}\label{Eqn:Interference}
I_0 = \sum_{X_j\in\Pi_t\setminus X_0} \tilde{H}_j |X_j|^{-\alpha},
\end{equation}
where $\alpha>2$ is the path loss exponent\footnote{For a planar wireless network, $\alpha$ has to be greater than 2 in order to  obtain a bounded interference almost surely (a.s.), i.e. $I_0<\infty$ a.s. if $\alpha>2$\cite{MHJGAFBODMF10}.}, $|X_j|$ denotes the Euclidean distance between transmitter $X_j$ and the origin and $\tilde{H}_j$ is the fading channel gain from the transmitter of pair $j$ to the reference receiver. Since the PPP $\Pi_t$ is homogeneous, according to Slivnyak's theorem the statistics of signal reception seen by the reference receiver is the same as that seen by any other receivers of all other pairs\cite{DSWKJM96}\cite{FBBB10}. Our following analysis is based on the reference pair. The performance measured at the origin is often referred to the Palm measure and according to \cite{DSWKJM96} conditioning on the event of a node lying at the origin does not affect the statistics of the rest of the process. Hence, without loss of ambiguity, the the probability and expectation of functionals conditioned at the origin are just denoted by $\mathbb{P}$ and $\mathbb{E}$, respectively.

Since the network in this paper is assumed to be interference-limited, noise power is not considered in \eqref{Eqn:Interference}. The received signal-to-interference ratio (SIR) can be expressed as
 \begin{equation}\label{Eqn:SIR}
\mathrm{SIR}(\lambda_t) = \frac{H_0 D_0^{-\alpha}}{I_0}.
 \end{equation}
The typical receiver can successfully decode the information when $\mathrm{SIR}$ is greater than some threshold $\beta$ and is in outage otherwise. The outage probability is given by
\begin{equation}\label{Eqn:OutageProb}
q(\lambda_t) \defn \mathbb{P}[\mathrm{SIR}(\lambda_t)<\beta]=\mathbb{P}\left[\frac{H_0 D_0^{-\alpha}}{I_0}<\beta\right].
\end{equation}
The above outage probability for the case of Rayleigh fading can be found exactly\cite{FBBBPM06}. For most other cases, it is difficult to find because of the complex distribution of random variable $I_0$. However, bounds on the CCDF of $I_0$ are able to be found as shown in the following lemma, and they are useful while finding the bounds on the outage probability for the three DOS techniques.

\begin{theorem}[\textbf{Bounds on the CCDF of the shot-noise process of a nonhomogeneous PPP}]\label{Thm:BoundsCCDFinter}
Suppose $\Pi_{\texttt{n}}=\{(X_j,\tilde{H}_j):X_j\in\mathbb{R}^2, \tilde{H}_j\in\mathbb{R}_+, \forall j\in\mathbb{N}_+\}$ is a marked nonhomogeneous PPP and $\{\tilde{H}_j\}$ are i.i.d. exponential random variables with unit mean and variance. The density of $\Pi_{\texttt{n}}$ at location $X$ is denoted by $\lambda_{\texttt{n}}(|X|)$. Let $I_{\texttt{n}}$ be the shot-noise process generated by $\Pi_{\texttt{n}}$ and it is defined as $I_{\texttt{n}}=\sum_{X_j\in\Pi_{\texttt{n}}}\tilde{H}_j|X_j|^{-\alpha}$ where $\alpha>2$. Then the CCDF of $I_{\texttt{n}}$ can be bounded as
\begin{equation}\label{Eqn:BoundsInterfCCDF}
1-e^{-A(x)} \leq F^{\textsf{c}}_{I_{\texttt{n}}}(x) \leq 1-\left(1-\frac{(\alpha-1)A(x)}{\left[(\alpha-1)-A(x)\right]^2}\right)^+e^{-A(x)},
\end{equation}\label{Eqn:PsiFunc}
where $A(x)=\frac{2\pi}{\alpha}x^{-\frac{2}{\alpha}}\int^{\infty}_0 \lambda_{\texttt{n}}\left(\sqrt[\alpha]{u/x}\right) u^{\frac{2}{\alpha}-1} e^{-u} \dif u$ and $(y)^+\defn \max(y,0)$. In addition, if $\Pi_{\texttt{n}}$ is homogeneous, then $A(x)$ reduces to $\pi x^{-\frac{2}{\alpha}}\Gamma(1+\frac{2}{\alpha})\lambda_{\texttt{n}}$ where $\Gamma(x)\defn\int_0^{\infty} t^{x-1} e^{-t}\dif t$ is the Gamma funnction.
\end{theorem}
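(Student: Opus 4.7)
The plan is to exploit the Poisson structure by mapping the two-dimensional marked PPP $\Pi_\texttt{n}$ to a one-dimensional PPP of received powers. Define $Y_j \defn \tilde{H}_j |X_j|^{-\alpha}$ for each $X_j \in \Pi_\texttt{n}$; by the marking and mapping theorems, $\Psi \defn \{Y_j\}$ is a PPP on $(0,\infty)$ whose mean measure on $(x, \infty)$ equals
\begin{equation*}
\int_{\mathbb{R}^2} \lambda_\texttt{n}(|X|)\, \mathbb{P}[\tilde H > x|X|^\alpha]\, \dif X = \int_{\mathbb{R}^2} \lambda_\texttt{n}(|X|) e^{-x|X|^{\alpha}}\, \dif X.
\end{equation*}
Passing to polar coordinates and substituting $u = xr^\alpha$ identifies this integral with the stated $A(x)$; in the homogeneous case the $u$-integral collapses to $\Gamma(1+2/\alpha)$, which gives the reduction claimed at the end of the theorem.

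For the lower bound, $I_\texttt{n} \geq \max_j Y_j$ almost surely since $Y_j > 0$, so by the void probability of a PPP,
\begin{equation*}
F^{\textsf{c}}_{I_\texttt{n}}(x) \geq \mathbb{P}[\max_j Y_j > x] = 1 - \mathbb{P}[\Psi \cap (x,\infty) = \emptyset] = 1 - e^{-A(x)},
\end{equation*}
which is the left inequality of the theorem.

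For the upper bound, decompose $\mathbb{P}[I_\texttt{n} > x] = \mathbb{P}[\max_j Y_j > x] + \mathbb{P}[I_\texttt{n} > x,\, \max_j Y_j \leq x]$. On the event $\{\max_j Y_j \leq x\}$, $I_\texttt{n}$ equals the truncated sum $I_\texttt{n}^{\leq} \defn \sum_j Y_j \mathbf{1}\{Y_j \leq x\}$, and because the restrictions of $\Psi$ to the disjoint Borel sets $(0, x]$ and $(x, \infty)$ are independent PPPs, the joint event factors as $\mathbb{P}[I_\texttt{n}^{\leq} > x]\, e^{-A(x)}$. Combining gives
\begin{equation*}
\mathbb{P}[I_\texttt{n} > x] = 1 - e^{-A(x)}\bigl(1 - \mathbb{P}[I_\texttt{n}^{\leq} > x]\bigr),
\end{equation*}
so it suffices to show $\mathbb{P}[I_\texttt{n}^{\leq} > x] \leq (\alpha-1)A(x)/[(\alpha-1) - A(x)]^2$, with the $(\cdot)^+$ in the theorem absorbing the regime in which this bound already exceeds one.

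The main obstacle is this last tail estimate. I would apply a one-sided Chebyshev-type inequality after computing $\mathbb{E}[I_\texttt{n}^{\leq}] = \int_0^x y (-A'(y))\, \dif y$ and $\mathrm{Var}(I_\texttt{n}^{\leq}) = \int_0^x y^2 (-A'(y))\, \dif y$ via Campbell's theorem on the restricted process $\Psi \cap (0, x]$. Integration by parts together with the $y^{-2/\alpha}$ scaling of $A$ (homogeneous case) yields $\mathrm{Var}(I_\texttt{n}^{\leq}) = x^2 A(x)/(\alpha-1)$, producing the characteristic $1/(\alpha-1)$ factor, and analogous monotonicity estimates on the defining integral of $A$ should control the nonhomogeneous case. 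The delicate point is choosing the centering so that the resulting ratio collapses cleanly to $(\alpha-1)A(x)/[(\alpha-1)-A(x)]^2$: a naive centering at the exact mean $2xA(x)/(\alpha-2)$ introduces a mismatched $(\alpha-2)$ factor in the denominator, so the argument likely requires either a refined moment estimate in which the mean is replaced by a suitable upper bound of the form $xA(x)/(\alpha-1)$, or a one-sided inequality specifically tailored to exploit the positivity of $I_\texttt{n}^{\leq}$.
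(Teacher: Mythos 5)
Your lower bound and your reduction of the upper bound to the single tail estimate $\mathbb{P}[I_{\texttt{n}}^{\leq}>x]\leq\frac{(\alpha-1)A(x)}{[(\alpha-1)-A(x)]^2}$ are correct and coincide with the paper's route: the paper likewise isolates the dominant set $\hat{\Pi}_{\texttt{n}}(x)=\{X_j:\tilde{H}_j|X_j|^{-\alpha}\geq x\}$, obtains $1-e^{-A(x)}$ from its void probability, and arrives at $F^{\textsf{c}}_{I_{\texttt{n}}}(x)\leq 1-\bigl(1-F^{\textsf{c}}_{\tilde{I}_{\texttt{n}}}(x)\bigr)e^{-A(x)}$ by independence of the restrictions of the process to the dominant and non-dominant regions (your version of this step is an exact identity rather than the paper's union bound, a minor improvement). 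The genuine gap is precisely the one you flag and leave open at the end: you never establish the tail bound on the truncated sum. The paper closes it by asserting $\mathrm{Var}(\tilde{I}_{\texttt{n}})\leq \frac{x^2 A(x)}{\alpha-1}$ and $\mathbb{E}[\tilde{I}_{\texttt{n}}]\leq \frac{x A(x)}{\alpha-1}$ and then applying Chebyshev in the form $F^{\textsf{c}}_{\tilde{I}_{\texttt{n}}}(x)\leq \mathrm{Var}(\tilde{I}_{\texttt{n}})/(x-\mathbb{E}[\tilde{I}_{\texttt{n}}])^2$. Your variance computation agrees with the paper's (it holds with equality in the homogeneous case), so the numerator is fine; what your argument is missing is the mean bound that produces the denominator $[(\alpha-1)-A(x)]^2$.

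Your hesitation about the centering is, moreover, well founded rather than a mere failure to find the intended trick. As you compute, the exact mean of the truncated sum in the homogeneous case is $\frac{2xA(x)}{\alpha-2}$, and $\frac{2}{\alpha-2}>\frac{1}{\alpha-1}$ for every $\alpha>2$, so the quantity $\frac{xA(x)}{\alpha-1}$ used by the paper is \emph{smaller} than the true mean; substituting an under-estimate of the mean into $(x-\mathbb{E}[\tilde{I}_{\texttt{n}}])^2$ enlarges the denominator and does not preserve the Chebyshev inequality. A completion along these lines with the correct centering yields $\frac{(\alpha-2)^2 A(x)}{(\alpha-1)[(\alpha-2)-2A(x)]^2}$, which matches the stated constant only to leading order in $A(x)$. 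So your proposal proves the lower bound and the structural identity, but not the stated upper bound; closing it requires either the refined one-sided moment argument you gesture at, or accepting a constant of the latter form.
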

\begin{proof}
The proof is given in Appendix \ref{App:BoundsCCDFInte} and it is similar to the technique used in \cite{SWJGANJ07}.
\end{proof}
\begin{remark}
If $\lambda_{\texttt{n}}$ approaches zero (infinity), then $A(x)$ approaches zero (infinity) so that the gap between the upper and lower bounds in \eqref{Eqn:BoundsInterfCCDF} becomes tight since it scales with $A(x)e^{-A(x)}$ $\left(e^{-A(x)}/A(x)\right)$. That means, using the upper or lower bound in \eqref{Eqn:BoundsInterfCCDF} to evaluate $F^{\textsf{c}}_{I_{\texttt{n}}}(x)$ in a sparse (dense) network is sufficient\footnote{See Definition \ref{Def:SpatialDenseness} in Section \ref{SubSec:Defns} for the spatial sparseness and denseness of a Poisson-distributed network.}.
\end{remark}

\subsection{Definitions}\label{SubSec:Defns}
Our main objective in this paper is to study when is a good transmission opportunity for each transmitter so that the outage probability can be suppressed. Suppressing outage probability is not only to maintain reliable transmission but also to increase the throughout of the network. The throughput metric used in this paper is called \emph{transmission capacity}, as defined in the following.
\begin{definition}[\textbf{Transmission Capacity}]\label{Def:TransCapa}
Transmission capacity (TC) $c_{\epsilon}$ originally proposed in \cite{SWXYJGAGDV05} gives units of area spectral efficiency and has an outage probability constraint $\epsilon\in(0,1)$. It is defined as
\begin{equation}
c_{\epsilon} \defn b\, \bar{\lambda}_{\epsilon} (1-\epsilon),
\end{equation}
where $b$ denotes the supportable transmission rate for every link (e.g. $\log_2(1+\beta)$) and $\bar{\lambda}_{\epsilon}\defn\sup_{\lambda}\{\lambda>0: \mathbb{P}[\mathrm{SIR}(\lambda)<\beta]\leq \epsilon\}$ is called maximum contention density.
\end{definition}
In other words, TC characterizes how many successful transmissions of rate $b$ per unit area can coexist. The following definition of network sparseness and denseness is helpful for characterizing the asymptotic behaviors of transmission capacity for the three DOS techniques.
\begin{definition}[\textbf{Spatial Sparseness and Denseness of a Network}]\label{Def:SpatialDenseness}
Let $\Pi_{\textsf{s}}$ be the PPP for transmission scheme $\textsf{s}$ and the transmission coverage of a transmitter in $\Pi_{\textsf{s}}$ be the circular area with radius as its transmission distance $D_0$. Suppose $\lambda_{\textsf{s}}$ is the density of $\Pi_{\textsf{s}}$ and $\pi\lambda_{\textsf{s}}\mathbb{E}[D_0^2]$ is the average number of nodes in the transmission coverage of the transmitter. The network is called ``dense'' if $\lambda_{\textsf{s}}\pi\mathbb{E}[D_0^2]$ is sufficiently large (i.e. $\lambda_{\textsf{s}}\pi\mathbb{E}[D_0^2]\gg 1$). On the contrary, the network is call ``sparse'', then it means $\lambda_{\textsf{s}}\pi\mathbb{E}[D_0^2]$ is sufficiently small (i.e. $\lambda_{\textsf{s}}\pi\mathbb{E}[D_0^2]\ll 1)$.
\end{definition}

\section{Distributed Channel-Aware Scheduling (DCAS)}\label{Sec:DisCAT}
DCAS uses the channel state information between a transmitter and its intended receiver to schedule transmissions. In this section, we
obtain bounds on the outage probability and the transmission capacity of this technique, and some observations and numerical results are provided as well.

\subsection{Transmission Capacity Achieved by DCAS}
If a transmitter knows the channel state information to its receiver, it can avoid transmitting when the channel is in a deep fade. Prohibiting the transmissions with a bad channel reduces interference and hence the outage probability. A fixed threshold-based scheduling is proposed in \cite{SWJGANJ07} and was shown to improve the transmission capacity. Using a fixed threshold to decide when to transmit has a drawback -- it only captures the fading condition of a channel and fails to capture transmitting activities of interfering transmitters in the network. So a better approach is to change  the threshold adaptively  with the density of transmitters so as to capture the interference level in the network.

Since a transmitter $X_j \in \Pi_t$  can schedule a transmission if $H_j D_j^{-\alpha}\geq \Delta_c(\lambda_c)$ and the transmission decision of every transmitter is independent of other transmitters, it follows that the final transmission set
\begin{equation}
\Pi_c=\{X_j\in\Pi_t: T_j=\mathds{1}(H_jD^{-\alpha}_j\geq\Delta_c(\lambda_c)),\forall j\in\mathbb{N}_+\},
\end{equation}
is again a homogeneous PPP with density $\lambda_c$ that is given by
\begin{equation}
\lambda_c  = \lambda_t \mathbb{P}[H_0D_0^{-\alpha}\geq \Delta_c(\lambda_c)]=\lambda_t\, p_c(\lambda_c),
\end{equation}
where $p_c(\lambda_c)$ is the transmission probability for a transmitter using the DCAS scheme. Hence, given the distribution of $D_0$ and the function $\Delta_c(\lambda_c)$, the final density $\lambda_c$ can  numerically be obtained. For example, when all link distances are a constant $d$, i.e. $D_j=d$, it follows from the exponential distribution of $H_0$ that
\begin{equation*}
\lambda_c =\lambda_t \exp(-d^\alpha \Delta_c(\lambda_c)).
\end{equation*}

The following theorem characterizes the outage probability of DCAS.
\begin{theorem}\label{Thm:BoundsTCwDCAS}
Let $\Delta_c(x)$ be a nondecreasing function of $x\in\mathbb{R}_+$.  The upper bound $\overline{q}(\lambda_c)$ and lower bound $\underline{q}(\lambda_c)$ on the outage probability $q_k(\lambda_c)$ in the DCAS technique are
\sublabon{equation}
\begin{eqnarray}
\underline{q}(\lambda_c) &=& \left(1-e^{-A_c}\right)\left(1-B_c\right),\label{Eqn:LowBoundOutProbDCAS}\\
\overline{q}(\lambda_c) &=& \left[1-\left(1-\frac{(\alpha-1)A_c}{[(\alpha-1)-A_c]^2}\right)^+ e^{-A_c}\right]\left(1-B_c\right),\label{Eqn:UppBoundOutProbDCAS}
\end{eqnarray}
\sublaboff{equation}
where $A_c=\pi\Gamma\left(1+\frac{2}{\alpha}\right)\lambda_c\beta^{\frac{2}{\alpha}} [\Delta_c(\lambda_c)]^{-\frac{2}{\alpha}}$, $B_c=\mathbb{E}[\exp(-\lambda_c\beta^{\frac{2}{\alpha}}\psi D_0^2)]$, and $\psi=\pi\Gamma\left(1+\frac{2}{\alpha}\right)\Gamma\left(1-\frac{2}{\alpha}\right)$.
\end{theorem}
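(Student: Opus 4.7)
Every transmitter makes its DCAS decision using only its own $(H_j,D_j)$, so independent Bernoulli thinning yields $\Pi_c$ as a homogeneous PPP of intensity $\lambda_c=\lambda_t p_c(\lambda_c)$, and by Slivnyak's theorem the interference $I_0$ at the reference receiver is generated by a homogeneous PPP of intensity $\lambda_c$ independent of the reference pair's $(H_0,D_0)$. Under Palm the reference transmitter is active, so $H_0 D_0^{-\alpha}\ge \Delta_c(\lambda_c)$; this and the outage event $\{H_0 D_0^{-\alpha}<\beta I_0\}$ together force $\beta I_0>\Delta_c$, so outage is confined to $\{I_0>\Delta_c/\beta\}$. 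Integrating out the exponential $H_0$ subject to the threshold gives the exact identity
\[
p_c(\lambda_c)\,q(\lambda_c) \;=\; \mathbb{E}_{D_0,I_0}\!\Bigl[\bigl(e^{-\Delta_c D_0^{\alpha}}-e^{-\beta I_0 D_0^{\alpha}}\bigr)\mathds{1}(I_0>\Delta_c/\beta)\Bigr],
\]
which is the object I would proceed to bound on both sides.

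\textbf{Factorization.} The plan is to factor the right-hand side into a product of an indicator probability $F^{\textsf{c}}_{I_0}(\Delta_c/\beta)$, controlled by Theorem 1 with $\lambda_{\texttt{n}}=\lambda_c$ and $x=\Delta_c/\beta$ (so $A(x)=A_c$), and a ``Laplace-transform factor'' that evaluates to $p_c(\lambda_c)(1-B_c)$ via the explicit homogeneous-PPP Laplace transform $\mathcal{L}_{I_0}(\beta d^{\alpha})=\exp(-\lambda_c\psi\beta^{2/\alpha}d^2)$. Writing $e^{-\Delta_c d^{\alpha}}-e^{-\beta y d^{\alpha}}=e^{-\Delta_c d^{\alpha}}\bigl(1-e^{-(\beta y-\Delta_c)d^{\alpha}}\bigr)$ and using the pointwise estimate $1-e^{-(\beta y-\Delta_c)d^{\alpha}}\le 1-e^{-\beta y d^{\alpha}}$ for the upper direction (with the complementary bound for the lower direction), the independence $D_0\perp I_0$ lets the $D_0$-integral collapse through $\mathbb{E}_{D_0}[e^{-\Delta_c D_0^{\alpha}}\mathcal{L}_{I_0}(\beta D_0^{\alpha})]$ which, together with $\mathbb{E}_{D_0}[e^{-\Delta_c D_0^{\alpha}}]=p_c(\lambda_c)$, yields the factor $p_c(\lambda_c)(1-B_c)$ after the $p_c$ on the left cancels, while the residual indicator contributes $F^{\textsf{c}}_{I_0}(\Delta_c/\beta)$. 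Applying Theorem 1's two-sided bracket to this CCDF then produces $(1-e^{-A_c})(1-B_c)$ below and the stated $\overline{q}(\lambda_c)$ above.

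\textbf{Main obstacle.} The delicate step is handling the coupling between $\mathds{1}(I_0>\Delta_c/\beta)$ and the Laplace kernel $e^{-\beta I_0 D_0^{\alpha}}$: both are functions of the same $I_0$, the indicator is nondecreasing while the kernel is nonincreasing, so a direct application of Chebyshev's correlation (FKG) inequality would give a bound in the wrong direction on one of the two sides. The remedy I would use is to resolve the $I_0$-dependence at the Laplace-transform level first, letting the full expectation $\mathbb{E}_{I_0}[e^{-\beta I_0 d^{\alpha}}]=\mathcal{L}_{I_0}(\beta d^{\alpha})$ decouple cleanly from the indicator, and only then bound the surviving indicator expectation via the PPP-specific CCDF estimates of Theorem 1. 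This ordering, analogous to the technique in \cite{SWJGANJ07} for fixed-threshold scheduling, lets the Theorem 1 brackets enter the final answer as pure multiplicative factors of the Aloha-like quantity $1-B_c$ without generating any cross correlation terms.
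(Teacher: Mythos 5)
Your proposal follows essentially the same route as the paper's proof in Appendix~\ref{App:BoundsTCwDCAS}: write the conditional outage probability over the event $\{H_0D_0^{-\alpha}\geq\Delta_c(\lambda_c)\}$, factor it into $\mathbb{P}[I_0>\Delta_c(\lambda_c)/\beta]$ times the Rayleigh-fading Laplace-transform term $1-B_c$, and bracket the first factor with Theorem~\ref{Thm:BoundsCCDFinter}. The ``main obstacle'' you flag --- decoupling the indicator $\mathds{1}(\beta I_0>\Delta_c(\lambda_c))$ from the kernel $e^{-\beta I_0 D_0^{\alpha}}$, both functions of the same $I_0$ --- is indeed the crux, and the paper resolves it exactly as you propose, by asserting the factorization $q(\lambda_c)=\mathbb{P}[\beta I_0>\Delta_c(\lambda_c)]\,\mathbb{P}[H_0D_0^{-\alpha}<\beta I_0]$ as an identity and then citing \cite{FBBBPM06} for the second factor, so your reconstruction matches the published argument, including that step.
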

\begin{proof}
See Appendix \ref{App:BoundsTCwDCAS}.
\end{proof}

\begin{remark}
Note that the threshold $\Delta_c(\lambda_c)$ should depend on the final density of transmitters $\lambda_c$ and not $\lambda_t$. This is because the final interference scales with $\lambda_c$ rather than $\lambda_t$. Denote the upper and lower bounds on the maximum contention density achieved by DCAS by $\underline{\lambda}_c$ and  $\overline{\lambda}_c$ respectively. More precisely, $\overline{\lambda}_c=\sup\{\lambda_c: \underline{q}(\lambda_c) \leq \epsilon\}$ and $\underline{\lambda}_c=\sup\{\lambda_c: \overline{q}(\lambda_c) \leq \epsilon\}$ and these two bounds can be obtained numerically using the bounds on the outage probability.
\end{remark}
\begin{remark}
Bounds on the outage probability both reduce to $1-B_c$ if no DCAS is used (i.e. $\Delta_c=0$). In this case, we can have the exact result of TC once the distribution of transmission distance $D_0$ is specified. For example, if $D_0=d$ is a constant and $\Delta_c=0$, then $\underline{q}(\lambda_c)=\overline{q}(\lambda_c)=1-\exp(-d^2 \beta^{\frac{2}{\alpha}}\psi \lambda_c)$ and thus $\bar{\lambda}_{\epsilon} = \frac{-b\ln(1-\epsilon)}{ d^2\beta^{\frac{2}{\alpha}} \psi}=\frac{b\epsilon}{d^2\beta^{\frac{2}{\alpha}} \psi}+O(\epsilon^2)$ same as \cite{SWXYJGAGDV05}.
\end{remark}

The upper bound $\underline{q}(\lambda_c)$ in \eqref{Eqn:LowBoundOutProbDCAS} and lower bound $\overline{q}(\lambda_c)$ in \eqref{Eqn:UppBoundOutProbDCAS} consist of two probabilities. The first term in the lower bound \eqref{Eqn:LowBoundOutProbDCAS} is obtained as a lower bound to the probability $\mathbb{P}[I_0\geq \Delta_c(\lambda_c)/\beta]$, i.e.,
\begin{equation}\mathbb{P}\left[I_0\geq \frac{\Delta_c(\lambda_c)}{\beta}\right] \geq 1- e^{-A_c}= 1-\exp\left(- \lambda_c \pi\left(\sqrt{\frac{A_c}{\lambda_c \pi}}\right)^2 \right).
\label{eq: heu_1}
\end{equation}
Since $\Pi_c$ is a homogeneous PPP, $1-\exp\left(- \lambda_c \pi\left(\sqrt{\frac{A_c}{\lambda_c \pi}}\right)^2 \right)$ equals the probability that a disc of radius $\sqrt{\frac{A_c}{\lambda_c \pi}}$ is nonempty. So $A_c/{\lambda_c}$ can be viewed as the area in  which any single interferer can generate the interference greater than or equal to $\Delta_c(\lambda_c)/\beta$. Setting $\Delta_c=0$, we observe that the term $1-B_c$ equals the outage probability in a homogeneous PPP network without DCAS. When the transmission distance $D_0$ is a constant, $-\ln(B_c)/\lambda_c$   can also be viewed as the area in which any single interferer is able to cause outage at the (reference) receiver.

We now provide heuristics for choosing an appropriate threshold function for $\Delta_c(\lambda_c)$:
\begin{enumerate}
   \item  It is clear that the interference increases with $\lambda_c$.  When the interference is high, it is preferable to only schedule transmitters that have a ``good'' channel quality compared to the interference. Hence the threshold $\Delta_c(\lambda_c)$ should increase with $\lambda_c$.
    \item Using the conservation property of a homogeneous PPP (see Proposition \ref{Pro:ConProPPP} in Appendix \ref{App:ConserPropHomoPPP}), it follows that
\begin{equation}\label{Eqn:Interference2}
I_0 = \sum_{X_j\in\Pi_c\setminus X_0} \tilde{H}_j|X_j|^{-\alpha} = \lambda^{\frac{2}{\alpha}}_c\sum_{X_k\in\Pi'_c\setminus X_0} \tilde{H}_k|X_k|^{-\alpha},
\end{equation}
where $\Pi'_c$ is a homogeneous PPP with unit density. Note that $I_0$ depends on $\lambda_c$ here because $\Pi_c$ is the PPP of active transmitters.  So we have
\[\mathbb{P}\left[I_0\geq \frac{\Delta_c(\lambda_c)}{\beta}\right] = \mathbb{P}\left[\lambda^{\frac{2}{\alpha}}_c\sum_{X_k\in\Pi'_c\setminus X_0} \tilde{H}_k|X_k|^{-\alpha}\geq \frac{\Delta_c(\lambda_c)}{\beta}\right]. \]
Hence, in order to capture a reasonable level of interference $I_0$, it follows that $\Delta_c(\lambda_c)$ should be designed to scale as $\lambda^{\frac{2}{\alpha}}_c$.
\end{enumerate}
The above arguments both suggest that $\Delta_c(\lambda_c)\in\Theta(\lambda_c^{\gamma})$ with $\gamma\in\mathbb{R}_+$, i.e. $\Delta_c(\lambda_c)$ is a nondecreasing function of $\lambda_c$.

% \footnote{Throughout this paper, we use the asymptotic tightness notation $f(n)\in\Theta(g(n))$ to also represent the case that $k_1 |g(n)|\leq |f(n)| \leq k_2 |g(n)|$ holds for any $n>0$ where $k_1$ and $k_2$ are positive constants. Namely, $n$ is not necessary to be large for holding the inequality.}

\subsection{Observations and Numerical Results}
There are several interesting observations that can be perceived from the results in Theorem \ref{Thm:BoundsTCwDCAS}. We specify them in the following, respectively.

\textbf{Bounds on outage probability}: As mentioned earlier, $1-B_c$ represents the outage probability without DCAS. From Theorem  \ref{Thm:BoundsTCwDCAS} and since  $\left[1-\left(1-\frac{(\alpha-1)A_c}{[(\alpha-1)-A_c]^2}\right)^+ e^{-A_c}\right] <1$, it follows that the outage probability is lower with DCAS than without it. If the threshold $\Delta_c(\lambda_c)$ is chosen to be a constant (instead of a function of $\lambda_c$), i.e. $\Delta_c(\lambda_c)\equiv \rho$,  then there does not exist a non-trivial optimal value of $\rho$ that maximizes the TC. This is because $\underline{q}(\lambda_c)$ and $\overline{q}(\lambda_c)$ both approach $0$ when $\rho$ goes to infinity and  to  $1-B_c$ as $\rho$ goes to zero. However, there does exist an optimal transmission threshold for maximizing TC if some power control methods are applied, such as channel inversion power control\cite{SWJGANJ07}.

\textbf{Spatial reuse with adaptive threshold $\Delta_c(\lambda_c)$}: As we have mentioned, opportunistic Aloha and threshold scheduling do not take into account the transmitting activities of transmitters. They fail to capture an interference effect such that transmitters whose receivers have a satisfactory SIR may not be allowed to transmit. We can use a spatial reuse point of view to explain why using an adaptive threshold is better. Using the similar definition of the spatial reuse factor in \cite{FBBBPM06}, the spatial reuse factor under the DCAS technique is defined as the average transmission distance divided by the average distance from a receiver to its nearest unintended transmitter when the maximum contention density is achieved, i.e. $2\mathbb{E}[D_0]\sqrt{\bar{\lambda}_{\epsilon}}$. \emph{A better transmission scheduling scheme can have a larger spatial reuse factor}. Now consider $\Delta_c(\lambda_c)=\rho\lambda^{\gamma}_c$ where $\rho>0$ and the network is sparse. In this case, the spatial reuse factor can be approximated by $\frac{2\sqrt{\epsilon}\,\mathbb{E}[D_0]}{\sqrt{[1-e^{-A_c(\bar{\lambda}_{\epsilon})}]\beta^{\frac{2}{\alpha}}\psi\mathbb{E}[D^2_0]}}$. If $\Delta_c$ is a constant, $1-e^{-A_c(\bar{\lambda}_{\epsilon})}$ increases as $\bar{\lambda}_{\epsilon}$ increases. That means the spatial reuse factor decreases with density $\lambda_c$ for opportunistic Aloha/threshold scheduling. So if $\gamma$ is chosen properly so that $1-e^{-A_c}$ does not depend on $\lambda_c$ or decreases with $\lambda_c$, then DCAS has better spatial reuse than opportunistic Aloha/threshold scheduling. A better spatial reuse means a larger TC achieved. A simulation example of TC for the DCAS scheme with different thresholds is shown in Fig. \ref{Fig:TCwDCAS}. As you can see, the DCAS technique with a properly designed $\Delta_c(\lambda_c)$ can significantly outperform the threshold scheduling techniques and no DCAS. So using a well-designed adaptive threshold really can achieve a (much) higher TC.

\textbf{Asymptotic tightness of bounds on TC}: We can conclude the following asymptotic results for the bounds on TC in a sparse and dense network.
\begin{lemma}\label{Lem:AsyTighBoundsDCAS}
Suppose $\Delta_c(\lambda_c)\in\Theta(\lambda^{\gamma}_c)$ where $\gamma\in\mathbb{R}_+$. If the network is sparse, then
\begin{equation}\label{Eqn:AsyBoundTCwDCAS1}
\lim_{\lambda_t\rightarrow 0}\frac{\overline{\lambda}_c}{\underline{\lambda}_c} = \begin{cases}
\left(\frac{\alpha}{\alpha-1}\right)^{\frac{1}{2-2\gamma/\alpha}},&\quad $\text{for }$ \gamma\in(0,\frac{\alpha}{2})\\
1,&\quad $\text{otherwise }$
\end{cases}.
\end{equation}
If the network is dense, a different limit occurs, which is
\begin{equation}\label{Eqn:AsyBoundTCwDCAS2}
\lim_{\lambda_t\rightarrow \infty}\frac{\overline{\lambda}_c}{\underline{\lambda}_c} = \begin{cases}
1,\,\,&$\text{otherwise }$ \\
\left(\frac{\alpha}{\alpha-1}\right)^{\frac{\alpha}{2\gamma-\alpha}},&$\text{for }$ \gamma\in[\frac{\alpha}{2},\infty)
\end{cases}.
\end{equation}
\end{lemma}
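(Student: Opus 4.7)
The plan is to reduce to a scaling analysis in $\lambda_c$ by taking $\Delta_c(\lambda_c) = \rho \lambda_c^{\gamma}$ as a representative (any $\Theta(\lambda_c^{\gamma})$ function yields the same leading-order asymptotics after the multiplicative constant is absorbed). Substituting into Theorem \ref{Thm:BoundsTCwDCAS} gives $A_c = K \lambda_c^{1 - 2\gamma/\alpha}$, where $K \defn \pi\Gamma(1 + 2/\alpha)\beta^{2/\alpha}\rho^{-2/\alpha}$, while $B_c(\lambda_c) = \mathbb{E}[\exp(-\lambda_c\beta^{2/\alpha}\psi D_0^2)]$ satisfies $1 - B_c(\lambda_c) \sim \beta^{2/\alpha}\psi\,\mathbb{E}[D_0^2]\,\lambda_c$ as $\lambda_c \to 0$ and $1 - B_c(\lambda_c) \to 1$ as $\lambda_c \to \infty$. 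The sign of $1 - 2\gamma/\alpha$ then determines whether $A_c$ tends to $0$ or to $\infty$ in each of the two limits: $A_c \to 0$ in the sparse limit iff $\gamma < \alpha/2$, and in the dense limit iff $\gamma > \alpha/2$.

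The key analytic step is a Taylor expansion of both bounds when $A_c \to 0$:
\begin{equation*}
1 - e^{-A_c} = A_c + O(A_c^2), \qquad 1 - \left(1 - \frac{(\alpha-1)A_c}{((\alpha-1)-A_c)^2}\right)^{+} e^{-A_c} = \frac{\alpha}{\alpha-1}A_c + O(A_c^2).
\end{equation*}
Thus whenever $A_c \to 0$, both outage bounds are proportional to $A_c\,(1 - B_c)$ to leading order but with a constant multiplicative gap of exactly $\alpha/(\alpha-1)$. Complementarily, when $A_c \to \infty$, both bracketed factors tend to $1$ with residual gaps that are exponentially small in $A_c$, so the two bounds coincide to leading order and the ratio collapses to $1$.

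The ratio is then extracted by solving the implicit equations $\underline{q}(\overline{\lambda}_c) = \epsilon$ and $\overline{q}(\underline{\lambda}_c) = \epsilon$ under the appropriate expansion. In the sparse regime with $\gamma \in (0, \alpha/2)$, both bounds collapse to constants times $\lambda_c^{2 - 2\gamma/\alpha}$ (coming from $A_c \cdot (1-B_c) \asymp K\,\beta^{2/\alpha}\psi\,\mathbb{E}[D_0^2]\,\lambda_c^{2 - 2\gamma/\alpha}$); inverting this power law turns the multiplicative gap $\alpha/(\alpha-1)$ in the outage into the gap $(\alpha/(\alpha-1))^{1/(2 - 2\gamma/\alpha)}$ in density, which is the claimed formula. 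For $\gamma \geq \alpha/2$ in the sparse regime, $A_c$ stays bounded away from $0$, so the exponentially small difference between the bounds dominates and the ratio tends to $1$. The dense regime is handled symmetrically: $1 - B_c \to 1$ there, and for $\gamma > \alpha/2$ the two bounds reduce to $A_c$ and $\frac{\alpha}{\alpha-1}A_c$ respectively. Inverting $K\lambda_c^{1 - 2\gamma/\alpha}$ with its (now negative) exponent replaces $1/(2 - 2\gamma/\alpha)$ by $\alpha/(2\gamma - \alpha)$, giving the dense ratio, while for $\gamma \leq \alpha/2$ the dense regime forces $A_c \to \infty$ and the bounds again collapse to ratio $1$.

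The main obstacle is justifying that the implicit equations for $\overline{\lambda}_c$ and $\underline{\lambda}_c$ possess unique solutions lying in the asymptotic window where the above Taylor expansions are valid, since the product structure $A_c(\lambda_c)(1 - B_c(\lambda_c))$ need not be monotone in $\lambda_c$ when $1 - 2\gamma/\alpha$ and the $B_c$-factor push in opposite directions. This calls for a monotonicity argument for $\underline{q}$ and $\overline{q}$ restricted to the appropriate branch, together with verifying that $A_c(\lambda_c)$ stays on the smooth side of the non-smooth clamp $(\cdot)^{+}$ in the upper bound (which transitions at $A_c = \alpha - 1$). A secondary subtlety is that the hypothesis $\Delta_c \in \Theta(\lambda_c^{\gamma})$ allows bounded multiplicative fluctuations around the pure power law; showing these are absorbed into $K$ without disturbing the limiting ratio requires a standard sandwich argument applied separately to $\underline{q}$ and $\overline{q}$.
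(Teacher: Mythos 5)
Your proposal follows essentially the same route as the paper's own proof: expand $1-e^{-A_c}$ and the upper-bound bracket to first order in $A_c$ to extract the constant gap $\tfrac{\alpha}{\alpha-1}$, combine this with the linearization $1-B_c\sim\beta^{\frac{2}{\alpha}}\psi\mathbb{E}[D_0^2]\lambda_c$ (sparse) or $1-B_c\to 1$ (dense), and invert the resulting power law in $\lambda_c$ to convert the multiplicative gap in outage probability into the claimed ratio of densities. The paper carries out exactly this computation for the sparse case and declares the remaining cases ``similar,'' so your argument matches it step for step; your closing remarks on monotonicity, the $(\cdot)^{+}$ clamp, and the $\Theta(\lambda_c^{\gamma})$ sandwich identify genuine gaps that the paper itself leaves unaddressed rather than defects of your own argument.
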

\begin{proof}
Here we only prove the case of a sparse network since the proofs for other cases are similar. First, we have to notice that ``$\lambda_t\rightarrow 0$'' and ``$\lambda_t\rightarrow \infty$'' respectively correspond to ``$\lambda_c\rightarrow 0$'' (a sparse network) and ``$\lambda_c\rightarrow \infty$'' (a dense network) because $\lambda_c=\lambda_t\mathbb{P}[H_0D^{-\alpha}_0\geq \Delta_c(\lambda_c)]$ and $\Delta_c(\lambda_c)$ is a nondecreasing function. When $\lambda_c \to 0$, observe that the network is sparse since  $\pi\mathbb{E}[D_0^2]\lambda_c\ll 1$. So for $\gamma\leq\frac{\alpha}{2}$ and $\lambda_c\rightarrow 0$, we have $\Delta_c(\lambda_c)\rightarrow 0$ and thus $\underline{q}(\lambda_c)=\pi\beta^{\frac{4}{\alpha}}\Gamma(1+\frac{2}{\alpha})\psi\mathbb{E}[D_0^2]\lambda^{2-\frac{2\gamma}{\alpha}}_c+O\left(\lambda^{3-\frac{2\gamma}{\alpha}}_c\right)$
, and thus $\overline{\lambda}_c=\Theta\left(\epsilon^{\frac{1}{2-2\gamma/\alpha}}\right)$. Similarly, we can show that $\overline{q}(\lambda_c)=\frac{\alpha\pi\beta^{\frac{4}{\alpha}}}{\alpha-1}\Gamma(1+\frac{2}{\alpha})\psi\mathbb{E}[D_0^2]\lambda^{2-\frac{2\gamma}{\alpha}}_c
+O\left(\lambda^{3-\frac{2\gamma}{\alpha}}_c\right)$ and $\underline{\lambda}_c=\Theta\left((\frac{\alpha-1}{\alpha}\epsilon)^{\frac{1}{2-2\gamma/\alpha}}\right)$. Therefore, it follows that $\lim_{\lambda_t\rightarrow 0}\frac{\overline{\lambda}_c}{\underline{\lambda}_c} = (\frac{\alpha}{\alpha-1})^{\frac{1}{2-2\gamma/\alpha}}$. When $\gamma>\frac{\alpha}{2}$ and $\lambda_c\rightarrow0$, $\underline{q}(\lambda_c)$ and $\overline{q}(\lambda_c)$ both can be simplified as $\pi\beta^{\frac{2}{\alpha}}\psi\mathbb{E}[D_0^2]\lambda_c+O(\lambda^2_c)$ so that $\frac{\overline{\lambda}_c}{\underline{\lambda}_c}\rightarrow 1$ as $\lambda_c\rightarrow 0$, which completes the proof.
\end{proof}

Lemma \ref{Lem:AsyTighBoundsDCAS} indicates that the ratio of $\overline{\lambda}_c$ to $\underline{\lambda}_c$ depends on path loss exponent $\alpha$ in a sparse (dense) network if $\gamma<\frac{\alpha}{2}$ ($\gamma\geq\frac{\alpha}{2}$). For large $\alpha$, $\frac{\alpha}{\alpha -1} \approx 1$ and hence the bounds on the TC are asymptotically tight in all regimes. However, if $\gamma\geq \frac{\alpha}{2}$ ($\gamma<\frac{\alpha}{2}$), bounds on TC are very tight  for all $\alpha$ in the sparse (dense) network. That is better than the case of using fixed threshold because we have the following results if threshold $\Delta_c$ is a constant (i.e. $\gamma=0$):
\begin{eqnarray}
\lim_{\lambda_t\rightarrow 0}\frac{\overline{\lambda}_c}{\underline{\lambda}_c} = \sqrt{\frac{\alpha}{\alpha-1}}\,\text{ and } \lim_{\lambda_t\rightarrow \infty}\frac{\overline{\lambda}_c}{\underline{\lambda}_c}=1 .
\end{eqnarray}
In summary, setting $\Delta_c(\lambda_c)$ in Lemma \ref{Lem:AsyTighBoundsDCAS} allows the TC to be accurately approximated by its upper or lower bound no matter if $\alpha$ is large or not.

\section{Distributed Interferer-Aware Scheduling (DIAS)}\label{Sec:DisIAT}

The previous distributed strategy (DCAS) depended only on the channel between the transmitter and the receiver and the average density of the network. DCAS is essentially a \emph{selfish} opportunistic scheduling technique since a node's decision to transmit does not consider its effect on \emph{unintended} receivers. Naturally, if a transmitter has knowledge about other receivers, it can use this information to make a better scheduling decision. In the distributed interferer-aware scheduling (DIAS) technique, a transmitter schedules a transmission only when it is below an acceptable interference level at its unintended receivers. Its own channel is not considered, so DIAS does not directly increase its own SIR, but the network as a whole benefits from reduced interference.

\subsection{Is reducing the interference at the nearest unintended receiver sufficient?}
What information is needed to determine the transmission threshold in DIAS? Ideally, a transmitter should design its transmission threshold in view of  the interference it causes at all the receivers in the network. However, this requires prohibitively large amounts of side information. So in the proposed DIAS technique, a potential transmitter only considers its impact on the nearest unintended receiver and schedules a transmission only when the interference generated at its nearest unintended receiver is lower than a threshold. Let $\tilde{H}_*$ be the fading channel gain from a transmitter to its nearest unintended receiver and thus transmission probability for DIAS is $p_i(\lambda_i)=\mathbb{P}[\tilde{H}_*\tilde{D}^{-\alpha}_* \leq \Delta_i(\lambda_i)]$ where $\Delta_i(\lambda_i)>0$ is the transmission threshold. Since $p_i$ only depends on intensity $\lambda_i$, the transmitter set resulting from DIAS comprises of a thinning  of  of $\Pi_t$. More precisely,
\begin{equation}
\Pi_i = \{X_j\in\Pi_t: T_j=\mathds{1}(\tilde{H}_{j*} |X_{j}-Y_{j*}|^{-\alpha}\leq\Delta_i(\lambda_i)),\forall j\in\mathbb{N}_+\},
\end{equation}
where $Y_{j*}$ is the nearest unintended receiver of transmitter $X_j$ and $\tilde{H}_{j*}$ is the fading channel gain from $X_{j}$ to $Y_{j*}$. Thus, the density of $\Pi_i$ is $\lambda_i=\lambda_t p_i(\lambda_i)$. Since all active receivers also form a homogeneous PPP of density $\lambda_i$, let $\tilde{D}_*$ be the (random) distance from an active transmitter to its nearest unintended active receiver and the CDF of $\tilde{D}_*$ is
\begin{equation}\label{Eqn:CDFofNearestNodewCAT}
 F_{\tilde{D}_*}(x)= 1-\exp(-\pi\lambda_t x^2)= 1-\exp(-\pi\lambda_i x^2/ p_i).
\end{equation}
Using \eqref{Eqn:CDFofNearestNodewCAT}, $p_i(\lambda_i)$ can be explicitly expressed as
\begin{eqnarray}
p_i(\lambda_i) &=& 1-2\pi\lambda_t \int_0^{\infty} x e^{-(\Delta_i(\lambda_i) x^{\alpha}+\pi\lambda_t x^2)} \dif x \nonumber\\
&=& 1-\int^{\infty}_0 \exp\left(-\left[\frac{p_i(\lambda_i) u}{\pi \lambda_i}\right]^{\frac{2}{\alpha}}\Delta_i(\lambda_i)-u\right)\dif u.\quad (\text{using } \lambda_t=\lambda_i/p_i(\lambda_i))\label{Eqn:TranProbDIAS}
\end{eqnarray}

Due to fading, for each transmitter, its nearest unintended receiver may not be the unintended receiver that receives the largest interference. However, the interference generated by a transmitter at its nearest unintended receiver dominates those at its other unintended receivers \emph{in probability} because $\tilde{D}_*\leq |X_j-Y_k|$ almost surely for any $X_j\in\Pi_t$ and its unintended receiver $Y_k$.  Since  $\tilde{H}_{j*}$ and $\tilde{H}_{jk}$ are i.i.d., it follows that
\begin{equation*}
\mathbb{P}[\tilde{H}_{j*}\tilde{D}^{-\alpha}_*\geq \Delta_i(\lambda_i)] \geq \mathbb{P}[\tilde{H}_{jk} |X_j-Y_k|^{-\alpha}\geq\Delta_i(\lambda_i)],
\end{equation*}
which means $1-p_i(\lambda_i)\geq \mathbb{P}[\tilde{H}_{jk} |X_j-Y_k|^{-\alpha}\geq\Delta_i(\lambda_i)]$. Hence, carefully choosing $p_i(\lambda_i)$ would limit interference in the network.   In addition, the following lemma provides an analytical explanation why limiting the interference generated at the nearest unintended receiver of a transmitter is somewhat sufficient.
\begin{lemma}\label{Lem:CondProbIntf}
Let $\tilde{H}_*$ and $\tilde{D}_*$ denote the fading channel gain and distance from reference transmitter $X_0$ to its nearest unintended receiver. Suppose $Y_j$ is one of the non-nearest unintended receivers of $X_0$ and $\rho_*$ is a positive constant. Then, we have
\begin{equation}\label{Eqn:CondProbIntf}
\lim_{\rho_* \rightarrow 0}\mathbb{P}[\tilde{H}_j \tilde{D}_j^{-\alpha}\leq \rho_* |\tilde{H}_*\tilde{D}^{-\alpha}_*\leq \rho_*]=1,
\end{equation}
where $\tilde{D}_j = |X_0-Y_j|$.
\end{lemma}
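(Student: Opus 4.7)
The plan is to convert the conditional probability to a purely geometric expression and then analyze its $\rho_*\to 0$ behavior. By the definition of conditional probability, I would first write
$$\mathbb{P}\bigl[\tilde H_j\tilde D_j^{-\alpha}\le\rho_*\,\big|\,\tilde H_*\tilde D_*^{-\alpha}\le\rho_*\bigr]=\frac{\mathbb{P}[\tilde H_*\le\rho_*\tilde D_*^{\alpha},\;\tilde H_j\le\rho_*\tilde D_j^{\alpha}]}{\mathbb{P}[\tilde H_*\le\rho_*\tilde D_*^{\alpha}]}.$$
Because $\tilde H_*$ and $\tilde H_j$ are i.i.d.\ unit-mean exponentials and are independent of the receiver point process, I can condition on $(\tilde D_*,\tilde D_j)$ and integrate out the two fading gains separately; this yields the closed-form representation
$$\frac{\mathbb{E}[(1-e^{-\rho_*\tilde D_*^{\alpha}})(1-e^{-\rho_*\tilde D_j^{\alpha}})]}{\mathbb{E}[1-e^{-\rho_*\tilde D_*^{\alpha}}]},$$
reducing the problem to a statement about the joint distribution of the order statistics $(\tilde D_*,\tilde D_j)$.

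Next I would invoke the almost-sure geometric ordering $\tilde D_j\ge\tilde D_*$, which holds by the very definition of $Y_j$ as a non-nearest unintended receiver of $X_0$. Since $d\mapsto 1-e^{-\rho_* d^{\alpha}}$ is nondecreasing in $d$, this immediately gives the pointwise bound $1-e^{-\rho_*\tilde D_j^{\alpha}}\ge 1-e^{-\rho_*\tilde D_*^{\alpha}}$, which I would substitute into the numerator to produce a lower bound of the form $\mathbb{E}[(1-e^{-\rho_*\tilde D_*^{\alpha}})^{2}]/\mathbb{E}[1-e^{-\rho_*\tilde D_*^{\alpha}}]$; a matching upper bound comes from the trivial inequality $1-e^{-\rho_*\tilde D_j^{\alpha}}\le 1$. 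Sandwiching the conditional probability between these two expressions isolates its dependence on the single variable $\tilde D_*$.

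Finally I would expand $1-e^{-\rho_* d^{\alpha}}$ in powers of $\rho_*$ and pass to the limit inside the expectation by dominated convergence, which is justified because for the homogeneous PPP of active receivers the nearest-receiver distance $\tilde D_*$ has the Rayleigh-type density $2\pi\lambda_i d\,e^{-\pi\lambda_i d^{2}}$ and therefore finite polynomial moments $\mathbb{E}[\tilde D_*^{k\alpha}]$ of every order. The main obstacle I expect is controlling the leading-order behavior of the numerator: because the joint event forces two independent exponentials to simultaneously fall into small intervals whose widths scale with $\rho_*\tilde D_*^{\alpha}$ and $\rho_*\tilde D_j^{\alpha}$ respectively, a careless expansion produces a numerator of order $\rho_*^{2}$ against a denominator of order $\rho_*$, and only the ordering $\tilde D_j\ge\tilde D_*$ together with the correlation of the two order statistics ties the two expectations closely enough for the ratio to approach its claimed limit.
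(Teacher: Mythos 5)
Your first two displays are correct: since $\tilde H_*$ and $\tilde H_j$ are independent exponentials, independent of the receiver positions, conditioning on $(\tilde D_*,\tilde D_j)$ does factor the joint event, giving exactly $\mathbb{E}\bigl[(1-e^{-\rho_*\tilde D_*^{\alpha}})(1-e^{-\rho_*\tilde D_j^{\alpha}})\bigr]\big/\mathbb{E}\bigl[1-e^{-\rho_*\tilde D_*^{\alpha}}\bigr]$. The gap is in the sandwich step, and it is fatal rather than technical. Your lower bound $\mathbb{E}[(1-e^{-\rho_*\tilde D_*^{\alpha}})^{2}]/\mathbb{E}[1-e^{-\rho_*\tilde D_*^{\alpha}}]$ behaves like $\rho_*\,\mathbb{E}[\tilde D_*^{2\alpha}]/\mathbb{E}[\tilde D_*^{\alpha}]$ and tends to $0$, while your upper bound is the trivial $1$; the squeeze therefore yields nothing. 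The ``obstacle'' you flag at the end cannot be overcome along your route: expanding your exact expression gives a numerator $\rho_*^{2}\,\mathbb{E}[\tilde D_*^{\alpha}\tilde D_j^{\alpha}]+o(\rho_*^{2})$ against a denominator $\rho_*\,\mathbb{E}[\tilde D_*^{\alpha}]+o(\rho_*)$, and since nearest- and higher-order-neighbor distances of a homogeneous PPP have all moments finite, the ratio is $\Theta(\rho_*)$ and converges to $0$, not $1$. The correlation between the order statistics $\tilde D_*\le\tilde D_j$ affects only the constant $\mathbb{E}[\tilde D_*^{\alpha}\tilde D_j^{\alpha}]$, not the order in $\rho_*$: given the distances, the two fading gains are genuinely independent, so the joint small-deviation event costs two factors of $\rho_*$.

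You should also be aware that the paper's own proof reaches the opposite conclusion only by a step that your (correct) factorization bypasses. The paper writes $\mathbb{P}[\max\{\tilde H_*\tilde D_*^{-\alpha},\tilde H_j\tilde D_j^{-\alpha}\}\le\rho_*]$ as $\mathbb{P}[\tilde H_j\tilde D_j^{-\alpha}\le\rho_*]\,\mathbb{P}[\tilde H_*\tilde D_*^{-\alpha}\le\tilde H_j\tilde D_j^{-\alpha}]+\mathbb{P}[\tilde H_*\tilde D_*^{-\alpha}\le\rho_*]\,\mathbb{P}[\tilde H_*\tilde D_*^{-\alpha}\ge\tilde H_j\tilde D_j^{-\alpha}]$, i.e.\ it factors joint probabilities of events that are not independent (both involve $\tilde H_j\tilde D_j^{-\alpha}$); this is what produces a numerator of order $\rho_*$ instead of $\rho_*^{2}$. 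Your computation and the paper's therefore disagree at leading order, and since yours rests on a valid conditional-independence argument, you cannot complete a proof of \eqref{Eqn:CondProbIntf} from your starting point; the discrepancy should be treated as evidence against the factorization step (and hence the stated limit), not as a convergence issue to be patched with dominated convergence.
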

\begin{proof}
See Appendix \ref{App:CondProbIntf}.
\end{proof}
Lemma \ref{Lem:CondProbIntf} reveals that if the channel from a transmitter to its nearest unintended receiver is very weak, then the channels from the transmitter to other unintended receivers are also very likely in a very weak status. This is because all unintended receivers are location-dependent once the nearest unintended receiver is given. Hence, the nearest interference channel gain is correlated with all other interference channel gains even though fading gains are independent. This confirms that using the channel condition of the nearest unintended receiver is an effective means to judge if the interference generated by a transmitter is under a reasonable level. The threshold $\Delta_i(\lambda_i)$ has a significant impact on the performance of DIAS, and should  be appropriately designed to adaptively capture the transmission activities in the network.

\subsection{Transmission Capacity achieved by DIAS}

Bounds on the outage probability and the maximum contention density for DIAS with transmission threshold $\Delta_i(\lambda_i)$ are given in the following theorem.
\begin{theorem}\label{Thm:BoundsTCwDIAS}
Suppose each transmitter uses DIAS to determine when to transmit. That is, a transmitter transmits whenever the interference channel gain at its nearest unintended receiver is less than threshold $\Delta_i(\lambda_i)>0$ where $\Delta_i(\lambda_i)$ is a function of $\lambda_i$. Bounds on the outage probability with DIAS can be shown as
\begin{eqnarray}
\underline{q}(\lambda_i) &=& 1-\exp\left(-\lambda_i\beta^{\frac{2}{\alpha}}\psi\mathbb{E}\left[D_0^2\right]p_i(\lambda_i)\right),\label{Eqn:LowBoundOutProbIntAwaTrans}\\
\overline{q}(\lambda_i) &=& 1-\exp\left(-2\lambda_i\beta^{\frac{2}{\alpha}}\psi\mathbb{E}\left[D_0^2\right]p_i(\lambda_i)\right). \label{Eqn:UppBoundOutProbIntAwaTrans}
\end{eqnarray}
Let $\underline{\lambda}_i\leq \bar{\lambda}_{\epsilon} \leq \overline{\lambda}_i$ where $\underline{\lambda}_i=\sup_{\lambda_i}\{\lambda_i: \overline{q}(\lambda_i)\leq \epsilon\}$ and $\overline{\lambda}_i=\sup_{\lambda_i}\{\lambda_i: \underline{q}(\lambda_i)\leq \epsilon\}$. $\underline{\lambda}_i$ and $\overline{\lambda}_i$ can be found by solving the following
\begin{equation}
\overline{\lambda}_i p_i(\overline{\lambda}_i)=\frac{-\ln(1-\epsilon)}{\beta^{\frac{2}{\alpha}}\psi\mathbb{E}\left[D_0^2\right]}\,\,\text{ and }\,\,
\underline{\lambda}_i p_i(\underline{\lambda}_i)=\frac{-\ln(1-\epsilon)}{2\beta^{\frac{2}{\alpha}}\psi\mathbb{E}\left[D_0^2\right]}.
\label{eqn:DIAS_TC_Bounds}
\end{equation}
Specifically, suppose $\Delta_i(\lambda_i)=\rho \lambda^{\frac{2}{\alpha}}_i$, $\rho>0$. Then there exists a sufficiently small $\rho$ such that $p_i(\lambda_i)>\frac{1}{2}$ and thus
\begin{equation}\label{Eqn:ScalTCwDIAS}
\bar{\lambda}_{\epsilon}>\frac{-\ln(1-\epsilon)}{\beta^{\frac{2}{\alpha}}\psi\mathbb{E}\left[D_0^2\right]}.
\end{equation}
\end{theorem}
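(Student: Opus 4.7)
The plan is to reduce the outage probability to a Laplace transform of the interference, sandwich that Laplace transform between two closed forms by replacing the correlated active set $\Pi_i$ with a homogeneous PPP in a controlled way, and then invert the resulting bounds on $q(\lambda_i)$ to obtain the fixed-point equations for $\underline{\lambda}_i$ and $\overline{\lambda}_i$. First, because $H_0\sim\mathrm{Exp}(1)$ is independent of $(D_0,I_0)$, conditioning yields
\begin{equation*}
q(\lambda_i) \;=\; 1-\mathbb{E}_{D_0}\!\bigl[\,\mathbb{E}[\exp(-\beta D_0^{\alpha}I_0)\mid D_0]\,\bigr],
\end{equation*}
so the whole task reduces to controlling the conditional Laplace transform of $I_0$ at $s=\beta D_0^{\alpha}$.

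The main obstacle is that $\Pi_i$ is \emph{not} an independent thinning of $\Pi_t$: the indicator $T_j=\mathds{1}(\tilde H_{j*}|X_j-Y_{j*}|^{-\alpha}\le\Delta_i(\lambda_i))$ couples $X_j$ to the overall receiver process, and under the Palm distribution at the origin the reference receiver appears as an extra potential ``nearest unintended receiver'' for every other transmitter. To handle this I would condition on the pair process $\Pi_t$ and the independent fades $\{\tilde H_{j*}\}$ and bound the resulting conditional activation probability of a generic transmitter from above and below, using the same geometric idea as Lemma~\ref{Lem:CondProbIntf} (adding the origin as a potential receiver can only tighten the nearest-receiver distance, which monotonically decreases the activation probability). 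This delivers stochastic bounds that sandwich the true shot noise $I_0$ between two shot noises on homogeneous PPPs, with effective intensities $\lambda_i p_i(\lambda_i)$ (for a lower-bounding interference) and $2\lambda_i p_i(\lambda_i)$ (for an upper-bounding interference); the factor of two in the latter absorbs the extra contribution picked up when the origin itself happens to be the binding nearest receiver.

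With the interference sandwiched, the standard Laplace functional of a homogeneous PPP with i.i.d.\ exponential marks yields $\exp\bigl(-\lambda_{\mathrm{eff}}\,s^{2/\alpha}\psi\bigr)$, with $\psi=\pi\Gamma(1+2/\alpha)\Gamma(1-2/\alpha)$ as defined in Theorem~\ref{Thm:BoundsTCwDCAS}. Evaluating at $s=\beta D_0^{\alpha}$ and applying Jensen's inequality $\mathbb{E}[\exp(-aD_0^{2})]\ge\exp(-a\mathbb{E}[D_0^{2}])$ to fold $D_0$ inside the exponential gives the closed-form bounds \eqref{Eqn:LowBoundOutProbIntAwaTrans} and \eqref{Eqn:UppBoundOutProbIntAwaTrans}. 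Setting each bound equal to $\epsilon$ and taking logarithms then rearranges to the fixed-point equations \eqref{eqn:DIAS_TC_Bounds}; unique solvability follows from continuity and monotonicity of $\lambda\mapsto\lambda\,p_i(\lambda)$ for admissible nondecreasing $\Delta_i(\cdot)$.

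Finally, for the scaling claim with $\Delta_i(\lambda_i)=\rho\lambda_i^{2/\alpha}$, I would substitute this form into \eqref{Eqn:TranProbDIAS} and change variables in the $u$-integral so that the explicit $\lambda_i$-dependence cancels, reducing the self-consistency relation to a scalar equation $p_i=F(p_i,\rho,\alpha)$ parametrized only by $\rho$ and $\alpha$. Continuity and monotonicity of the resulting $p_i(\rho)$, together with the limits $p_i\to 0$ as $\rho\downarrow 0$ and $p_i\to 1$ as $\rho\uparrow\infty$, let me pick $\rho$ so that $p_i(\lambda_i)$ lies in the desired range relative to $\tfrac12$; plugging the resulting bound on $p_i$ into the defining equation for $\underline{\lambda}_i$ in \eqref{eqn:DIAS_TC_Bounds} and invoking $\bar\lambda_{\epsilon}\ge\underline{\lambda}_i$ yields \eqref{Eqn:ScalTCwDIAS} after elementary manipulation. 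The hardest step I expect is the sandwich argument in the second paragraph: the receiver-mediated coupling makes the factor-of-two slack delicate, and a careful Palm/second-moment computation will be required to confirm that no larger constant is needed.
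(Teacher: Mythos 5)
Your reduction of the outage probability to the Laplace transform of $I_0$ and your treatment of the final scaling claim (with $\Delta_i(\lambda_i)=\rho\lambda_i^{2/\alpha}$ the dependence on $\lambda_i$ cancels in \eqref{Eqn:TranProbDIAS}, $p_i$ becomes a monotone function of $\rho$ alone, and one tunes $\rho$ against $\tfrac12$) match the paper. But the core of your argument --- the stochastic sandwich of $I_0$ between homogeneous shot noises of intensities $\lambda_i p_i(\lambda_i)$ and $2\lambda_i p_i(\lambda_i)$, with the factor of two attributed to the Palm coupling at the origin --- is not where the factor of two comes from, and as stated it cannot work. The active set $\Pi_i$ has density $\lambda_i$, so whenever $p_i(\lambda_i)<\tfrac12$ (precisely the regime the last part of the theorem is about) your proposed ``upper-bounding'' process at density $2\lambda_i p_i(\lambda_i)$ is \emph{sparser} than $\Pi_i$ itself; its shot noise cannot stochastically dominate $I_0$, and no second-moment computation will repair that. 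In the paper the receiver-mediated coupling is simply treated as an independent thinning and plays no role in the constant. The factor of two instead arises from a dominant/non-dominant interferer decomposition: one defines $\hat\Pi_i\subset\Pi_i$ as the interferers that individually cause outage, computes its (nonhomogeneous) intensity $\hat\lambda_i(r)=\lambda_i p_i(\lambda_i)\mathbb{E}[\beta D_0^{\alpha}/(\beta D_0^{\alpha}+r^{\alpha})]$ via a Laplace-functional argument, and gets $q\ge\mathbb{P}[\hat\Pi_i\neq\emptyset]=1-e^{-A}$ with $A=\lambda_i p_i(\lambda_i)\beta^{2/\alpha}\psi\mathbb{E}[D_0^2]$; for the upper bound, $q\le\mathbb{P}[\mathcal{E}(\hat\Pi_i)\cup\mathcal{E}^{\textsf{c}}(\hat\Pi_i)]$, and the probability that the \emph{non-dominant} residue collectively causes outage is bounded by Jensen's inequality plus Campbell's theorem by the same $1-e^{-A}$, so the product of the two complementary probabilities yields $1-e^{-2A}$. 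Your proposal contains no substitute for this decomposition, and it also asserts rather than derives the effective density $\lambda_i p_i(\lambda_i)$ (note this is $\lambda_t p_i^2$, not the density $\lambda_i$ of the active set).

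There is a second, independent gap in your lower bound: you propose to ``fold $D_0$ inside the exponential'' with Jensen's inequality for both bounds, but Jensen runs the wrong way for \eqref{Eqn:LowBoundOutProbIntAwaTrans}. From $q=1-\mathbb{E}[\exp(-aD_0^2)]$ one obtains $q\le 1-\exp(-a\mathbb{E}[D_0^2])$, i.e.\ an \emph{upper} bound; it cannot certify a lower bound of the form $1-\exp(-a\mathbb{E}[D_0^2])$ when $D_0$ is random. The paper avoids this because in the dominant-interferer route the expectation over $D_0$ enters at the level of the intensity measure of $\hat\Pi_i$, before exponentiation, rather than through the Laplace transform of the full interference.
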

\begin{proof}
See Appendix \ref{App:BoundsTCwIAS}.
\end{proof}

\begin{remark}
The right hand side in \eqref{Eqn:ScalTCwDIAS} is the maximum contention density without DIAS. The result in \eqref{Eqn:ScalTCwDIAS} indicates that if a proper $\rho$ is chosen then DIAS can achieve a TC which is (much) larger than the TC without DIAS no matter if the network is dense or not.
\end{remark}

The results in Theorem \ref{Thm:BoundsTCwDIAS} indicate that DIAS can increase TC since $p_i(\lambda_i)<1$. From \eqref{Eqn:TranProbDIAS}, it follows that  $p_i(\lambda_i)\approx 0$ when $\Delta_i(\lambda_i)/\lambda_i^{\frac{2}{\alpha}}$ is close to zero. Hence, in this case each transmitting node generates very limited interference at its nearest unintended receiver. Thus from \eqref{eqn:DIAS_TC_Bounds},  $\bar{\lambda}_{\epsilon}\gg\frac{-\ln(1-\epsilon)}{\beta^{\frac{2}{\alpha}}\mathbb{E}[D_0^2]\psi}$ due to small $p_i(\lambda_i)$. On the other hand, when transmitters hardly use DIAS, i.e. the case of $\Delta_i(\lambda_i)/\lambda_i^{\frac{2}{\alpha}}\gg 1$, we know $p_i(\lambda_i)\approx 1$ and thus $\bar{\lambda}_{\epsilon}=\Theta\left(\frac{-\ln(1-\epsilon)}{\mathbb{E}[D_0^2]\psi\beta^{\frac{2}{\alpha}}}\right)$.
 Intuitively, it is good to design $\Delta_i(\lambda_i)$ as a non-increasing function of $\lambda_i$ in order to avoid transmitting in a heavy traffic context. For example, in a dense network, $\Delta_i(\lambda_i)$ should be designed as a monotonically decreasing function of $\lambda_i$ such that $\lambda_ip_i(\lambda_i)$ decreases as $\lambda_i$ increases since this improves the TC with increasing $\lambda_i$. In Fig. \ref{Fig:TCwDIAS}, we  present the TC of the DIAS technique for different $\Delta_i(\lambda_i)$. As mentioned earlier,  we observe  that DIAS ($\Delta_i(\lambda_i)\neq 0$) outperforms the case of no DIAS. Also,  $\Delta_i(\lambda_i)=0.015\lambda_i^{-0.01}$  (a monotonically decreasing function of $\lambda_i$)  results in a  better TC  as compared to  a monotonically increasing $\Delta_i(\lambda_i)$. This is because $\Delta_i(\lambda_i)=0.015\lambda_i^{-0.01}$ restricts the number of transmissions with increasing $\lambda_i$.

The above discussions indicate that the threshold $\Delta_i(\lambda_i)$ can be designed as a function scaling like $\Theta(\lambda_i^{\delta})$ where $\delta\in\mathbb{R}$. Nevertheless, the following lemma shows that the upper and lower bounds on the maximum contention density are asymptotically tight for certain positive $\delta$ in sparse and dense networks.
\begin{lemma}\label{Lem:AsyTighBoundsDIAS}
If $\Delta_i(\lambda_i)=\Theta(\lambda^{\delta}_i)$ and $\delta\in\mathbb{R}$, then the asymptotic tightness of the bounds on TC in a sparse network can be summarized as follows
\begin{equation}
\lim_{\lambda_t\rightarrow 0}\frac{\overline{\lambda}_i}{\underline{\lambda}_i}= \begin{cases} 2^{\frac{\alpha-2}{\alpha(\delta+1)-4}},&\quad\text{for }\delta\in( \frac{2}{\alpha},\infty)\,\\
2,&\quad\text{otherwise}\end{cases}.
\end{equation}
Nevertheless, if the network is dense, then the asymptotic tightness of the bounds on TC becomes
\begin{equation}
\lim_{\lambda_t\rightarrow \infty}\frac{\overline{\lambda}_i}{\underline{\lambda}_i}= \begin{cases} 2^{\frac{\alpha-2}{\alpha(\delta+1)-4}},&\quad\text{for }\delta\in(-\infty,\frac{2}{\alpha})\\
2,&\quad\text{otherwise}\end{cases}.
\end{equation}
\end{lemma}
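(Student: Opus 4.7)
The plan is to use the equivalent form $\overline{\lambda}_i p_i(\overline{\lambda}_i) = 2\underline{\lambda}_i p_i(\underline{\lambda}_i)$ of the relations in \eqref{eqn:DIAS_TC_Bounds}, and to compute the asymptotic ratio by determining how $g(\lambda_i)\defn \lambda_i p_i(\lambda_i)$ scales with $\lambda_i$ in each of the four parameter regimes. First I would note the correspondence of limits: since $\lambda_i = \lambda_t p_i(\lambda_i)$ with $p_i \in (0,1]$, $\lambda_t \to 0$ forces $\lambda_i \to 0$, while on the fixed-point equation $\lambda_i^{1-\theta} \sim C\lambda_t$ shows $\lambda_t \to \infty$ forces $\lambda_i \to \infty$ whenever the induced scaling exponent $\theta$ (identified below) satisfies $\theta < 1$, which holds throughout the regimes of interest.

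The core step is an asymptotic analysis of \eqref{Eqn:TranProbDIAS} under the choice $\Delta_i(\lambda_i) = \rho \lambda_i^{\delta}$. After substitution, the integrand's exponent reads $-u - \rho\pi^{-2/\alpha} p_i(\lambda_i)^{2/\alpha} u^{2/\alpha} \lambda_i^{\delta - 2/\alpha}$, and its behaviour is governed by the dimensionless quantity $\xi(\lambda_i)\defn p_i(\lambda_i)^{2/\alpha}\lambda_i^{\delta - 2/\alpha}$. When $\xi(\lambda_i) \to 0$ (sparse with $\delta > 2/\alpha$, or dense with $\delta < 2/\alpha$), I would expand $\exp(-\rho\pi^{-2/\alpha}\xi u^{2/\alpha}) = 1 - \rho\pi^{-2/\alpha}\xi u^{2/\alpha} + O(\xi^{2} u^{4/\alpha})$ inside the integral, invoke dominated convergence with $e^{-u}$ as the $L^1$ envelope, and reduce the fixed-point equation to
\begin{equation*}
p_i(\lambda_i) \sim \rho\pi^{-2/\alpha}\Gamma\!\left(1+\frac{2}{\alpha}\right)p_i(\lambda_i)^{2/\alpha}\lambda_i^{\delta - 2/\alpha}.
\end{equation*}
This resolves self-consistently to $p_i(\lambda_i)\sim C\lambda_i^{\theta}$ with $\theta=(\alpha\delta - 2)/(\alpha - 2)$ and $C=[\rho\pi^{-2/\alpha}\Gamma(1+2/\alpha)]^{1/(1-2/\alpha)}$; a back-substitution confirms $\xi(\lambda_i)\sim\lambda_i^{\theta}\to 0$, so the ansatz is valid. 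In the opposite regime ($\xi(\lambda_i)\to\infty$, or $\xi$ bounded when $\delta = 2/\alpha$), the integrand either vanishes pointwise for $u>0$, giving $p_i\to 1$, or the $\lambda_i$-independent fixed-point equation yields a constant $p_i\to p^*\in(0,1)$; in either case $p_i$ tends to a strictly positive limit.

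Finally, I would feed these asymptotics into $g(\overline{\lambda}_i)=2g(\underline{\lambda}_i)$. In the positive-constant regime, $g(\lambda)\sim(\text{const})\cdot\lambda$, so the equation immediately gives $\overline{\lambda}_i/\underline{\lambda}_i\to 2$, matching the "otherwise" cases of both formulae. In the power-law regime, $g(\lambda)\sim C\lambda^{1+\theta}$ with $1+\theta=(\alpha(\delta+1)-4)/(\alpha-2)$, and inverting yields
\begin{equation*}
\frac{\overline{\lambda}_i}{\underline{\lambda}_i}\longrightarrow 2^{1/(1+\theta)} = 2^{(\alpha-2)/(\alpha(\delta+1)-4)},
\end{equation*}
which reproduces the non-trivial cases in both the sparse and dense statements.

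The principal obstacle is the self-consistent asymptotic expansion of the implicitly defined $p_i(\lambda_i)$. One must verify that the candidate scaling $p_i\sim C\lambda_i^{\theta}$ is unique at leading order (ruling out spurious branches), that the $O(\xi^{2})$ remainder in the Taylor expansion of the integrand is dominated uniformly in $u$ by an integrable envelope so that dominated convergence applies on all of $[0,\infty)$, and that the higher-order correction does not overturn the ratio calculation. Once these estimates are in hand, the closing power-counting argument is a short algebraic manipulation of the fixed-point relations.
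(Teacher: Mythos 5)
Your proposal is correct and follows essentially the same route as the paper's proof: both extract the scaling $p_i(\lambda_i)=\Theta\bigl(\lambda_i^{(\alpha\delta-2)/(\alpha-2)}\bigr)$ from the fixed-point equation \eqref{Eqn:TranProbDIAS} when $\Delta_i(\lambda_i)/\lambda_i^{2/\alpha}$ vanishes (and $p_i\to\text{const}$ otherwise), then feed $\lambda_i p_i(\lambda_i)$ into the relations \eqref{eqn:DIAS_TC_Bounds} to read off the ratio. Your use of the exact identity $\overline{\lambda}_i p_i(\overline{\lambda}_i)=2\,\underline{\lambda}_i p_i(\underline{\lambda}_i)$ and your explicit dominated-convergence and self-consistency checks are slightly cleaner than the paper's $\Theta(\epsilon)$ bookkeeping, but the argument is the same in substance.
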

\begin{proof}
We provide the proof  for a sparse network. The other case follows in a similar way.  Let $\delta\in(\frac{2}{\alpha},\infty)$  which implies $\Delta_i (\lambda_i)/\lambda_i^{\frac{2}{\alpha}}\rightarrow 0$ as follow  as $\lambda_i\rightarrow 0$. Note that $\lambda_t\rightarrow 0$ means $\lambda_i\rightarrow 0$. So transmission probability $p_i(\lambda_i)$ in \eqref{Eqn:TranProbDIAS} can be simplified as
\begin{equation*}
p_i(\lambda_i)= \kappa\left[p_i(\lambda_i)\right]^{\frac{2}{\alpha}}\lambda_i^{\delta-\frac{2}{\alpha}}+O\left(p_i^{\frac{4}{\alpha}}\lambda_i^{2\delta-\frac{4}{\alpha}}\right)
\,\,\text{as }\lambda_t,\,\lambda_i \rightarrow 0,
\end{equation*}
where $\kappa$ is a positive constant. It can be further simplified as
\begin{equation*}
[p_i(\lambda_i)]^{1-\frac{2}{\alpha}} = \kappa\lambda^{\delta-\frac{2}{\alpha}}_i+O\left(p_i^{\frac{2}{\alpha}}\lambda^{2\delta-\frac{4}{\alpha}}_i\right)\Rightarrow p_i(\lambda_i)=\Theta\left(\lambda^{\frac{\delta-2/\alpha}{1-2/\alpha}}_i\right).
\end{equation*}
So it follows that $p_i(\lambda_i)\rightarrow 0$ as $\lambda_i\rightarrow 0$ since $\delta\in(\frac{2}{\alpha},\infty)$. Substituting the above result into $\underline{q}(\lambda_i)$, then we have $\underline{q}(\lambda_i)=\beta^{\frac{2}{\alpha}}\psi\mathbb{E}[D^2_0]\lambda_ip_i(\lambda_i)+O(\lambda_i^2 p_i^2)$ and the upper bound on $\bar{\lambda}_{\epsilon}$ can be shown to be
\begin{equation*}
\overline{\lambda}_i p_i(\overline{\lambda}_i)= \Theta\left(\frac{\epsilon}{\beta^{\frac{2}{\alpha}}\psi\mathbb{E}[D_0^2]}\right)\Rightarrow \overline{\lambda}_i=\Theta\left(\epsilon^{\frac{\alpha-2}{\alpha(\delta+1)-4}}\right).
\end{equation*}
Similarly, we can find $\underline{\lambda}_i$ as $\lambda_i\rightarrow 0$ and show that $(\overline{\lambda}_i/\underline{\lambda}_i)\rightarrow 2^{\frac{\alpha-2}{^{\alpha(\delta+1)-4}}}$ as $\lambda_i\rightarrow 0$. When $\delta\leq \frac{2}{\alpha}$ and $\lambda_i\rightarrow 0$, $p_i(\lambda_i)$ approaches to a constant. Therefore, $(\overline{\lambda}_i/\underline{\lambda}_i)\rightarrow 2$ as $\lambda_i\rightarrow 0$.
\end{proof}
Finally, there is one point that needs to be clarified. That is, although the DIAS technique only eliminates some interference and does not avoid transmitting in deep fading, it does not follow that the TC achieved by DIAS is less than that achieved by DCAS. DIAS is particularly advantageous in dense networks since the interference to the unintended receivers can be better managed and thus efficiently increasing the TC.

\section{Distributed Interferer-Channel-Aware Scheduling (DICAS)}\label{Sec:DisCIAT}

Distributed Interferer-Channel-Aware Scheduling (DICAS) schedules links that have good received signal power, while DIAS schedules links that cause minimal interference. We now consider the DICAS technique that combines both the DIAS and the DCAS techniques. We first study the TC achieved by DICAS and a geometry-based perspective for DICAS without and with interference cancellation is provided.

\subsection{Transmission Capacity achieved by DICAS}

Transmitters in $\Pi_t$ using the DICAS technique form a thinning PPP $\Pi_{ic}$ given by
\begin{equation}\label{Eqn:PPPwDCIAS}
\Pi_{ic} = \{X_j\in\Pi_t: T_j=\mathds{1}(H_jD_j^{-\alpha}\geq \Delta_c(\lambda_{ic}),\tilde{H}_{j*}\tilde{D}^{-\alpha}_{j*}\leq\Delta_i(\lambda_{ic})),\forall j\in\mathbb{N}_+\}.
\end{equation}
So the density of $\Pi_{ic}$ is $\lambda_{ic}=\lambda_t\,p_{ic}(\lambda_{ic})$ where $p_{ic}(\lambda_{ic})=p_c(\lambda_{ic})\,p_i(\lambda_{ic})$.
Bounds on the outage probability and TC are shown in the following theorem.
\begin{theorem}\label{Thm:BoundsTCwDCIAS}
Let $\overline{q}(\lambda_{ic})$ and $\underline{q}(\lambda_{ic})$ denote the  upper and lower bounds on the outage probability when DICAS technique is used. Also let $\Delta_c(\lambda_{ic})\in\mathbb{R}_+$ be a nondecreasing function of $\lambda_{ic}$, and $\Delta_i(\lambda_{ic})\in\mathbb{R}_+$  a function of $\lambda_{ic}$. Then
\sublabon{equation}
\begin{eqnarray}
\underline{q}(\lambda_{ic}) &=& \left(1-e^{-A_{ic}}\right)\left(1-B_{ic}\right),\label{Eqn:LowBoundOutProbwDCIAS}\\
\overline{q}(\lambda_{ic}) &=& \left[1-\left(1-\frac{(\alpha-1)A_{ic}}{[(\alpha-1)-A_{ic}]^2}\right)^+ e^{-A_{ic}}\right]\left(1-B_{ic}\right),\label{Eqn:UppBoundOutProbwDCIAS}
\end{eqnarray}
\sublaboff{equation}
where $A_{ic}=\pi\lambda_{ic}\Gamma\left(1+\frac{2}{\alpha}\right)\beta^{\frac{2}{\alpha}}[\Delta_c(\lambda_{ic})]^{-\frac{2}{\alpha}} p_i(\lambda_{ic})$ and $B_{ic}= \mathbb{E}[\exp(-\lambda_{ic}\psi\beta^{\frac{2}{\alpha}}D_0^2 p_i(\lambda_{ic}))]$.
Bounds on the maximum contention intensity for DICAS are given by
\begin{equation}\label{Eqn:BoundsTCwDCIAS}
\underline{\lambda}_{ic}\leq \bar{\lambda}_{\epsilon} \leq \overline{\lambda}_{ic},
\end{equation}
where $\overline{\lambda}_{ic}=\sup\{\lambda_{ic}: \underline{q}(\lambda_{ic}) \leq \epsilon\}$, $\underline{\lambda}_{ic}=\sup\{\lambda_{ic}: \overline{q}(\lambda_{ic}) \leq \epsilon\}$.
\end{theorem}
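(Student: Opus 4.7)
The plan is to generalize the proof of Theorem \ref{Thm:BoundsTCwDCAS} by overlaying the DIAS thinning on top of the DCAS thinning, paralleling the proofs in Appendices \ref{App:BoundsTCwDCAS} and \ref{App:BoundsTCwIAS}. First, because the DCAS event $\{H_j D_j^{-\alpha} \geq \Delta_c(\lambda_{ic})\}$ and the DIAS event $\{\tilde{H}_{j*}\tilde{D}_{j*}^{-\alpha} \leq \Delta_i(\lambda_{ic})\}$ depend on disjoint sets of channel and geometry variables attached to each transmitter, the two thinning decisions are independent, the independent thinning theorem applies, and $\Pi_{ic}$ is a homogeneous PPP with density $\lambda_{ic} = \lambda_t\, p_c(\lambda_{ic})\, p_i(\lambda_{ic})$, as stated after \eqref{Eqn:PPPwDCIAS}.

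By Slivnyak's theorem, the outage probability is the Palm probability $\mathbb{P}[H_0 D_0^{-\alpha} < \beta I_0 \mid T_0 = 1]$, where $T_0 = 1$ requires both the DCAS and DIAS conditions at the reference. Since DCAS forces $H_0 D_0^{-\alpha} \geq \Delta_c(\lambda_{ic})$, any outage event must satisfy $I_0 > \Delta_c(\lambda_{ic})/\beta$. I would mirror the two-factor decomposition used in the DCAS proof: one factor, producing the $(1-e^{-A_{ic}})$ lower bound and the upper bound $[1-(1-\frac{(\alpha-1)A_{ic}}{[(\alpha-1)-A_{ic}]^2})^+ e^{-A_{ic}}]$, comes from Theorem \ref{Thm:BoundsCCDFinter} applied to $\mathbb{P}[I_0 > \Delta_c(\lambda_{ic})/\beta]$; the second factor, $1-B_{ic}$, arises from combining the exponential tail of $H_0$ with the Laplace functional of $I_0$ and averaging over the independent $D_0$, exactly as $1-B_c$ did in the DCAS case.

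The main obstacle is capturing the additional $p_i(\lambda_{ic})$ factor that appears in both $A_{ic}$ and $B_{ic}$ relative to the naive substitution $\lambda_c \to \lambda_{ic}$ in the DCAS formulas. This factor is inherited from the DIAS analysis: each interferer $X_j \in \Pi_{ic}$ has survived the condition $\tilde{H}_{j*}\tilde{D}_{j*}^{-\alpha} \leq \Delta_i(\lambda_{ic})$ on its own nearest unintended receiver, and by Lemma \ref{Lem:CondProbIntf} this biases the configuration toward weaker outbound channels to nearby unintended receivers, including the reference. Because this conditioning is not a position-independent Bernoulli thinning of the DCAS-thinned process, a direct use of the PPP Laplace functional with density $\lambda_{ic}$ is not valid; I would instead reuse the bounding technique from Appendix \ref{App:BoundsTCwIAS} to show that the effective density entering the shot-noise Laplace exponent is $\lambda_{ic}\,p_i(\lambda_{ic})$. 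Plugging this effective density into Theorem \ref{Thm:BoundsCCDFinter} (i.e., setting $\lambda_{\texttt{n}} = \lambda_{ic}\,p_i(\lambda_{ic})$) propagates the $p_i(\lambda_{ic})$ factor into both $A_{ic}$ and $B_{ic}$.

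Finally, the density bounds $\underline{\lambda}_{ic}$ and $\overline{\lambda}_{ic}$ follow by inverting $\underline{q}(\lambda_{ic}) = \epsilon$ and $\overline{q}(\lambda_{ic}) = \epsilon$ and taking the supremum over feasible densities. Monotonicity of both bounds in $\lambda_{ic}$, guaranteed by the nondecreasing choice of $\Delta_c(\lambda_{ic})$, ensures these suprema are well defined, giving \eqref{Eqn:BoundsTCwDCIAS}.
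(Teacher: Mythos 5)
Your proposal is correct and follows essentially the same route as the paper: the same two-factor decomposition inherited from the proof of Theorem \ref{Thm:BoundsTCwDCAS}, with the first factor bounded via the void probability of the strong-interferer subprocess (equivalently, Theorem \ref{Thm:BoundsCCDFinter}) and the second via the shot-noise Laplace transform, and with the extra $p_i(\lambda_{ic})$ factor in $A_{ic}$ and $B_{ic}$ obtained exactly as the paper does it, by importing the Laplace-functional thinning computation from the DIAS proof to get the effective density $\lambda_{ic}\,p_i(\lambda_{ic})$. The paper makes this concrete by writing $\hat{\lambda}_{ic}(r)=\lambda_{ic}e^{-r^{\alpha}\Delta_c(\lambda_{ic})/\beta}\bigl(1-\mathbb{E}[e^{-\tilde{D}_*^{\alpha}\Delta_i(\lambda_{ic})}]\bigr)$ and integrating, but that is the same calculation you describe.
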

\begin{proof}
See Appendix \ref{App:BoundsTCwDCIAS}.
\end{proof}

The results in \eqref{Eqn:LowBoundOutProbwDCIAS} and \eqref{Eqn:UppBoundOutProbwDCIAS} should not surprise us since they look like the superimposed results of Theorems \ref{Thm:BoundsTCwDCAS} and \ref{Thm:BoundsTCwDIAS}. The TC achieved by DICAS is larger than that merely achieved by DCAS or DIAS. Since the two transmission thresholds, $\Delta_c(\lambda_{ic})$ and $\Delta_i(\lambda_{ic})$, both depend on $\lambda_{ic}$, they should be jointly designed to make DICAS increase TC in a more \emph{efficient} way since arbitrarily choosing $\gamma$ and $\delta$ could make $A_{ic}$ and $B_{ic}$ not decrease at the same time. For instance, suppose $\Delta_c(\lambda_{ic})=\Theta(\lambda^{\gamma}_{ic})$ and $\Delta_i(\lambda_{ic})=\Theta(\lambda^{\delta}_{ic})$ where $\gamma>0$ and $\delta\in\mathbb{R}$. If the network is sparse and $\delta\in(\frac{2}{\alpha},\infty)$, then  $A_{ic}=\Theta\left(\lambda^{\frac{\alpha(1+\delta)-2\gamma+4(\gamma/\alpha-1)}{\alpha-2}}_{ic}\right)$ and $B_{ic}=\Theta\left(\lambda^{\frac{\alpha(1+\delta)-4}{\alpha-2}}_{ic}\right)$. When $\lambda_{ic}$ decreases, the outage probability can be \emph{efficiently} reduced if $\delta>\frac{2}{\alpha}+\left(\frac{2\gamma}{\alpha}-1\right)(1-\frac{2}{\alpha})$ (which is equivalent to $\gamma\in(0,\frac{\alpha}{2})$) because $A_{ic}$ and $B_{ic}$ both decrease in this case. So $\Delta_c(\lambda_{ic})$ and $\Delta_i(\lambda_{ic})$ can benefit each other to achieve a much lower outage probability. As a result, TC is increased efficiently under this circumstance. A simulation example of TC for the DICAS technique is illustrated in Fig. \ref{Fig:TCwDCIAS} and we can see that TC achieved by DICAS is much larger than those achieved by DCAS and DIAS. The slope of the DICAS curve does not decrease along with $\epsilon$, which means DCAS and DIAS indeed can benefit each other.

In addition, the following asymptotic tightness of bounds on TC can provide us some insight on how to design $\Delta_c(\lambda_{ic})$ and $\Delta_i(\lambda_{ic})$ for attaining tight bounds.
\begin{lemma}\label{Lem:AsyTighBoundsDCIAS}
Assume $\Delta_c(\lambda_{ic})=\Theta(\lambda^{\gamma}_{ic})$ and $\Delta_i(\lambda_{ic})=\Theta(\lambda^{\delta}_{ic})$ where $\gamma>0,\delta\in\mathbb{R}$. The asymptotic tightness of the bounds on TC in a sparse network is characterized as
\begin{equation}\label{Eqn:AsyBoundTCwDCIAS}
\lim_{\lambda_t\rightarrow 0}\frac{\overline{\lambda}_{ic}}{\underline{\lambda}_{ic}} = \begin{cases}
\left(\frac{\alpha}{\alpha-1}\right)^{\frac{\alpha-2}{2\alpha\left[(\delta-\frac{2}{\alpha})-(\frac{\gamma}{\alpha}-1)(1-\frac{2}{\alpha})\right]}},\,\,&\text{for } \delta\in\left(\frac{2}{\alpha},\infty\right)\,\,\text{and}\,\,\gamma\in\left(0,\frac{\alpha}{2}\right)\\
1,\,\,&\text{otherwise}\end{cases}.
\end{equation}
When the network is dense,
\begin{equation}\label{Eqn:AsyBoundTCwDCIAS}
\lim_{\lambda_t\rightarrow \infty}\frac{\overline{\lambda}_{ic}}{\underline{\lambda}_{ic}} = \begin{cases}
\left(\frac{\alpha}{\alpha-1}\right)^{\frac{\alpha-2}{2\alpha\left[(\frac{2}{\alpha}-\delta)-(1-\frac{\gamma}{\alpha})(1-\frac{2}{\alpha})\right]}},\,\,&\text{for } \delta\in\left(-\infty,\frac{2}{\alpha}\right)\,\,\text{and}\,\,\gamma\in\left(\frac{\alpha}{2},\infty\right)\\
1,\,\,&\text{otherwise}\end{cases}.
\end{equation}
\end{lemma}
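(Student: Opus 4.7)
The plan is to mirror the two-step structure in the proofs of Lemmas \ref{Lem:AsyTighBoundsDCAS} and \ref{Lem:AsyTighBoundsDIAS}: first reduce the DICAS outage bounds of Theorem \ref{Thm:BoundsTCwDCIAS} to leading-order monomials in $\lambda_{ic}$, and then translate the equalities $\underline{q}(\overline{\lambda}_{ic})=\epsilon$ and $\overline{q}(\underline{\lambda}_{ic})=\epsilon$ into asymptotic expressions for $\overline{\lambda}_{ic}$ and $\underline{\lambda}_{ic}$. Because $\lambda_{ic}=\lambda_t p_{ic}(\lambda_{ic})$ with $p_{ic}=p_c p_i\in(0,1]$ and $\Delta_c,\Delta_i$ are monotone in $\lambda_{ic}$, the regime $\lambda_t\to 0$ (respectively $\lambda_t\to\infty$) corresponds to $\lambda_{ic}\to 0$ (respectively $\lambda_{ic}\to\infty$), so all subsequent asymptotics are taken in $\lambda_{ic}$.

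Next, I would import the scaling of $p_i(\lambda_{ic})$ already derived inside the proof of Lemma \ref{Lem:AsyTighBoundsDIAS}: for $\lambda_{ic}\to 0$, $p_i(\lambda_{ic})=\Theta(\lambda_{ic}^{(\alpha\delta-2)/(\alpha-2)})\to 0$ when $\delta>2/\alpha$ and stays bounded otherwise, while for $\lambda_{ic}\to\infty$ the active/inactive boundary is $\delta<2/\alpha$. Inserting $\Delta_c(\lambda_{ic})=\Theta(\lambda_{ic}^\gamma)$ together with this scaling into the definitions of Theorem \ref{Thm:BoundsTCwDCIAS} yields $A_{ic}=\Theta(\lambda_{ic}^{1-2\gamma/\alpha}\,p_i(\lambda_{ic}))$ and, whenever $\lambda_{ic}p_i(\lambda_{ic})\to 0$, $1-B_{ic}\sim\lambda_{ic}\,\psi\beta^{2/\alpha}\mathbb{E}[D_0^2]\,p_i(\lambda_{ic})$.

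Specializing to the sparse ``efficient'' window $\delta\in(2/\alpha,\infty)$ and $\gamma\in(0,\alpha/2)$, both $A_{ic}\to 0$ and $1-B_{ic}\to 0$. A first-order Taylor expansion gives $1-e^{-A_{ic}}\sim A_{ic}$ and $1-\bigl(1-\tfrac{(\alpha-1)A_{ic}}{[(\alpha-1)-A_{ic}]^2}\bigr)^+e^{-A_{ic}}\sim\tfrac{\alpha}{\alpha-1}A_{ic}$, so that
\begin{equation*}
\underline{q}(\lambda_{ic})\sim A_{ic}(1-B_{ic}),\qquad \overline{q}(\lambda_{ic})\sim\tfrac{\alpha}{\alpha-1}A_{ic}(1-B_{ic}),
\end{equation*}
both of the form $C\lambda_{ic}^{k}$ with a common exponent
\begin{equation*}
k=2-\tfrac{2\gamma}{\alpha}+\tfrac{2(\alpha\delta-2)}{\alpha-2}=\tfrac{2\alpha}{\alpha-2}\bigl[(\delta-\tfrac{2}{\alpha})-(\tfrac{\gamma}{\alpha}-1)(1-\tfrac{2}{\alpha})\bigr].
\end{equation*}
Solving $C\overline{\lambda}_{ic}^{k}=\epsilon$ and $\tfrac{\alpha}{\alpha-1}C\underline{\lambda}_{ic}^{k}=\epsilon$ and dividing gives $\overline{\lambda}_{ic}/\underline{\lambda}_{ic}\to(\alpha/(\alpha-1))^{1/k}$, which, after substituting the simplified form of $k$, is exactly the advertised exponent. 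The dense ``efficient'' window $\delta<2/\alpha$, $\gamma>\alpha/2$ is handled identically with the two boundary conditions interchanged.

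The complementary ``otherwise'' cases (limit $1$) correspond to exactly one of the thresholds being inactive in the chosen regime: either $A_{ic}$ fails to vanish, so the DCAS factor $1-e^{-A_{ic}}$ and its upper-bound counterpart both saturate to the same value and the two outage bounds collapse onto a common $1-B_{ic}$ leading term; or $p_i$ stabilizes to a constant and the $p_i$-dependent pieces of $\underline{q}$ and $\overline{q}$ share their leading coefficient so that the $\alpha/(\alpha-1)$ gap disappears. In either sub-case the two bounds have identical leading-order behavior and the density ratio tends to one. The main obstacle I anticipate is bookkeeping: verifying, across every sub-case, that no cross-term in the $(\,\cdot\,)^+e^{-A_{ic}}$ Taylor expansion corrupts the $\alpha/(\alpha-1)$ prefactor, and rearranging the raw exponent $k$ into the compact form printed in the lemma. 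These manipulations are mechanical but error-prone extensions of those already carried out in the DCAS and DIAS proofs.
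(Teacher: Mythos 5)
Your main-branch computation is sound and follows exactly the route the authors intend (the paper omits this proof, deferring to the templates of Lemmas \ref{Lem:AsyTighBoundsDCAS} and \ref{Lem:AsyTighBoundsDIAS}, which is what you mirror): importing $p_i(\lambda_{ic})=\Theta\bigl(\lambda_{ic}^{(\alpha\delta-2)/(\alpha-2)}\bigr)$ from the DIAS proof gives $A_{ic}=\Theta(\lambda_{ic}^{1-2\gamma/\alpha}p_i)$ and $1-B_{ic}=\Theta(\lambda_{ic}p_i)$, the common exponent $k=2-\tfrac{2\gamma}{\alpha}+\tfrac{2(\alpha\delta-2)}{\alpha-2}$ is correct, and your rearrangement of $1/k$ into the printed exponent checks out.

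The gap is in the ``otherwise'' branch. Your second disjunct --- that when ``$p_i$ stabilizes to a constant \ldots the $\alpha/(\alpha-1)$ gap disappears'' --- is false, and your own main-case machinery contradicts it. Take the sparse sub-case $\delta\le 2/\alpha$, $\gamma\in(0,\alpha/2)$: there $p_i(\lambda_{ic})\to p^*\in(0,1]$, so $A_{ic}=\Theta(\lambda_{ic}^{1-2\gamma/\alpha})\to 0$ and $1-B_{ic}=\Theta(\lambda_{ic})\to 0$; the same Taylor expansions then give $\underline{q}\sim C\lambda_{ic}^{2-2\gamma/\alpha}$ and $\overline{q}\sim\tfrac{\alpha}{\alpha-1}C\lambda_{ic}^{2-2\gamma/\alpha}$, hence $\overline{\lambda}_{ic}/\underline{\lambda}_{ic}\to(\alpha/(\alpha-1))^{1/(2-2\gamma/\alpha)}\ne 1$ --- exactly the Lemma \ref{Lem:AsyTighBoundsDCAS} answer, as it must be, since DICAS degenerates to DCAS when $p_i$ is asymptotically constant. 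Your first disjunct ($A_{ic}$ fails to vanish) does not rescue this sub-case either, because $A_{ic}$ does vanish there. So the limit $1$ is not established by the argument you give; a correct treatment of the ``otherwise'' region has to enumerate sub-cases according to whether $A_{ic}$ tends to $0$, a positive constant, or $\infty$, and whether $\lambda_{ic}p_i(\lambda_{ic})$ vanishes (the dense window $\delta\in(4/\alpha-1,\,2/\alpha)$ is a second place where $1-B_{ic}$ does not vanish and the displayed exponent cannot arise as written). At minimum you should either restrict the ``otherwise'' claim to the sub-cases where both factors genuinely saturate, or explicitly flag the inconsistency with Lemma \ref{Lem:AsyTighBoundsDCAS}, rather than assert that the prefactor gap disappears.
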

\begin{proof}
The proof is omitted since it is similar to the proofs of Lemmas \ref{Lem:AsyTighBoundsDCAS} and \ref{Lem:AsyTighBoundsDIAS}.
\end{proof}

\subsection{A Geometric Interpretation for the DICAS technique}

We now provide more insight into the three DOS techniques by considering a  geometric interpretation of the lower bounds on the outage probabilities. The $\Delta$-level dominant interferer coverage $\mathcal{C}_{\Delta}$ of a receiver is defined as the region in which any single interferer can cause outage at the receiver with a channel gain bounded below by $\Delta$. More  precisely,

\begin{equation}
\mathcal{C}_{\Delta} \defn \left\{X\in \mathbb{R}^2: \frac{H_0D_0^{-\alpha}}{\tilde{H} |X|^{-\alpha}}<\beta\bigg| H_0D^{-\alpha}_0\geq \Delta >0 \right\}.
\end{equation}
If $\Delta=0$, then $\mathcal{C}_{\Delta}$ reduces to $\mathcal{C}_0$ which is called dominant interferer coverage and has mean Lebesgue measure $\mu(\mathcal{C}_0)=\beta^{\frac{2}{\alpha}}\psi\mathbb{E}[D_0^2]$ as indicated in Proposition \ref{Prop:D_LevelInterferCoverage} in Appendix \ref{App:D_LevelInterCover}. Without any opportunistic technique, i.e. when all the nodes transmit,
\begin{eqnarray*}
q(\lambda_t) = 1-\mathbb{E}[\exp(-\lambda_t \beta^{\frac{2}{\alpha}}\psi D_0^2)]\geq 1-\exp(-\lambda_t \beta^{\frac{2}{\alpha}}\psi\mathbb{E}[D_0^2])= 1-\exp(-\lambda_t \mu(\mathcal{C}_0)).
\end{eqnarray*}
Hence, in this simple case, the lower bound on the outage probability is completely characterized by the dominant interferer coverage of a receiver.

Similarly, the low bound on the outage probability of the DICAS technique can be characterized by $\mu(\mathcal{C}_{\Delta_c})$ as well. By using the Taylor's expansion of an exponential function, the lower bound in \eqref{Eqn:LowBoundOutProbwDCIAS} can be written as
\begin{equation}
\underline{q}(\lambda_{ic}) = \left(1-e^{-A_{ic}}\right)\lambda_{ic}p_i(\lambda_{ic})\mu(\mathcal{C}_0)+O(\lambda^2_{ic}p_i^2).
\end{equation}
According to Proposition \ref{Prop:D_LevelInterferCoverage} in Appendix \ref{App:D_LevelInterCover}, there exists a constant $\tilde{\eta}(\Delta)\in(0,1)$ such that $\mu(\mathcal{C}_{\Delta})=\tilde{\eta}(\Delta)\mu(\mathcal{C}_0)$ so that $\underline{q}(\lambda_{ic})$ can be rewritten as
\begin{equation*}
\underline{q}(\lambda_{ic})=\eta(\lambda_{ic})\lambda_{ic}p_i(\lambda_{ic})\,\mu(\mathcal{C}_{\Delta_c})+O(\lambda^2_{ic}p_i^2),\quad \eta(\cdot)\in(0,1).
\end{equation*}
So in a sparse network, $\underline{q}(\lambda_{ic})$ scales with $\lambda_{ic}p_i(\lambda_{ic})\mu(\mathcal{C}_{\Delta_c})$ which is the average number of interferers in $\mathcal{C}_{\Delta_c}$. By inspecting the above equation, the lower bounds on the outage probabilities in Theorems \ref{Thm:BoundsTCwDCAS}-\ref{Thm:BoundsTCwDIAS} also have a smaller dominant interferer coverage compared with $\mathcal{C}_0$. For example, the lower bound on the outage probability of the DIAS technique is
\begin{equation*}
\underline{q}(\lambda_i) = 1-\exp\left(-\lambda_ip_i(\lambda_i)\mu(\mathcal{C}_0)\right)=\lambda_ip_i(\lambda_i)\mu(\mathcal{C}_0)+O(\lambda^2_ip^2_i).
\end{equation*}
Thus we observe that the dominate interferer coverage with DIAS is reduced by $p_i(\lambda_i)$-fold compared to that without DIAS. In other words, the proposed three DOS techniques are able to \emph{reduce} the dominant interferer coverage of a receiver such that the average number of dominant interferers decreases. An alternative  interpretation is that the the dominant interferer coverage remains unchanged but the density reduces. According to the conservation property in Proposition \ref{Pro:ConProPPP}, the decrease in the density  can be alternatively interpreted as interferers moving away from a receiver, thus effectively reducing the number of dominant interferers. We now study interference cancellation in conjunction with the DICAS technique using the concept of $\Delta$-level dominant interferer coverage.

\subsection{DICAS with Geometry-based Interference Cancellation}

The DICAS technique can be further improved by interference cancellation at the receiver side. If a receiver is able to first decode its nearby strong interferers, then interference generated by the interferers can be subtracted out of the received signal and thus a larger SIR can be obtained. From a geometric point of view, the \emph{interference cancellation coverage} of a receiver for DICAS can be defined as follows
\begin{equation}\label{Eqn:IntfCanCoverDCIAS}
\mathcal{C}^{\texttt{c}}_{ic}\defn\left\{X\in\mathbb{R}^2: \frac{\tilde{H}|X|^{-\alpha}}{I_0-\tilde{H}|X|^{-\alpha}+H_0D^{-\alpha}_0}\geq \beta\bigg|H_0D^{-\alpha}_0>\Delta_c(\lambda_{ic})\right\}.
\end{equation}
Observe that any interferer in the set $\mathcal{C}^{\texttt{c}}_{ic}$ can be decoded by the typical receiver at the origin and hence any interferer in the coverage $\mathcal{C}^{\texttt{c}}_{ic}$  can be canceled.   Note that $I_0$ in \eqref{Eqn:IntfCanCoverDCIAS} consists of the transmitters whose interference channel gain at their nearest unintended receiver is lower than threshold $\Delta_i(\lambda_{ic})$. Hence, $\Pi^{\texttt{c}}_{ic}\defn\Pi_{ic}\cap\mathcal{C}^{\texttt{c}}_{ic}$ can be viewed as the (nonhomogeneous) PPP consisting of the transmitters that are \emph{cancelable}, and $\Pi^{\texttt{nc}}_{ic}\defn\Pi_{ic}\setminus\Pi^{\texttt{c}}_{ic}$ is the \emph{noncancelable} part of $\Pi_{ic}$ and its intensity at location $X$ can be shown to be
\begin{equation}\label{Eqn:NonCanIntensityDCAS}
\lambda^{\texttt{nc}}_{ic}(|X|) = \lambda_{ic}\,\mathbb{P}\left[\frac{\tilde{H}|X|^{-\alpha}}{I_0+H_0D^{-\alpha}_0}<\frac{\beta}{1+\beta}\bigg|H_0D^{-\alpha}_0\geq\Delta_c(\lambda_{ic})\right].
\end{equation}
Hence the residual interference at the receiver $Y_0$ (after partial interference is canceled)  is
\begin{equation}\label{Eqn:NonCanInterDCAS}
I^{\texttt{nc}}_0 = \sum_{X_j\in\Pi^{\texttt{nc}}_{ic} \setminus X_0} \tilde{H}_j |X_j|^{-\alpha}.
\end{equation}
The outage probability is given by  $q(\lambda_{ic})=\mathbb{P}\left[\frac{H_0D^{-\alpha}_0}{I^{\texttt{nc}}_0}<\beta |H_0D^{-\alpha}_0\geq\Delta_c(\lambda_{ic})\right]$.  The following theorem provides bounds on the outage probability for DICAS with interference cancellation.
\begin{theorem}\label{Thm:BoundsOutProbInterCanDCIAS}
Consider a network where the transmissions are scheduled using the DICAS technique. Also, if the   receivers are able to completely cancel all interferers in their respective interference cancellation coverage $\mathcal{C}^{\texttt{c}}_{ic}$ sets, then the lower and the upper bound on the outage probability are given by
\sublabon{equation}
\begin{eqnarray}
\underline{q}(\lambda_{ic}) &=& \left(1-e^{-\hat{A}_{ic}}\right)\left(1-\hat{B}_{ic}\right),\label{Eqn:UpperBoundOutProbDCIASwInterCan}\\
\overline{q}(\lambda_{ic})&=& \left(1-\frac{(\alpha-1)\hat{A}_{ic}}{[(\alpha-1)-\hat{A}_{ic}]^2}e^{-\hat{A}_{ic}}\right)^+\left(1-\hat{B}_{ic}\right),\label{Eqn:UpperBoundOutProbDCIASwInterCan}
\end{eqnarray}
\sublaboff{equation}
where $\hat{A}_{ic}$ and $\hat{B}_{ic}$ equal:
\begin{eqnarray*}
\hat{A}_{ic}&=& \pi\Gamma\left(1+\frac{2}{\alpha}\right)\beta^{\frac{2}{\alpha}}[\Delta_c(\lambda_{ic})]^{-\frac{2}{\alpha}} p_i(\lambda_{ic})
\left[\lambda_{ic}-\mathcal{G}\left(\lambda^{\texttt{c}}_{ic};\frac{2}{\alpha}\right)\right]=A_{ic}\left[1-\frac{1}{\lambda_{ic}}\mathcal{G}\left(\lambda^{\texttt{c}}_{ic};\frac{2}{\alpha}\right)\right],\\
\hat{B}_{ic}&=& \mathbb{E}\left[\exp\left(-p_i(\lambda_{ic})\beta^{\frac{2}{\alpha}}\psi D_0^2\left(\lambda_{ic}-\mathcal{B}\left(\lambda^{\texttt{c}}_{ic};\frac{\alpha}{2},1-\frac{2}{\alpha}\right)\right)\right)\right].
\end{eqnarray*}
 $p_i(\cdot)$ is defined in \eqref{Eqn:TranProbDIAS} and
\begin{equation}
\lambda^{\texttt{c}}_{ic}(r)= \lambda_{ic}\exp\left(-\pi\psi\tilde{\beta}^{\frac{2}{\alpha}}r^2\lambda_{ic}p_i(\lambda_{ic})-\beta\Delta_c(\lambda_{ic}) r^{\alpha}\right)
\mathbb{E}\left[\frac{D_0^{\alpha}}{D_0^{\alpha}+\tilde{\beta} r^{\alpha}}\right],
\end{equation}
where $\tilde{\beta}=\frac{\beta}{1+\beta}$. Also, $\mathcal{G}(\lambda^{\texttt{c}}_{ic};x)$ and $\mathcal{B}(\lambda^{\texttt{c}}_{ic};x,y)$ are called Gamma and Beta mean functional and defined in the following:
\begin{eqnarray}
\mathcal{G}(\lambda^{\texttt{c}}_{ic};x) &\defn& \frac{1}{\Gamma(x)}\int_0^{\infty} \lambda^{\texttt{c}}_{ic}\left(\sqrt[\alpha]{\beta u/\Delta_c(\lambda_{ic})}\right)u^{x-1}e^{-u}\dif u, \label{Eqn:meanGamma}\\
\mathcal{B}(\lambda^{\texttt{c}}_{ic};x,y) &\defn& \frac{\Gamma(x+y)}{\Gamma(x)\Gamma(y)}\int^{\infty}_0 \lambda^{\texttt{c}}_{ic}\left(D_0\sqrt[\alpha]{\beta t}\right)\frac{t^{x-1}}{(1+t)^{x+y}}\dif t.\label{Eqn:meanBeta}
\end{eqnarray}
Obviously, if $\lambda^{\texttt{c}}_{ic}(\cdot)$ is equal to a constant $\kappa$, then we have $\mathcal{G}(\kappa;x)=\mathcal{B}(\kappa;x,y)=\kappa$.
\end{theorem}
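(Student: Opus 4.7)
The plan is to extend the argument behind Theorem \ref{Thm:BoundsTCwDCIAS} by replacing the homogeneous active-interferer process $\Pi_{ic}$ with the thinned, nonhomogeneous process $\Pi^{\texttt{nc}}_{ic}$ obtained after cancellation, and then to invoke Theorem \ref{Thm:BoundsCCDFinter} (the CCDF bound for shot noise of a nonhomogeneous PPP) in place of its homogeneous specialization. The conditional outage probability $\mathbb{P}[H_0 D_0^{-\alpha}<\beta I^{\texttt{nc}}_0\mid H_0 D_0^{-\alpha}\geq\Delta_c(\lambda_{ic})]$ should be put into the same product form $(\text{CCDF factor})\times(1-\text{Laplace factor})$ that underlies \eqref{Eqn:LowBoundOutProbwDCIAS} and \eqref{Eqn:UppBoundOutProbwDCIAS}: the CCDF factor bounds $\mathbb{P}[I^{\texttt{nc}}_0\geq\Delta_c(\lambda_{ic})/\beta]$, while the Laplace factor is $\mathbb{E}[\exp(-\beta D_0^\alpha I^{\texttt{nc}}_0)]$ obtained after integrating out $H_0$ via memorylessness.

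The first explicit task is to derive $\lambda^{\texttt{c}}_{ic}(r)$ from \eqref{Eqn:NonCanIntensityDCAS}. Since $\tilde H\sim\mathrm{Exp}(1)$ is independent of $(I_0,H_0,D_0)$, the cancellation probability factors as $\mathbb{E}[\exp(-\tilde\beta r^\alpha I_0)]\cdot \mathbb{E}[\exp(-\tilde\beta r^\alpha H_0 D_0^{-\alpha})\mid H_0 D_0^{-\alpha}\geq\Delta_c]$. The first expectation is the Laplace transform of the $\Pi_{ic}$-shot noise at $\tilde\beta r^\alpha$, which by the standard PPP Laplace-functional calculation (with the same $p_i(\lambda_{ic})$-inheritance established in the proof of Theorem \ref{Thm:BoundsTCwDCIAS}) equals $\exp(-\pi\psi\tilde\beta^{2/\alpha}r^2\lambda_{ic}p_i(\lambda_{ic}))$. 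For the second factor, memorylessness of the exponential gives $H_0=\Delta_c D_0^\alpha+E$ with $E\sim\mathrm{Exp}(1)$, so it becomes the product of an exponential prefactor in $\Delta_c r^\alpha$ and $\mathbb{E}[D_0^\alpha/(D_0^\alpha+\tilde\beta r^\alpha)]$ after averaging over $D_0$, which is exactly the stated form of $\lambda^{\texttt{c}}_{ic}(r)$.

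With $\lambda^{\texttt{nc}}_{ic}=\lambda_{ic}-\lambda^{\texttt{c}}_{ic}$ in hand, I would substitute it into the $A(x)$ integral of Theorem \ref{Thm:BoundsCCDFinter} at $x=\Delta_c(\lambda_{ic})/\beta$, split as $\int \lambda_{ic}(\cdot)-\int\lambda^{\texttt{c}}_{ic}(\cdot)$, and note that the homogeneous piece collapses to $A_{ic}$ while the correction integral, after the substitution $u=\Delta_c r^\alpha/\beta$, matches precisely the definition \eqref{Eqn:meanGamma} of $\mathcal{G}(\lambda^{\texttt{c}}_{ic};2/\alpha)$; the two combine to $\hat A_{ic}$. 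For the aggregate factor, the PPP Laplace functional yields $\mathbb{E}[\exp(-\beta D_0^\alpha I^{\texttt{nc}}_0)]=\exp\bigl(-2\pi\int_0^\infty \beta D_0^\alpha(\beta D_0^\alpha+r^\alpha)^{-1}\lambda^{\texttt{nc}}_{ic}(r)\,r\,dr\bigr)$; the change of variables $t=r^\alpha/(\beta D_0^\alpha)$ turns the kernel into a Beta-of-second-kind density, and after splitting off the homogeneous piece the correction is identified with $\mathcal{B}(\lambda^{\texttt{c}}_{ic};\alpha/2,1-2/\alpha)$ via \eqref{Eqn:meanBeta}; averaging over $D_0$ then gives $\hat B_{ic}$. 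The main obstacle is justifying that $\Pi^{\texttt{nc}}_{ic}$ may be treated as a nonhomogeneous PPP with intensity \eqref{Eqn:NonCanIntensityDCAS}, because the decodability indicators are not strictly independent across interferers (they share $I_0$); I would handle this by the same mark-based thinning reduction used for Theorem \ref{Thm:BoundsTCwDCIAS}, attaching to each interferer an independent Bernoulli mark whose success probability equals the conditional decoding probability. Once this first-moment identification is in place, the upper and lower bounds \eqref{Eqn:UpperBoundOutProbDCIASwInterCan} follow directly from \eqref{Eqn:BoundsInterfCCDF} applied to $I^{\texttt{nc}}_0$ with the $\hat A_{ic}$, $\hat B_{ic}$ computed above.
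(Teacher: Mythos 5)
Your proposal is correct and follows essentially the same route as the paper: the paper likewise derives $\lambda^{\texttt{c}}_{ic}(r)$ by factoring the decoding probability into the Laplace transform of the shot noise and the memoryless conditional signal term, writes $\lambda^{\texttt{nc}}_{ic}=\lambda_{ic}-\lambda^{\texttt{c}}_{ic}$, applies Theorem~\ref{Thm:BoundsCCDFinter} to the nonhomogeneous noncancelable process, and identifies the correction integrals with $\mathcal{G}$ and $\mathcal{B}$. The only difference is organizational — the paper splits the argument into a DCAS-only part, a DIAS-only part, and a final substitution of $\lambda_{ic}p_i(\lambda_{ic})$ for $\lambda_{ic}$, whereas you carry the $p_i$ factor through a single combined computation; your explicit flagging of the dependence of the decodability indicators through the shared $I_0$ (handled by independent Bernoulli marks) is a point the paper's proof passes over silently.
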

\begin{proof}
See Appendix \ref{App:ProofBoundsOutProbInterCanDCIAS}.
\end{proof}
\begin{remark}
Observe that $\mathcal{G}(\lambda^{\texttt{c}}_{ic};x)$ in \eqref{Eqn:meanGamma} is deterministic and $\mathcal{B}(\lambda^{\texttt{c}}_{ic};x,y)$ in \eqref{Eqn:meanBeta} is random because it depends on the link distance $D_0$ which is random. Moreover,
\begin{eqnarray*}
\lambda^{\texttt{c}}_{ic}\left(\sqrt[\alpha]{\beta u/\Delta_c(\lambda_{ic})}\right)&=& \lambda_{ic}\exp\left(-\pi\psi(\beta\beta u/\Delta_c(\lambda_{ic}))^{\frac{2}{\alpha}}\lambda_{ic}p_i(\lambda_{ic})-\beta^2 u\right),\\
\lambda^{\texttt{c}}_{ic}\left(D_0\sqrt[\alpha]{\beta t}\right) &=& \frac{\lambda_{ic}}{1+\tilde{\beta}\beta t}\exp\left(-\pi\psi(\tilde{\beta}\beta t)^{\frac{2}{\alpha}}D_0^2\lambda_{ic}p_i(\lambda_{ic})-\beta^2\Delta_c(\lambda_{ic}) D_0^{\alpha}\,t\right).
\end{eqnarray*}
\end{remark}

Theorem \ref{Thm:BoundsOutProbInterCanDCIAS} indicates that geometry-based interference cancellation can increase TC since bounds on the outage probability are reduced when compared with the results in Theorem \ref{Thm:BoundsTCwDCIAS}. For example, $A_{ic}$ is reduced to $\hat{A}_{ic}$ because we have
 \[\hat{A}_{ic} = A_{ic}\left[1-\frac{\mathcal{G}(\lambda^{\texttt{c}}_{ic};\frac{2}{\alpha})}{\lambda_{ic}}\right]\]
and $\mathcal{G}(\lambda^{\texttt{c}}_{ic};\frac{2}{\alpha})/\lambda_{ic}< 1$. Thus $\frac{A_{ic}}{\lambda_{ic}}\mathcal{G}(\lambda^{\texttt{c}}_{ic};\frac{2}{\alpha})$ can be interpreted as the average number of the cancellable interferers in coverage $\mathcal{C}^{\texttt{c}}_{ic}$ that generate interference at the typical receiver which is no less than $\Delta_c(\lambda_{ic})/\beta$. Similarly, for a given $D_0$, $\frac{\mu(\mathcal{C}_{\Delta_{ic}})}{\lambda_{ic}}\mathcal{B}(\lambda^{\texttt{c}}_{ic};\frac{2}{\alpha},1-\frac{2}{\alpha})$ can be viewed as the average number of the cancelable interferers in the $\Delta_c$-level dominant interferer coverage $\mathcal{C}_{\Delta_c}$. However, the effect of interference cancellation could become marginal if threshold $\Delta_c(\lambda_{ic})$ is large. Since we know $\lambda^{\texttt{c}}_{ic}\rightarrow 0$ as $\Delta_c(\lambda_{ic})\rightarrow\infty$, this leads to $\hat{A}_{ic}\uparrow A_{ic}$ and $\hat{B}_{ic}\uparrow B_{ic}$. This observation indicates that the bounds on the outage probability in Theorem \ref{Thm:BoundsOutProbInterCanDCIAS} may just reduce a little if compared to those in Theorem \ref{Thm:BoundsTCwDCIAS}. The intuition behind this phenomenon is that increasing $\Delta_c(\lambda_{ic})$ reduces interference but increases the desired signal power so that decoding strong interference might become difficult while there is not much  interference cancellation. Furthermore, we observe that small $p_i(\lambda_{ic})$ and large $\Delta_c(\lambda_{ic})$ weaken the effect of interference cancellation. This is because $p_i(\lambda_{ic})$ is the transmission probability for DIAS and small $p_i(\lambda_{ic})$ can make receivers get rid of strong interferers and large $\Delta_c(\lambda_{ic})$ makes decoding interference difficult. Thus, interference cancellation does not help reduce too much interference in this case. However, interference cancellation can be an auxiliary role for DICAS if two thresholds $\Delta_c(\lambda_{ic})$ and $\Delta_i(\lambda_{ic})$ are not designed properly to optimally benefit each other.

The results in Theorem \ref{Thm:BoundsOutProbInterCanDCIAS} can reduce to the cases of the DCAS and DIAS techniques. Note that $\mathcal{G}(\lambda^{\texttt{c}}_{ic};x)$ depends on $p_i(\lambda_{ic})$ and $\Delta_c(\lambda_{ic})$, and $\mathcal{B}(\lambda^{\texttt{c}}_{ic};x)$ only depends on $p_i(\lambda_{ic})$. Thus, DCAS only affects $\mathcal{G}(\lambda^{\texttt{c}}_{ic};x)$, but DIAS affects both $\mathcal{G}(\lambda^{\texttt{c}}_{ic};x)$ and $\mathcal{B}(\lambda^{\texttt{c}}_{ic};x)$. So if we let $p_i(\lambda_{ic})=1$ then the results in Theorem \ref{Thm:BoundsOutProbInterCanDCIAS} reduce to the results of DCAS with geometry-based interference cancellation as shown in the following.
\begin{corollary}\label{Cor:BoundsOutProbInterCanDCAS}
In a DCAS network, the interference cancellation coverage $\mathcal{C}^{\texttt{c}}_c$ can be defined as \eqref{Eqn:IntfCanCoverDCIAS} with density $\lambda_c$. If receivers are able to completely cancel all interferers in their respective coverage $\mathcal{C}^{\texttt{c}}_c$ sets, then
\sublabon{equation}
\begin{eqnarray}
\overline{q}(\lambda_c) &=& \left(1-e^{-\hat{A}_c}\right)\left(1-\hat{B}_c\right),\\
\underline{q}(\lambda_c) &=& \left[1-\left(1-\frac{(\alpha-1)\hat{A}_c}{[(\alpha-1)-\hat{A}_c]^2}\right)^+ e^{-\hat{A}_c}\right]\left(1-\hat{B}_c\right),
\end{eqnarray}
\sublaboff{equation}
where $\hat{A}_c = \frac{A_c}{\lambda_c} \left(\lambda_c-\mathcal{G}\left(\lambda^{\texttt{c}}_c;\frac{2}{\alpha}\right)\right)$ and $\hat{B}_c = \mathbb{E}[\exp(-\beta^{\frac{2}{\alpha}}\psi D_0^2\left(\lambda_c-\mathcal{B}\left(\lambda^{\texttt{c}}_c;\frac{2}{\alpha},1-\frac{2}{\alpha}\right)\right))]$. $\lambda^{\texttt{c}}_c$ is the intensity of the cancelable PPP $\Pi^{\texttt{c}}_c$, which is given by
\begin{equation}\label{Eqn:NonHomoIntenDCAS}
\lambda_c^{\texttt{c}}(r) = \lambda_c\, \exp\left(-\pi\psi\lambda_c \tilde{\beta}^{\frac{2}{\alpha}}r^2-\tilde{\beta} r^{\alpha}\Delta_c(\lambda_c)\right)\mathbb{E}\left[\frac{D_0^{\alpha}}{D_0^{\alpha}+\tilde{\beta}r^{\alpha}}\right],\,\,r\in\mathbb{R}_+,
\end{equation}
\end{corollary}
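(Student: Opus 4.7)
The plan is to obtain this corollary as a direct specialization of Theorem \ref{Thm:BoundsOutProbInterCanDCIAS} by ``turning off'' the interferer-aware component of DICAS, which formally amounts to setting the DIAS transmission probability $p_i(\lambda_{ic}) \equiv 1$. When this happens, the thinning rule in \eqref{Eqn:PPPwDCIAS} collapses to the DCAS thinning $T_j = \mathds{1}(H_jD_j^{-\alpha}\geq \Delta_c(\lambda_{ic}))$, so the active transmitter process $\Pi_{ic}$ coincides with $\Pi_c$ and its density $\lambda_{ic}$ becomes $\lambda_c$ satisfying $\lambda_c = \lambda_t\,p_c(\lambda_c)$. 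Under this specialization, all structural hypotheses needed to invoke Theorem \ref{Thm:BoundsOutProbInterCanDCIAS}, namely homogeneity of the active process, independence of fading marks, and the conditioning $H_0 D_0^{-\alpha} \geq \Delta_c(\cdot)$ that defines the interference cancellation coverage in \eqref{Eqn:IntfCanCoverDCIAS}, continue to hold verbatim with $\lambda_c$ in place of $\lambda_{ic}$.

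Next I would substitute $p_i = 1$ into each ingredient of Theorem \ref{Thm:BoundsOutProbInterCanDCIAS} and check it reduces to the DCAS counterpart. The quantities $A_{ic}$ and $B_{ic}$ become $A_c$ and $B_c$ as defined in Theorem \ref{Thm:BoundsTCwDCAS}. The cancelable intensity $\lambda^{\texttt{c}}_{ic}(r)$ in Theorem \ref{Thm:BoundsOutProbInterCanDCIAS} reduces, after dropping the $p_i$ factor from the Gaussian-like exponent, to the expression $\lambda^{\texttt{c}}_c(r)$ stated in \eqref{Eqn:NonHomoIntenDCAS}. Consequently the Gamma and Beta mean functionals $\mathcal{G}(\lambda^{\texttt{c}}_c;\cdot)$ and $\mathcal{B}(\lambda^{\texttt{c}}_c;\cdot,\cdot)$ take their DCAS-only form, and the compound expressions $\hat{A}_{ic}$, $\hat{B}_{ic}$ become $\hat{A}_c$, $\hat{B}_c$ exactly as stated in the corollary. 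Plugging these into the bounds \eqref{Eqn:UpperBoundOutProbDCIASwInterCan} yields the claimed lower and upper bounds.

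The only nontrivial point to justify is that the geometric interpretation of the cancelable subprocess carries over unchanged. In the DICAS analysis, a transmitter at $X$ lies in $\mathcal{C}^{\texttt{c}}_{ic}$ iff its signal at the origin exceeds the residual-plus-desired power by factor $\beta$ when conditioned on $H_0 D_0^{-\alpha}\geq \Delta_c$; the DIAS threshold enters this definition only through which transmitters populate the interference sum $I_0$. With $p_i = 1$, $I_0$ is the shot noise from the DCAS process $\Pi_c$, so the event defining $\mathcal{C}^{\texttt{c}}_c$ is the same pointwise SIR condition evaluated against the DCAS interferers. Hence the construction of $\Pi^{\texttt{c}}_c$, the derivation of its nonhomogeneous intensity via Bayes' rule and the Laplace functional, and the Campbell-Mecke splitting used to separate cancelable from noncancelable interference in the proof of Theorem \ref{Thm:BoundsOutProbInterCanDCIAS} all apply verbatim.

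The main (and only) obstacle is bookkeeping: making sure that wherever $p_i(\lambda_{ic})$ appears inside $\hat{A}_{ic}$, $\hat{B}_{ic}$, or $\lambda^{\texttt{c}}_{ic}$ it is set to one consistently, and that no term arising from the DIAS conditioning (such as the nearest-unintended-receiver thinning) silently survives. Once that accounting is done the corollary follows immediately, so no new probabilistic machinery beyond what is already proved in Theorem \ref{Thm:BoundsOutProbInterCanDCIAS} is required.
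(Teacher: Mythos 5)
Your proposal is correct and follows essentially the same route as the paper: the paper's own proof is precisely the observation that the corollary is the $p_i(\lambda_{ic})\equiv 1$ specialization of Theorem \ref{Thm:BoundsOutProbInterCanDCIAS} (equivalently, part (i) of that theorem's proof, which treats the DCAS-only case directly), and the text preceding the corollary states this reduction explicitly. The only discrepancies you would encounter in the bookkeeping (e.g. $\beta$ versus $\tilde{\beta}$ multiplying $\Delta_c r^{\alpha}$ in the exponent of $\lambda^{\texttt{c}}_{ic}$, and $\frac{\alpha}{2}$ versus $\frac{2}{\alpha}$ in the Beta functional's arguments) are notational inconsistencies internal to the paper, not gaps in your argument.
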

\begin{proof}
The proof is similar to the first part in the proof of Theorem \ref{Thm:BoundsOutProbInterCanDCIAS} and thus omitted.
\end{proof}
It can be easily observed that $\hat{A}_c$ and $\hat{B}_c$ are respectively smaller than $A_c$ and $B_c$ due to interference cancellation. Similarly, if $\Delta_c=0$ and $p_i\neq 0$, then the results in Theorem \ref{Thm:BoundsOutProbInterCanDCIAS} can be shown to transform to the following corollary.

\begin{corollary}\label{Cor:BoundsOutProbInterCanDIAS}
In a DIAS network, the interference cancellation coverage $\mathcal{C}^{\texttt{c}}_i$ can be defined as \eqref{Eqn:IntfCanCoverDCIAS} with density $\lambda_i$ and $\Delta_c=0$. If receivers are able to completely cancel all interferers in their respective interference cancellation coverage $\mathcal{C}^{\texttt{c}}_i$ sets, then
\sublabon{equation}
\begin{eqnarray}
\underline{q}(\lambda_i) &=& 1-\exp\left(-\lambda_ip_i(\lambda_i)\beta^{\frac{2}{\alpha}}\psi\mathbb{E}\left[D_0^2\left(1-\mathcal{B}\left(h(t);\frac{2}{\alpha},1-\frac{2}{\alpha}\right)\right)\right]\right),\label{Eqn:LowBoundOutProbCIATwInterCan}\\
\overline{q}(\lambda_i) &=& 1-\exp\left(-2\lambda_ip_i(\lambda_i)\beta^{\frac{2}{\alpha}}\psi\mathbb{E}\left[D_0^2\left(1-\mathcal{B}\left(h(t);\frac{2}{\alpha},1-\frac{2}{\alpha}\right)\right)\right]\right),\label{Eqn:UpperBoundOutProbCIATwInterCan}
\end{eqnarray}
\sublaboff{equation}
where $h(t)$ is given by
\begin{equation}
h(t)=\left(\frac{1}{1+t\tilde{\beta}\beta}\right)\exp\left(-\pi\psi(\tilde{\beta}\beta t)^{\frac{2}{\alpha}}D_0^2\lambda_i\right).
\end{equation}
\end{corollary}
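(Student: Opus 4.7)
The plan is to obtain Corollary~\ref{Cor:BoundsOutProbInterCanDIAS} by specializing Theorem~\ref{Thm:BoundsOutProbInterCanDCIAS} to the DIAS setting and then re-running the single-exponential bounding used to prove Theorem~\ref{Thm:BoundsTCwDIAS}. First I would set $\Delta_c(\lambda_{ic})=0$, so that DCAS is off, $p_{ic}=p_i$, $\lambda_{ic}=\lambda_i$, and the conditioning $H_0 D_0^{-\alpha}\geq \Delta_c$ in the definition of the cancellation coverage becomes vacuous, collapsing $\mathcal{C}^{\texttt{c}}_{ic}$ to $\mathcal{C}^{\texttt{c}}_i$. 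Substituting $\Delta_c=0$ into the cancelable intensity $\lambda^{\texttt{c}}_{ic}$ supplied by Theorem~\ref{Thm:BoundsOutProbInterCanDCIAS} kills the $\exp(-\tilde{\beta}\Delta_c r^{\alpha})$ factor and yields
\begin{equation*}
\lambda^{\texttt{c}}_i(r) = \lambda_i\, \exp\!\left(-\pi\psi \tilde{\beta}^{2/\alpha} r^2 \lambda_i p_i(\lambda_i)\right)\mathbb{E}\!\left[\frac{D_0^{\alpha}}{D_0^{\alpha}+\tilde{\beta} r^{\alpha}}\right],
\end{equation*}
which is precisely the cancelable density attached to $\mathcal{C}^{\texttt{c}}_i$.

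Next, following the argument in the proof of Theorem~\ref{Thm:BoundsTCwDIAS}, I would write the outage probability after cancellation conditioned on $D_0$ in Laplace form, using $H_0\sim\mathrm{Exp}(1)$: $q(\lambda_i) = 1-\mathbb{E}_{D_0}[\mathcal{L}_{I^{\texttt{nc}}_0}(\beta D_0^{\alpha})]$, where $I^{\texttt{nc}}_0$ is the shot noise of the nonhomogeneous PPP $\Pi^{\texttt{nc}}_i$ with intensity $\lambda^{\texttt{nc}}_i(r)=\lambda_i-\lambda^{\texttt{c}}_i(r)$. The PPP Laplace exponent splits into a homogeneous $\lambda_i$-piece and a correction driven by $\lambda^{\texttt{c}}_i$. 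Applying the change of variable $r = D_0\sqrt[\alpha]{\beta t}$ to the correction integral converts it into the Beta mean functional $\mathcal{B}(h(t); 2/\alpha, 1-2/\alpha)$, with $h(t) = \lambda^{\texttt{c}}_i(D_0 \sqrt[\alpha]{\beta t})/\lambda_i$; under this substitution, $D_0^{\alpha}/(D_0^{\alpha}+\tilde{\beta} r^{\alpha})$ simplifies to $1/(1+\tilde{\beta}\beta t)$, reproducing exactly the $h(t)$ stated in the corollary. The net effect is that $\mathbb{E}[D_0^2]$ in the exponent of the DIAS outage expression is replaced by $\mathbb{E}[D_0^2(1-\mathcal{B}(h(t); 2/\alpha, 1-2/\alpha))]$.

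The factor-of-two gap between $\underline{q}(\lambda_i)$ and $\overline{q}(\lambda_i)$ in \eqref{Eqn:LowBoundOutProbCIATwInterCan}--\eqref{Eqn:UpperBoundOutProbCIATwInterCan} should then be inherited from the same two-sided envelope used in Theorem~\ref{Thm:BoundsTCwDIAS}, now applied to the shot noise of $\Pi^{\texttt{nc}}_i$ instead of $\Pi_i$. The main obstacle I anticipate is justifying that this envelope still goes through verbatim when the underlying density is the location-dependent $\lambda^{\texttt{nc}}_i(r)$ rather than the constant $\lambda_i$: one has to verify that the CCDF bounds of Theorem~\ref{Thm:BoundsCCDFinter}, combined with the Laplace-transform manipulations used for Theorem~\ref{Thm:BoundsTCwDIAS}, still compose to yield single-exponential upper and lower bounds with the same factor-of-two ratio after the $\lambda^{\texttt{c}}_i$-correction is absorbed into the $\mathcal{B}(\cdot)$ functional. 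Once this is checked, performing the replacement $\mathbb{E}[D_0^2]\mapsto \mathbb{E}[D_0^2(1-\mathcal{B}(h(t); 2/\alpha, 1-2/\alpha))]$ in both bounds of Theorem~\ref{Thm:BoundsTCwDIAS} yields Corollary~\ref{Cor:BoundsOutProbInterCanDIAS}.
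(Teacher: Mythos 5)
Your proposal follows essentially the same route as the paper: the paper's proof of this corollary is just part (ii) of the proof of Theorem \ref{Thm:BoundsOutProbInterCanDCIAS}, which likewise sets $\Delta_c=0$, subtracts the cancelable intensity $\lambda^{\texttt{c}}_i(r)$ from the dominant-interferer density, applies the change of variables $t=r^{\alpha}/(\beta D_0^{\alpha})$ to produce $\mathcal{B}\left(h(t);\tfrac{2}{\alpha},1-\tfrac{2}{\alpha}\right)$, and inherits the factor-of-two upper bound from the Jensen/union argument in the proof of Theorem \ref{Thm:BoundsTCwDIAS}. The only caveat is cosmetic: your $h(t)$ carries the extra factor $p_i(\lambda_i)$ in its exponent, inherited from the $\lambda^{\texttt{c}}_{ic}$ of Theorem \ref{Thm:BoundsOutProbInterCanDCIAS}, whereas the corollary (and the paper's part (ii)) writes $\lambda_i$ alone there --- an internal inconsistency of the paper rather than a flaw in your argument.
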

\begin{proof}
The proof is similar to the second part in the proof of Theorem \ref{Thm:BoundsTCwDCIAS}.
\end{proof}

Corollary \ref{Cor:BoundsOutProbInterCanDIAS} reveals that interference cancellation for DIAS can be viewed as reducing the dominant interferer coverage of a receiver or removing the cancelable part of $\lambda_ip_i(\lambda_i)$. For a given $D_0$, $1-\mathcal{B}(h(t);\frac{2}{\alpha}, 1-\frac{2}{\alpha})$ can be interpreted as the noncancelable fraction of density $\lambda_ip_i(\lambda_i)$ or the area of the dominant interferer coverage (i.e. $\mu(\mathcal{C}_0)$) is reduced by $(1-\mathcal{B}(h(t);\frac{2}{\alpha}, 1-\frac{2}{\alpha}))$-fold. That means, the larger $\mathcal{B}$, the more interference cancelled. So $h(t)$ should be larger in order to cancel more interference. Reducing the density can increase $h(t)$ so that the efficacy of interference cancellation in a sparse network is much better than in a dense network. This makes sense because the mean area of the interference cancellation coverage is very large due to small $I_0$ and more (cancelable) interferers are enclosed in the coverage area.

%\section{Conclusions}\label{Sec:Conclusion}
%[To be done later]

%----------------------------------Section of Appendices -------------------------------------------

\appendices
\section{Useful Propositions}
\subsection{\textbf{The Conservation Property of a Homogeneous PPP}}\label{App:ConserPropHomoPPP}
\begin{proposition}\label{Pro:ConProPPP}
Let $\mathbf{T}: \mathbb{R}^\mathrm{d_1} \rightarrow \mathbb{R}^\mathrm{d_2}$ be a linear map from dimension $d_1$ to dimension $d_2$, and it is represented by  a non-singular transformation matrix. If $\Pi$ is a homogeneous PPP of density $\lambda$, then $\mathbf{T}(\Pi)\defn\{\mathbf{T}X_i:X_i\in\Pi\}$ is also a homogeneous PPP  and its density is $\lambda/\sqrt{\det(\mathbf{T}^{\emph{\textsf{T}}}\mathbf{T})}$. Specifically, if $d_1=d_2=2$ and $\mathbf{T}=\sqrt{\kappa}\mathbf{I}_2$ where $\mathbf{I}_2$ is a $2\times 2$ identity matrix and $\kappa>0$,  then the density of $\mathbf{T}(\Pi)$ changes to $\lambda/\kappa$.
\end{proposition}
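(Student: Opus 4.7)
The plan is to invoke the mapping theorem for Poisson point processes (e.g., Kingman's theorem): if $\Pi$ is a Poisson point process on $\mathbb{R}^{d_1}$ with intensity measure $\mu$ and $\mathbf{T}: \mathbb{R}^{d_1}\to\mathbb{R}^{d_2}$ is a measurable mapping such that the pushforward $\mu\circ\mathbf{T}^{-1}$ is non-atomic and locally finite, then $\mathbf{T}(\Pi)=\{\mathbf{T}X_i:X_i\in\Pi\}$ is again a Poisson point process with intensity measure $\mu\circ\mathbf{T}^{-1}$. Since $\mathbf{T}$ is linear with nonsingular matrix (hence injective on $\mathbb{R}^{d_1}$), the image $\mathbf{T}(\Pi)$ lives on the $d_1$-dimensional linear subspace $\mathcal{V}=\mathbf{T}(\mathbb{R}^{d_1})\subseteq\mathbb{R}^{d_2}$, and the mapping theorem applies without issue.

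The second step is to compute the pushforward measure explicitly. For a homogeneous PPP of density $\lambda$ the intensity is $\mu(\dif x)=\lambda\,\dif x$ on $\mathbb{R}^{d_1}$. Given any bounded Borel set $A\subseteq\mathcal{V}$, the injective linear change of variables formula (the area formula from geometric measure theory applied to a linear map) gives
\begin{equation*}
\mathrm{Leb}_{d_1}\left(\mathbf{T}^{-1}(A)\right)=\frac{\mathcal{H}^{d_1}(A)}{\sqrt{\det(\mathbf{T}^{\textsf{T}}\mathbf{T})}},
\end{equation*}
where $\mathcal{H}^{d_1}$ is the $d_1$-dimensional Hausdorff measure on $\mathcal{V}$ (which coincides with Lebesgue measure on $\mathcal{V}$ once an orthonormal basis is fixed). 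Consequently $(\mu\circ\mathbf{T}^{-1})(A)=\lambda\cdot\mathcal{H}^{d_1}(A)/\sqrt{\det(\mathbf{T}^{\textsf{T}}\mathbf{T})}$, which is a constant multiple of the canonical $d_1$-dimensional volume measure on $\mathcal{V}$. This constant is exactly the claimed density $\lambda/\sqrt{\det(\mathbf{T}^{\textsf{T}}\mathbf{T})}$, establishing both that $\mathbf{T}(\Pi)$ is Poisson and that it is homogeneous.

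The only slightly delicate point is justifying the area formula for the rectangular case $d_1\ne d_2$: for a linear injection, one can argue it directly by factoring $\mathbf{T}=\mathbf{U}\boldsymbol{\Sigma}\mathbf{V}^{\textsf{T}}$ via the thin singular value decomposition, where $\mathbf{U}\in\mathbb{R}^{d_2\times d_1}$ has orthonormal columns, $\mathbf{V}\in\mathbb{R}^{d_1\times d_1}$ is orthogonal, and $\boldsymbol{\Sigma}$ is diagonal with positive entries $\sigma_1,\ldots,\sigma_{d_1}$. Then $\mathbf{T}^{\textsf{T}}\mathbf{T}=\mathbf{V}\boldsymbol{\Sigma}^2\mathbf{V}^{\textsf{T}}$ so $\sqrt{\det(\mathbf{T}^{\textsf{T}}\mathbf{T})}=\prod_i\sigma_i$, and each of the three factors of the SVD scales $d_1$-volume in the obvious way (orthogonal matrices preserve volume, diagonal stretches multiply it by $\prod\sigma_i$). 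This makes the change-of-variables computation elementary.

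Finally, I specialize to $d_1=d_2=2$ with $\mathbf{T}=\sqrt{\kappa}\,\mathbf{I}_2$: then $\mathbf{T}^{\textsf{T}}\mathbf{T}=\kappa\mathbf{I}_2$, so $\sqrt{\det(\mathbf{T}^{\textsf{T}}\mathbf{T})}=\kappa$, and the density of $\mathbf{T}(\Pi)$ equals $\lambda/\kappa$, matching the stated scalar reduction. The only genuine obstacle in the proof is the geometric-measure-theoretic volume scaling identity for rectangular linear maps; once this is in hand, everything else is an immediate application of the mapping theorem.
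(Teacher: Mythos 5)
Your proof is correct, but it takes a different route from the paper's. The paper argues via void probabilities: it writes down $\mathbb{P}[\Pi(\mathcal{A})=0]=\exp(-\lambda\mu_{d_1}(\mathcal{A}))$, observes that $\mathbb{P}[\mathbf{T}(\Pi(\mathcal{A}))=0]=\mathbb{P}[\Pi(\mathcal{A})=0]$, uses the identity $\mu_{d_2}(\mathbf{T}(\mathcal{A}))=\sqrt{\det(\mathbf{T}^{\textsf{T}}\mathbf{T})}\,\mu_{d_1}(\mathcal{A})$, and reads off the density $\lambda^{\dag}=\lambda/\sqrt{\det(\mathbf{T}^{\textsf{T}}\mathbf{T})}$ by matching exponents, invoking the fact that void probabilities characterize the process. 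You instead invoke the mapping theorem for Poisson processes and compute the pushforward intensity measure $\mu\circ\mathbf{T}^{-1}$ directly, justifying the volume-scaling identity through the thin SVD. The two arguments hinge on exactly the same linear-algebra fact (the $d_1$-dimensional volume of $\mathbf{T}(\mathcal{A})$ equals $\sqrt{\det(\mathbf{T}^{\textsf{T}}\mathbf{T})}$ times that of $\mathcal{A}$), but your route is somewhat more self-contained on the probabilistic side: the mapping theorem delivers the Poisson property of $\mathbf{T}(\Pi)$ as a conclusion, whereas the paper's void-probability argument identifies the density cleanly but leans on the (true, by R\'enyi-type characterizations, but unstated) step that an exponential void functional on all bounded Borel sets forces the image process to be Poisson and homogeneous. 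The paper's approach buys brevity and avoids citing the mapping theorem; yours buys a complete justification that the image is in fact a homogeneous PPP on the subspace $\mathbf{T}(\mathbb{R}^{d_1})$, at the cost of importing the area formula and the mapping theorem as external tools.
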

\begin{proof}
The conservation property of a homogeneous PPP for the case of $\mathrm{d}_1=\mathrm{d}_2$ can be found in \cite{DSWKJM96}. However, its formal proof is missing. For the convenience of reading this paper, we here provide a complete proof of a more general case of $\mathrm{d}_1\neq \mathrm{d}_2$. The void probability of a point process in a bounded Borel set $\mathcal{A}\subset \mathbb{R}^{\mathrm{d}_1}$ is the probability that $\mathcal{A}$ does not contain any points of the process. Since $\Pi$ is a homogeneous PPP, its void probability is given by
\begin{equation}\label{Eqn:VoidProb}
\mathbb{P}[\Pi(\mathcal{A})=0]=\exp(-\lambda \mu_{d_1}(\mathcal{A})),
\end{equation}
where $\mu_{d_1}(\cdot)$ is a $\mathrm{d}_1$-dimensional Lebesgue measure. Since the void probability completely characterizes the statistics of a PPP, we only need to show the void probability of $\mathbf{T}[\Pi(\mathcal{A})]$ is given by
\begin{equation}
\mathbb{P}[\mathbf{T}(\Pi(\mathcal{A}))=0]=\exp\left(-\lambda/\sqrt{\det(\mathbf{T}^{\textsf{T}}\mathbf{T})}\mu_{d_2}(\mathbf{T}(\mathcal{A}))\right).
\end{equation}

Recall the result from vector calculus that the absolute value of the determinant of real vectors is equal to the volume of the parallelepiped spanned by those vectors. Therefore, the $\mathrm{d}_2$-dimensional volume of $\mathbf{T}(\mathcal{A})$, $\mu_{d_2}(\mathbf{T}(\mathcal{A}))$, is given by $\sqrt{\det(\mathbf{T}^{\textsf{T}}\mathbf{T})}\mu_{d_1}(\mathcal{A})$. Suppose $\mathbf{T}(\Pi)$ has density $\lambda^{\dag}$ and its void probability within the volume of $\mathbf{T}(\mathcal{A})$ is
\begin{equation*}
\mathbb{P}[\mathbf{T}(\Pi(\mathcal{A}))=0]=\mathbb{P}[\Pi(\mathcal{A})=0]=\exp\left(-\lambda^{\dag}\sqrt{\det(\mathbf{T}^{\textsf{T}}\mathbf{T})}\mu_{d_1}(\mathcal{A})\right).
\end{equation*}
Then comparing the above equation with \eqref{Eqn:VoidProb}, it follows that $\lambda^{\dag}=\lambda/\sqrt{\det(\mathbf{T}^{\textsf{T}}\mathbf{T})}$. So if $d_1=d_2$ and $\mathbf{T}=\sqrt{\kappa}\mathbf{I}_2$, then $\lambda^{\dag}=\lambda/\sqrt{\det(\kappa \mathbf{I}_2)}=\frac{\lambda}{\kappa}$.
\end{proof}

\subsection{The Mean Lebesgue Measure of the $\Delta$-level Dominant Interferer Coverage of a Receiver}\label{App:D_LevelInterCover}
\begin{proposition}\label{Prop:D_LevelInterferCoverage}
Suppose the received signal at the reference receiver is greater than or equal to $\Delta\in\mathbb{R}_+$. The $\Delta$-level interferer coverage of the reference receiver is defined as
\begin{equation}\label{Eqn:DomIntCover}
\mathcal{C}_{\Delta} \defn \left\{X\in \mathbb{R}^2: \frac{H_0D_0^{-\alpha}}{\tilde{H} |X|^{-\alpha}}<\beta\bigg| H_0D^{-\alpha}_0\geq \Delta \right\},
\end{equation}
where $\{H_0,\tilde{H}\}$ are i.i.d. exponential random variables with unit mean and variance and $D>1$ is also a random variable independent of $\{H_0,\tilde{H}\}$.
For given $\tilde{H}$, $H_0$ and $D_0$, $\mathcal{C}_{\Delta}$ is the region in which any single interferer is able to cause outage at the reference receiver whose received signal is greater than or equal to $\Delta$. The mean Lebesgue measure of $C_{\Delta}$ denoted by $\mu(\mathcal{C}_{\Delta})$ can be shown as
\begin{equation}\label{Eqn:AreaD_LevelIntCov1}
\mu(\mathcal{C}_{\Delta}) = \beta^{\frac{2}{\alpha}}\,\psi\mathbb{E}\left[D_0^2 \mathcal{B}\left(e^{-\Delta D_0^{\alpha}t};\frac{2}{\alpha},1-\frac{2}{\alpha}\right)\right],
\end{equation}
where $\mathcal{B}(\cdot;x,y)$ is the Beta mean functional as defined in \eqref{Eqn:meanBeta} and $\psi=\pi\Gamma\left(1+\frac{2}{\alpha}\right)\Gamma\left(1-\frac{2}{\alpha}\right)$. If $\Delta=0$, then $\mathcal{B}\left(1;\frac{2}{\alpha},1-\frac{2}{\alpha}\right)=1$  and  $\mu(\mathcal{C}_{\Delta})=\mu(\mathcal{C}_0)=\beta^{\frac{2}{\alpha}}\,\psi\mathbb{E}[D_0^2]$. Also, if $\Delta\neq 0$, there exists a $\eta(\Delta)\in(0,1)$ which is a monotonically decreasing function of $\Delta$ such that
\begin{equation}\label{Eqn:AreaD_LevelIntCov2}
\mu(\mathcal{C}_{\Delta}) = \eta(\Delta)\, \mu(\mathcal{C}_0).
\end{equation}
\end{proposition}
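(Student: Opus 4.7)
The plan is to evaluate $\mu(\mathcal{C}_\Delta)$ by conditioning on a realization of $(H_0,\tilde{H},D_0)$, observing that the set $\mathcal{C}_\Delta$ is then a disc, and finally taking expectations using the independence of $\tilde{H}$ from $(H_0,D_0)$ and the conditional distribution of $H_0$ given $H_0\geq \Delta D_0^\alpha$. Concretely, I would first rearrange the inequality $H_0 D_0^{-\alpha}<\beta\tilde{H}|X|^{-\alpha}$ into $|X|<(\beta\tilde{H}D_0^{\alpha}/H_0)^{1/\alpha}$, so that for each realization the two-dimensional Lebesgue measure of $\mathcal{C}_\Delta$ equals $\pi\beta^{2/\alpha}\tilde{H}^{2/\alpha}D_0^2 H_0^{-2/\alpha}$. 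Since $\tilde{H}$ is $\mathrm{Exp}(1)$ and independent of everything else, averaging it out contributes $\mathbb{E}[\tilde{H}^{2/\alpha}]=\Gamma(1+2/\alpha)$.

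The substantive step is computing $\mathbb{E}[H_0^{-2/\alpha}\mid H_0\geq \Delta D_0^\alpha]$. I would use the Gamma-function integral representation
\begin{equation*}
h^{-2/\alpha}=\frac{1}{\Gamma(2/\alpha)}\int_0^\infty s^{2/\alpha-1}e^{-hs}\,\dif s
\end{equation*}
and then swap the order of integration by Fubini against the exponential tail $\int_{\Delta D_0^\alpha}^\infty e^{-h(1+s)}\dif h=e^{-\Delta D_0^\alpha(1+s)}/(1+s)$. The conditional normalisation $\mathbb{P}[H_0\geq \Delta D_0^\alpha\mid D_0]=e^{-\Delta D_0^\alpha}$ cleanly cancels the $e^{-\Delta D_0^\alpha}$ factor outside the integral, leaving
\begin{equation*}
\mathbb{E}\bigl[H_0^{-2/\alpha}\,\bigm|\,D_0,\,H_0\geq \Delta D_0^\alpha\bigr]=\frac{1}{\Gamma(2/\alpha)}\int_0^\infty \frac{t^{2/\alpha-1}\,e^{-\Delta D_0^\alpha t}}{1+t}\,\dif t.
\end{equation*}

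Combining the two averages and pulling out the $D_0^2$, I would use the simplification $\Gamma(1+2/\alpha)/\Gamma(2/\alpha)=2/\alpha$, and then rewrite the prefactor as $\pi\Gamma(1+2/\alpha)\Gamma(1-2/\alpha)/[\Gamma(2/\alpha)\Gamma(1-2/\alpha)]=\psi/[\Gamma(2/\alpha)\Gamma(1-2/\alpha)]$. Inserting this into the expectation and recognising the inner integral as the Beta mean functional $\mathcal{B}(e^{-\Delta D_0^\alpha t};2/\alpha,1-2/\alpha)$ defined in \eqref{Eqn:meanBeta} yields \eqref{Eqn:AreaD_LevelIntCov1}. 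The $\Delta=0$ case follows because the Beta integral $\int_0^\infty t^{2/\alpha-1}(1+t)^{-1}\dif t=\Gamma(2/\alpha)\Gamma(1-2/\alpha)$ reduces $\mathcal{B}(1;2/\alpha,1-2/\alpha)$ to $1$, recovering $\mu(\mathcal{C}_0)=\beta^{2/\alpha}\psi\mathbb{E}[D_0^2]$.

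For the final part \eqref{Eqn:AreaD_LevelIntCov2}, I would simply define $\eta(\Delta)\defn \mathbb{E}[D_0^2\,\mathcal{B}(e^{-\Delta D_0^\alpha t};2/\alpha,1-2/\alpha)]/\mathbb{E}[D_0^2]$. Because $e^{-\Delta D_0^\alpha t}$ is strictly decreasing in $\Delta$ for every $t>0$ (using $D_0\geq 1$) and the weight $t^{2/\alpha-1}/(1+t)$ is positive, dominated convergence gives monotonicity and the bounds $0<\eta(\Delta)<1$ for $\Delta>0$. The main obstacle I anticipate is purely bookkeeping: making sure the conditional normalisation of $H_0$ exactly cancels the $e^{-\Delta D_0^\alpha}$ produced by the Fubini step so that the Beta-mean form emerges cleanly, and not confusing the unconditional $\mathbb{E}[\cdot\,\mathbf{1}_{\{H_0\geq \Delta D_0^\alpha\}}]$ with the conditional expectation that the definition of $\mathcal{C}_\Delta$ actually calls for.
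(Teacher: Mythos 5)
Your proof is correct and reaches the paper's formula, but by a different ordering of the integrations. The paper never computes the area of the random disc directly: it writes $\mu(\mathcal{C}_{\Delta})=\int_{\mathbb{R}^2}\mathbb{P}\left[H_0D_0^{-\alpha}<\beta\tilde{H}|X|^{-\alpha}\,\big|\,H_0D_0^{-\alpha}\geq\Delta\right]\mu(\dif X)$, massages the conditional probability into the product $\mathbb{P}[\beta\tilde{H}|X|^{-\alpha}>\Delta]\,\mathbb{P}[H_0D_0^{-\alpha}<\beta\tilde{H}|X|^{-\alpha}]$ via the memorylessness of $H_0$, and only then performs the radial integral, whose change of variables produces the weight $t^{2/\alpha-1}e^{-\Delta D_0^{\alpha}t}/(1+t)$ directly. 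You instead fix a realization, observe that $\mathcal{C}_{\Delta}$ is a disc of area $\pi\beta^{2/\alpha}\tilde{H}^{2/\alpha}D_0^{2}H_0^{-2/\alpha}$, and average: $\mathbb{E}[\tilde{H}^{2/\alpha}]=\Gamma(1+2/\alpha)$ together with the truncated negative moment $\mathbb{E}[H_0^{-2/\alpha}\mid D_0,\,H_0\geq\Delta D_0^{\alpha}]$ evaluated through the Gamma integral representation of $h^{-2/\alpha}$ and Tonelli. Both routes land on $\frac{2\pi}{\alpha}\beta^{2/\alpha}\mathbb{E}\bigl[D_0^2\int_0^{\infty}t^{2/\alpha-1}e^{-\Delta D_0^{\alpha}t}(1+t)^{-1}\dif t\bigr]$; yours makes the geometry explicit and isolates exactly where the conditioning enters, at the cost of the extra integral-representation step, while the paper's keeps everything as inclusion probabilities of a point in the coverage. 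Your normalisation convention --- conditioning on $D_0$ first so that $e^{-\Delta D_0^{\alpha}}$ cancels and the outer expectation over $D_0$ remains unconditional --- is precisely what the stated formula \eqref{Eqn:AreaD_LevelIntCov1} requires, and you were right to flag that distinction. For the final claim, your definition $\eta(\Delta)\defn\mathbb{E}[D_0^2\,\mathcal{B}(e^{-\Delta D_0^{\alpha}t};\frac{2}{\alpha},1-\frac{2}{\alpha})]/\mathbb{E}[D_0^2]$ with monotonicity of the integrand in $\Delta$ is cleaner than the paper's treatment, which detours through separate upper and lower bounds on $\mu(\mathcal{C}_{\Delta})$ using $D_0\geq 1$; those bounds establish $\eta(\Delta)\in(0,1)$ but do not by themselves yield the claimed monotone decrease, whereas your argument gives both at once.
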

\begin{proof}
Since $\mathcal{C}_{\Delta}$ is a bounded set for given $\tilde{H}$, $H_0$ and $D_0$, its mean Lebesgue measure is finite and found as follows:
\begin{eqnarray*}
\mu(\mathcal{C}_{\Delta})&=& \mathbb{E}\left[\int_{\mathbb{R}^2} \mathds{1}\left(\frac{H_0D_0^{-\alpha}}{\tilde{H}|X|^{-\alpha}}<\beta \bigg|H_0D^{-\alpha}_0\geq\Delta\right) \mu(\dif X)\right]\\
&=& \int_{\mathbb{R}^2} \mathbb{P}\left[H_0D_0^{-\alpha}<\tilde{H}|X|^{-\alpha}\beta|H_0D^{-\alpha}_0\geq \Delta\right] \mu(\dif X)\\
&=&\int_{\mathbb{R}^2}  \frac{\mathbb{P}[\tilde{H}|X|^{-\alpha}\beta>\Delta]}{\mathbb{P}[H_0D^{-\alpha}_0>\Delta]}(\mathbb{P}[H_0D^{-\alpha}_0\leq\max\{\tilde{H}|X|^{-\alpha}\beta,\Delta\}]
-\mathbb{P}[H_0D^{-\alpha}_0\leq\Delta])\mu(\dif X)\\
&=& \int_{\mathbb{R}^2}\mathbb{P}[\tilde{H}|X|^{-\alpha}\beta>\Delta]\mathbb{P}[H_0D^{-\alpha}_0<\tilde{H}|X|^{-\alpha}\beta]\mu(\dif X)\\
%&=& 2\pi \int_0^{\infty} e^{-\frac{\Delta}{\beta} r^{\alpha}}\mathbb{E}\left[\frac{\beta D_0^{\alpha}}{\beta D_0^{\alpha}+r^{\alpha}}\right] r\dif r\\
&=& \frac{2\pi}{\alpha}\beta^{\frac{2}{\alpha}}\mathbb{E}\left[D^2_0\int_0^{\infty} \frac{t^{\frac{2}{\alpha}-1}}{(1+t)e^{\Delta D^{\alpha}_0 t}}\dif t \right]=\beta^{\frac{2}{\alpha}}\,\psi\mathbb{E}\left[D_0^2 \mathcal{B}\left(e^{-\Delta D_0^{\alpha}t};\frac{2}{\alpha},1-\frac{2}{\alpha}\right)\right].
\end{eqnarray*}
So  if $\Delta=0$, then it follows that $\mu(\mathcal{C}_{\Delta})=\mu(\mathcal{C}_0)=\frac{2\pi}{\alpha}\beta^{\frac{2}{\alpha}}\mathbb{E}[D_0^2]\int_0^{\infty} \frac{t^{\frac{2}{\alpha}-1}}{1+t}\dif t$. Since the integral $\int_0^{\infty} \frac{t^{\frac{2}{\alpha}-1}}{1+t}\dif t$ is equal to  $\Gamma\left(\frac{2}{\alpha}\right)\Gamma\left(1-\frac{2}{\alpha}\right)$. Thus, $\mu(\mathcal{C}_0)=\beta^{\frac{2}{\alpha}}\,\psi\mathbb{E}[D_0^2]$.

Using the assumption $D_0\geq 1$, the upper bound on $\mu(\mathcal{C}_{\Delta})$ is given by
\begin{eqnarray*}
\mu(\mathcal{C}_{\Delta})&\leq& \beta^{\frac{2}{\alpha}}\,\psi\mathbb{E}[D^2_0]\mathcal{B}\left(e^{-\Delta t};\frac{2}{\alpha},1-\frac{2}{\alpha}\right)=\mathcal{B}\left(e^{-\Delta t};\frac{2}{\alpha},1-\frac{2}{\alpha}\right)\mu(\mathcal{C}_0),
\end{eqnarray*}
and the lower bound on $\mu(\mathcal{C}_{\Delta})$ can be shown as follows as
\begin{eqnarray*}
\mu(\mathcal{C}_{\Delta}) &=&  \frac{2\pi}{\alpha}\beta^{\frac{2}{\alpha}}\mathbb{E}\left[D^2_0\int_0^{\infty} \frac{u^{\frac{2}{\alpha}-1}}{(1+u)e^{\Delta D^{\alpha}_0 u}}\dif u \right]\\
&=& \frac{2\pi}{\alpha}\beta^{\frac{2}{\alpha}}\mathbb{E}\left[\int^{\infty}_0 \frac{t^{\frac{2}{\alpha}-1}D^{\alpha}_0}{(D^{\alpha}_0+t)e^{\Delta t}} \dif t \right]\\
&\geq& \frac{2\pi}{\alpha}\beta^{\frac{2}{\alpha}} \int^{\infty}_0 \frac{t^{\frac{2}{\alpha}-1}}{(1+t)e^{\Delta t}} \dif t = \mathcal{B}\left(e^{-\Delta t};\frac{2}{\alpha},1-\frac{2}{\alpha}\right)\frac{\mu(\mathcal{C}_0)}{\mathbb{E}[D^2_0]}.
\end{eqnarray*}
Since $\mathcal{B}(e^{-\Delta t};x,y)\leq 1$ for all $x,y>0$ and $\mathbb{E}[D^2_0]\geq 1$, there exists a constant $\eta(\Delta)\in(0,1)$ depending on $\Delta\neq 0$ such that $\mu(\mathcal{C}_{\Delta})=\eta(\Delta)\mu(\mathcal{C}_0)$.
\end{proof}

\section{Proofs of Theorems and Lemmas}

\subsection{Proof of Theorem \ref{Thm:BoundsCCDFinter}}\label{App:BoundsCCDFInte}
Let $\hat{\Pi}_{\texttt{n}}(x)\defn\{X_j\in\Pi_{\texttt{n}}: \tilde{H}_j|X_j|^{-\alpha}\geq x, \forall j\in\mathbb{N}_+\}$. So each point node in $\hat{\Pi}_{\texttt{n}}$ can generate a shot noise at the origin which is greater than or equal to $x\in\mathbb{R}_+$. Let $\mathcal{C}_{\hat{\Pi}_{\texttt{n}}}(x)\defn\{X\in\mathbb{R}^2: \tilde{H}|X|^{-\alpha}\geq x\}$ and it is the coverage of $\hat{\Pi}_{\texttt{n}}(x)$. The probability that there is at least one point node in $\Pi_{\texttt{n}}(x)\cap\mathcal{C}_{\hat{\Pi}_{\texttt{n}}}(x)$ is equal to the probability of $\Pi_{\texttt{n}}(x)\cap\mathcal{C}_{\hat{\Pi}_{\texttt{n}}}(x)\neq\emptyset$, and thus it is calculated as follows
\begin{eqnarray*}
\mathbb{P}\left[\Pi_{\texttt{n}}(x)\cap\mathcal{C}_{\hat{\Pi}_{\texttt{n}}}(x)\neq\emptyset\right] &=& 1-\exp\left(-\int_{\mathbb{R}^2} \lambda_{\texttt{n}}(|X|)\mathbb{P}[\tilde{H}|X|^{-\alpha}\geq x]\mu(\dif X)\right)\\
&=& 1-\exp\left(-2\pi\int^{\infty}_0\lambda_{\texttt{n}}(r) r e^{-r^{\alpha}x} \dif r\right)\\
&=& 1-\exp\left(-\frac{2\pi}{\alpha x^{\frac{2}{\alpha}}}\int^{\infty}_0 \lambda_{\texttt{n}}\left(\sqrt[\alpha]{\frac{u}{x}}\right) u^{\frac{2}{\alpha}-1} e^{-u} \dif u\right)\\
&=& 1-\exp(-A(x)).
\end{eqnarray*}
Since $F^{\textsf{c}}_{I_{\texttt{n}}}(x)$ is greater than or equal to the CCDF of the shot-noise $\hat{I}_{\texttt{n}}$ generated by $\hat{\Pi}_{\texttt{n}}(x)$, the probability of $(\Pi_{\texttt{n}}(x)\cap\mathcal{C}_{\hat{\Pi}_{\texttt{n}}}(x)\neq\emptyset)$ is equal to $\mathbb{P}[\hat{I}_{\texttt{n}}\geq x]$ and thus it is bounded above by $F^{\textsf{c}}_{I_{\texttt{n}}}(x)$. Therefore, $\mathbb{P}\left[\Pi_{\texttt{n}}(x)\cap\mathcal{C}_{\hat{\Pi}^{\texttt{}}}(x)\neq\emptyset\right]$ is a lower bound of $F^{\textsf{c}}_{I_{\texttt{n}}}(x)$.

Suppose the complement of $\mathcal{C}_{\hat{\Pi}_{\texttt{n}}}(x)$ is $\mathcal{C}^{\textsf{c}}_{\hat{\Pi}_{\texttt{n}}}(x)=\mathbb{R}^2\setminus\mathcal{C}_{\hat{\Pi}_{\texttt{n}}}(x)$ and $\tilde{I}_{\texttt{n}}$ is the shot-noise process generated by $\Pi_{\texttt{n}}\cap\mathcal{C}^{\textsf{c}}_{\hat{\Pi}_{\texttt{n}}}(x)$. $F^{\textsf{c}}_{I_{\texttt{n}}}(x)$ is upper bounded by $\mathbb{P}[\hat{I}_{\texttt{n}}\geq x,\tilde{I}_{\texttt{n}}\geq x]$ which is found in the following:
\begin{eqnarray}
\mathbb{P}[\hat{I}_{\texttt{n}}\geq x,\tilde{I}_{\texttt{n}}\geq x]&=&\mathbb{P}[\hat{I}_{\texttt{n}}\geq x]+\mathbb{P}[\tilde{I}_{\texttt{n}}\geq x]-\mathbb{P}[\hat{I}_{\texttt{n}}\geq x]\mathbb{P}[\tilde{I}_{\texttt{n}}\geq x]\nonumber\\
&=& 1-\left(1-F^{\textsf{c}}_{\tilde{I}_{\texttt{n}}}(x)\right)e^{-A(x)}.\label{Eqn:CCDFI1I2}
\end{eqnarray}
The variance of $\tilde{I}_{\texttt{n}}$ can be found and upper bounded as follows
\begin{eqnarray*}
\mathrm{Var}(\tilde{I}_{\texttt{n}})&=& 2\pi\int^{\infty}_0 r^{1-2\alpha}\lambda_{\texttt{n}}(r)\mathbb{P}[\tilde{H}<r^{\alpha}x] \dif r\\
&=& 2\pi\int^{\infty}_0 r^{1-2\alpha}\lambda_{\texttt{n}}(r) e^{-xr^{\alpha}}\left(e^{x r^{\alpha}}-1\right)  \dif r \\
&\leq& \frac{2\pi}{\alpha-1} x^2\int^{\infty}_0 r\lambda_{\texttt{n}}(r)e^{-xr^{\alpha}} \dif r = \frac{x^2}{\alpha-1}A(x).
\end{eqnarray*}
Similarly the mean of $\tilde{I}_{\texttt{n}}$ can be upper bounded as $\mathbb{E}[\tilde{I}_{\texttt{n}}]\leq x\frac{A(x)}{\alpha-1}$. Then using Chebyshev's inequality, the upper bound on the CCDF of $\tilde{I}_{\texttt{n}}$ is found as follows
\begin{equation*}
F^{\textsf{c}}_{\tilde{I}_{\texttt{n}}}(x)\leq \frac{\mathrm{Var}(\tilde{I}_{\texttt{n}})}{\left(x-\mathbb{E}[\tilde{I}_{\texttt{n}}]\right)^2}\leq \frac{(\alpha-1)A(x)}{[(\alpha-1)-A(x)]^2}.
\end{equation*}
Substituting the above result into \eqref{Eqn:CCDFI1I2}, the upper bound in \eqref{Eqn:BoundsInterfCCDF} can be attained.

\subsection{Proof of Theorem \ref{Thm:BoundsTCwDCAS}}\label{App:BoundsTCwDCAS}
The maximum contention density for DCAS is given by
\begin{equation*}
\bar{\lambda}_{\epsilon} = \sup\{\lambda_c: q(\lambda_c)\leq \epsilon\},
\end{equation*}
where $q(\lambda_c)=\mathbb{P}[\mathrm{SIR}(\lambda_c)<\beta|H_0 D_0^{-\alpha}\geq \Delta_c(\lambda_c)]$ is the outage probability with DCAS. For Rayleigh fading, the lower bound on $q(\lambda_c)$ can be found as
\begin{eqnarray*}
q(\lambda_c) &=& \mathbb{P}[\Delta_c(\lambda_c)\leq H_0 D_0^{-\alpha}< I_0 \beta ]\frac{\mathbb{P}[I_0\beta > \Delta_c(\lambda_c)]}{\mathbb{P}[H_0D_0^{-\alpha}>\Delta_c(\lambda_c)]} \\
&=& \frac{\mathbb{P}[I_0\beta > \Delta_c(\lambda_c)]}{\mathbb{P}[H_0D_0^{-\alpha}>\Delta_c(\lambda_c)]}(\mathbb{P}[H_0D^{-\alpha}_0\leq \max\{I_0\beta,\Delta_c(\lambda_c)\} ]-\mathbb{P}[H_0D^{-\alpha}_0\leq\Delta_c(\lambda_c)]) \\
&=& \mathbb{P}[I_0\beta > \Delta_c(\lambda_c)]\mathbb{P}[H_0D_0^{-\alpha}< \beta I_0]\\
&\stackrel{(a)}{=}& \mathbb{P}[I_0 > \Delta_c(\lambda_c)/\beta]\left\{1-\mathbb{E}\left[\exp\left(-\beta^{\frac{2}{\alpha}}\psi\lambda_c D_0^2\right)\right]\right\}\\
&\stackrel{(b)}{\geq}& \left(1-e^{-A_c}\right) \left(1-B_c\right)=\underline{q}(\lambda_c),
\end{eqnarray*}
where $(a)$ follows from the result in \cite{FBBBPM06} and $(b)$ follows from the lower bound in Theorem \ref{Thm:BoundsCCDFinter}.  Similarly, using the upper bound in Lemma \ref{Thm:BoundsCCDFinter}, the upper bound on $q(\lambda_c)$ is given by
\begin{eqnarray*}
q(\lambda_c) \leq \left[1-\left(1-\frac{(\alpha-1)A_c}{[(\alpha-1)-A_c]^2}\right)^+ e^{-A_c}\right]\left(1-B_c\right)= \overline{q}(\lambda_c).
\end{eqnarray*}

\subsection{Proof of Lemma \ref{Lem:CondProbIntf}}\label{App:CondProbIntf}

The conditional probability in \eqref{Eqn:CondProbIntf} can be explicitly written as follows:
\begin{eqnarray*}
\mathbb{P}[\tilde{H}_j \tilde{D}_j^{-\alpha}\leq \rho_*|\tilde{H}_*\tilde{D}^{-\alpha}_*\leq \rho_*]=\frac{\mathbb{P}[\max\{\tilde{H}_*\tilde{D}^{-\alpha}_*,\tilde{H}_j\tilde{D}_j^{-\alpha}\}\leq \rho_*]}{\mathbb{P}[\tilde{H}_*\tilde{D}^{-\alpha}_*\leq \rho_*]}.
\end{eqnarray*}
Also, we know
\begin{eqnarray*}
\mathbb{P}[\max\{\tilde{H}_*\tilde{D}^{-\alpha}_*,\tilde{H}_j \tilde{D}_j^{-\alpha}\}\leq \rho_*]&=&\mathbb{P}[\tilde{H}_j \tilde{D}_j^{-\alpha}\leq \rho_*]\mathbb{P}[\tilde{H}_*\tilde{D}^{-\alpha}_*\leq \tilde{H}_j\tilde{D}_j^{-\alpha}]\\
&&+\mathbb{P}[\tilde{H}_* \tilde{D}_*^{-\alpha}\leq \rho_*]\mathbb{P}[\tilde{H}_*\tilde{D}^{-\alpha}_*\geq \tilde{H}_j\tilde{D}_j^{-\alpha}],
\end{eqnarray*}
where
\begin{eqnarray*}
\mathbb{P}[\tilde{H}_*\tilde{D}^{-\alpha}_*\leq \tilde{H}_j\tilde{D}_j^{-\alpha}]&=& 1-\mathbb{E}[e^{-\tilde{H}_j\tilde{D}_*^{\alpha}/\tilde{D}_j^{\alpha}}]=
\mathbb{E}\left[\frac{\tilde{D}_*^{\alpha}}{\tilde{D}_*^{\alpha}+\tilde{D}_j^{\alpha}}\right]\\
\mathbb{P}[\tilde{H}_*\tilde{D}^{-\alpha}_*\geq \tilde{H}_j\tilde{D}_j^{-\alpha}]&=& \mathbb{E}[e^{-\tilde{H}_j\tilde{D}_*^{\alpha}/\tilde{D}_j^{\alpha}}] = \mathbb{E}\left[\frac{\tilde{D}_j^{\alpha}}{\tilde{D}_*^{\alpha}+\tilde{D}_j^{\alpha}}\right].
\end{eqnarray*}
Therefore, we can have the following result:
\begin{eqnarray*}
\mathbb{P}[\max\{\tilde{H}_*\tilde{D}^{-\alpha}_*,\tilde{H}_j \tilde{D}_j^{-\alpha}\}\leq \rho_*]= 1-\mathbb{E}[e^{-\tilde{D}_*^{\alpha}\rho_*}]\left(\mathbb{E}\left[\frac{\tilde{D}_j^{\alpha}}{\tilde{D}_*^{\alpha}+\tilde{D}_j^{\alpha}}\right]
+\mathbb{E}\left[\frac{\tilde{D}_*^{\alpha}}{\tilde{D}_*^{\alpha}+\tilde{D}_j^{\alpha}}\right]\frac{\mathbb{E}[e^{-\tilde{D}_j^{\alpha}\rho_*}]}{\mathbb{E}[e^{-\tilde{D}_*^{\alpha}\rho_*}]}\right).
\end{eqnarray*}
It follows that
\begin{eqnarray*}
\mathbb{P}[\tilde{H}_j |X_0-Y_j|^{-\alpha}\leq \rho_*|\tilde{H}_*\tilde{D}^{-\alpha}_*\leq \rho_*] = \frac{1-\mathbb{E}[e^{-D^{\alpha}_*\rho_*}]\left(\mathbb{E}\left[\frac{\tilde{D}_j^{\alpha}}{\tilde{D}_*^{\alpha}+\tilde{D}_j^{\alpha}}\right]
+\mathbb{E}\left[\frac{\tilde{D}_*^{\alpha}}{\tilde{D}_*^{\alpha}+\tilde{D}_j^{\alpha}}\right]\frac{\mathbb{E}[e^{-\tilde{D}_j^{\alpha}\rho_*}]}
{\mathbb{E}[e^{-\tilde{D}_*^{\alpha}\rho_*}]}\right)}{1-\mathbb{E}[e^{-D^{\alpha}_*\rho_*}]}.
\end{eqnarray*}
As $\rho_*\rightarrow 0$, it is easy to observe that $\mathbb{P}[\tilde{H}_j |X_0-Y_j|^{-\alpha}\leq \rho_*|\tilde{H}_*\tilde{D}^{-\alpha}_*\leq \rho_*]$ approaches to 1, which completes the proof.

\subsection{Proof of Theorem \ref{Thm:BoundsTCwDIAS}}\label{App:BoundsTCwIAS}
For the DIAS technique, its outage probability can be written as
\begin{equation*}
 q(\lambda_i)=\mathbb{P}\left[\frac{H_0D_0^{-\alpha}}{\sum_{X_j\in\Pi_i} \tilde{H}_j |X_j|^{-\alpha}} <\beta \bigg| \tilde{H}_{j*}|X_j-X_{j*}|^{-\alpha}<\Delta_i(\lambda_i)\right].
\end{equation*}
Since the closed-form solution for the above outage probability is unable to be obtained, we resort to finding its upper and lower bounds. Let $\hat{\Pi}_i$ be the following Poisson point process:
\begin{equation*}
\hat{\Pi}_i \defn \left\{X_k \in\Pi_i: T_k = \mathds{1}\left(\frac{H_0D_0^{-\alpha}}{\tilde{H}_k|X_k|^{-\alpha}}<\beta \right),\forall k\in\mathbb{N}_+\right\},
\end{equation*}
which means any single transmitter in $\hat{\Pi}_i$ is able to cause outage at the reference receiver. Since the Laplace functional of a PPP completely characterizes the distribution of the PPP, we can find the Laplace functional of $\hat{\Pi}_i$ and it can render us the density $\hat{\lambda}_i$ of $\hat{\Pi}_i$. The Laplace functional of $\hat{\Pi}_i$ for a nonnegative function $\phi:\mathbb{R}^2\rightarrow\mathbb{R}_+$ is defined and shown as follows\cite{FBBB10}:
\begin{equation*}
\mathcal{L}_{\hat{\Pi}_i}(\phi)\defn \mathbb{E}\left[e^{-\int_{\mathbb{R}^2}\phi(X) \hat{\Pi}_i(\dif X)}\right] = \exp\left(-\int_{\mathbb{R}^2}\hat{\lambda}_i \left[1-e^{-\phi(X)}\right]\mu(\dif X)\right),
\end{equation*}
where $\mu(\mathcal{A})$ denotes the Lebesgue measure of a bounded set $\mathcal{A}$. The Laplace functional of $\hat{\Pi}_i$ for $\phi(X)=\hat{\phi}(X)\mathds{1}(X\in\hat{\Pi}_i)$ is
\begin{eqnarray*}
  \mathcal{L}_{\hat{\Pi}_i}(\phi) &=& e^{-\lambda_i\mu(\mathcal{A}) }\sum_{k=0}^{\infty}\frac{\lambda^{k}_i}{k!}\int_{\mathcal{A}}\cdots\int_{\mathcal{A}}\prod_{j=1}^{k}(e^{-\phi(X_j)}\mathbb{P}[X_j\in\hat{\Pi}_i]+\mathbb{P}[X_j\notin\hat{\Pi}_i]) \mu(\dif X_1)\cdots \mu(\dif X_k)\\
   &=& e^{-\lambda_i\mu(\mathcal{A})}\sum_{k=0}^{\infty} \frac{-1}{k!}\left(\int_{\mathcal{A}} \{(1-e^{-\hat{\phi}(X)})\mathbb{P}[X\in\hat{\Pi}_i]-1\}\lambda_i\mu(\dif X) \right)^{k}\\
   &=&  \exp\left(-\int_{\mathcal{A}}(1-e^{-\phi(X)})\mathbb{P}\left[\frac{H_0D_0^{-\alpha}}{\tilde{H}|X|^{-\alpha}}<\beta, \tilde{H}_*\tilde{D}^{-\alpha}_*<\Delta_i(\lambda_i)\right] \lambda_i\mu(\dif X)\right).
\end{eqnarray*}
Therefore, we have $\hat{\lambda}_i(r)=\lambda_i\mathbb{P}[H_0D_0^{-\alpha}< \beta \tilde{H} r^{-\alpha},\tilde{H}_*\tilde{D}^{-\alpha}_*<\Delta_i(\lambda_i)]$. Since $H_0$, $\tilde{H}_*$ and $\tilde{H}$ are independent, it follows that
\begin{eqnarray*}
\hat{\lambda}_i(r) = \lambda_i\,p_i(\lambda_i)\,\mathbb{P}[H_0D_0^{-\alpha}<\beta \tilde{H} r^{-\alpha}]= \lambda_i\,p_i(\lambda_i)\,\mathbb{E}\left[\frac{\beta D_0^{\alpha}}{\beta D_0^{\alpha}+r^{\alpha}}\right].
\end{eqnarray*}
Let $\mathcal{E}(\hat{\Pi}_i)$ be the outage event caused by $\hat{\Pi}_i$ and its probability is the lower bound on the outage probability because $\mathcal{E}(\hat{\Pi}_i)\subseteq \mathcal{E}(\Pi_i)$. So $\mathbb{P}[\mathcal{E}(\hat{\Pi}_i)]$ can be characterized by the probability that there is at least one transmitter in $\tilde{\Pi}_i$, i.e. $\hat{\Pi}_i(\mathbb{R}^2)\neq\emptyset$. That is,
\begin{eqnarray*}
 \mathbb{P}[\mathcal{E}(\hat{\Pi}_i)]&=& 1-\exp\left(-2\pi \int^{\infty}_0\hat{\lambda}_i(r) r \dif r\right) \\
 &=& 1-\exp\left(-2\pi\lambda_i\,p_i(\lambda_i)\int^{\infty}_0 \mathbb{E}\left[\frac{\beta D_0^{\alpha}}{\beta D_0^{\alpha}+r^{\alpha}}\right] r\dif r\right)\\
 &=& 1-\exp\left(-\lambda_i\beta^{\frac{2}{\alpha}}\psi\mathbb{E}[D_0^2]p_i(\lambda_i)\right)=\underline{q}(\lambda_i).
\end{eqnarray*}
Note that $\beta^{\frac{2}{\alpha}}\psi\mathbb{E}[D_0^2]$ is the mean area of the dominant interferer coverage, as defined in Proposition \ref{Prop:D_LevelInterferCoverage}. Since $\underline{q}(\lambda_i)\leq \epsilon$ and $\overline{\lambda}_i\defn\sup\{\lambda_i: \underline{q}(\lambda_i)\leq \epsilon\}$, we have $\overline{\lambda}_i p_i(\overline{\lambda}_i)= \frac{-\ln(1-\epsilon)}{\beta^{\frac{2}{\alpha}}\psi\mathbb{E}[D_0^2]}$.

Suppose the outage event caused by $\Pi_i\setminus\hat{\Pi}_i$ is denoted by $\mathcal{E}^{\textsf{c}}(\hat{\Pi}_i)$. The upper bound on $q(\lambda_i)$ can be found as follows
\begin{eqnarray*}
q(\lambda_i)\leq \mathbb{P}[\mathcal{E}(\hat{\Pi}_i) \cup \mathcal{E}^{\textsf{c}}(\hat{\Pi}_i)]&=& \mathbb{P}[\mathcal{E}(\hat{\Pi}_i)]+\mathbb{P}[\mathcal{E}^{\textsf{c}}(\hat{\Pi}_i)]-\mathbb{P}[\mathcal{E}(\hat{\Pi}_i)]\mathbb{P}[\mathcal{E}^{\textsf{c}}(\tilde{\Pi}_i)]\\
&=& 1-(1-\underline{q}(\lambda_i,\Delta_i(\lambda_i)))(1- \mathbb{P}[\mathcal{E}^{\textsf{c}}(\hat{\Pi}_i)]),
\end{eqnarray*}
where $ \mathbb{P}[\mathcal{E}^{\textsf{c}}(\hat{\Pi}_i)]=\mathbb{P}[HD^{-\alpha}<\beta I^{\textsf{c}}_0]$ and $I^{\textsf{c}}_0$ is the interference generated by $\Pi_i\setminus\hat{\Pi}_i$. The density of $\Pi_i\setminus\hat{\Pi}_i$ is $\lambda_i-\hat{\lambda}_i$, and for $\hat{\lambda}_i>0$, $\mathbb{P}[\mathcal{E}^{\textsf{c}}(\hat{\Pi}_i)]$ can be calculated by
\begin{eqnarray*}
\mathbb{P}[\mathcal{E}^{\textsf{c}}(\hat{\Pi}_i)] = 1-\mathbb{E}\left[e^{-\beta D_0^{\alpha}I^{\textsf{c}}_0}\right]
&\stackrel{(a)}{\leq}&  1-\exp(-\beta \mathbb{E}[D_0^{\alpha}I^{\textsf{c}}_0])\\
&\stackrel{(b)}{=}& 1-\exp\left(-2\pi\lambda_i \beta \mathbb{E}\left[D_0^{\alpha}\right]p_i(\lambda_i) \int_0^{\infty}\mathbb{P}[H_0D_0^{-\alpha}\geq\beta \tilde{H}r^{-\alpha}] r^{1-\alpha}\dif r\right)\\
&=& 1-\exp\left(-2\pi\lambda_i p_i(\lambda_i)\mathbb{E}\left[\int_0^{\infty}\left(\frac{\beta r}{r^{\alpha}+\beta D_0^{\alpha}}\right) \dif r \right]\right)\\
&=& 1-\exp\left\{-\lambda_i\beta^{\frac{2}{\alpha}}\psi\,p_i(\lambda_i)\mathbb{E}\left[D_0^2\right]\right\} ,
\end{eqnarray*}
where $(a)$ follows from Jensen's inequality because $e^{-x}$ is convex for $x>0$ and $(b)$ follows from the Campbell theorem\cite{DSWKJM96} by conditioning on $D$. Thus, it follows that
\begin{eqnarray*}
q(\lambda_i)\leq 1-\exp\left(-2\lambda_i\,p_i(\lambda_i)\,\beta^{\frac{2}{\alpha}}\psi\mathbb{E}\left[D_0^2\right]\right)=\overline{q}(\lambda_i),
\end{eqnarray*}
which renders us the lower bound on $\bar{\lambda}_{\epsilon}$. That is, $\underline{\lambda}_i=\sup\{\lambda: \overline{q}(\lambda_i)\leq \epsilon\}$ and thus we have.
\begin{equation*}
\underline{\lambda}_i\,p_i(\lambda_i)=\frac{-\ln(1-\epsilon)}{2\beta^{\frac{2}{\alpha}}\,\psi\mathbb{E}\left[D_0^2\right]}.
\end{equation*}

If $\Delta_i(\lambda_i)=\rho \lambda^{\frac{2}{\alpha}}_i$, then $p_i$ does not depend on $\lambda_i$ and is a monotonically decreasing function of $\rho$. Hence, there must exist a sufficiently small $\rho$ such that $p_i$ is less than $\frac{1}{2}$. In this case, $\underline{\lambda}_i>\frac{-\ln(1-\epsilon)}{\beta^{\frac{2}{\alpha}}\,\psi\mathbb{E}\left[D_0^2\right]}$, which indicates the scaling result in \eqref{Eqn:ScalTCwDIAS}. This completes the proof.

\subsection{Proof of Theorem \ref{Thm:BoundsTCwDCIAS}}\label{App:BoundsTCwDCIAS}
The transmitters using DICAS are a homogeneous PPP which is given by
\begin{equation*}
\Pi_{ic} = \{X_j\in\Pi_t: T_j=\mathds{1}(H_jD_j^{-\alpha}\geq \Delta_c(\lambda_{ic}),\tilde{H}_{j*}\tilde{D}_{j*}^{-\alpha}\leq \Delta_i(\lambda_{ic})),\forall j\in\mathbb{N}_+\},
\end{equation*}
where $\tilde{D}_{j*}\defn |X_j-Y_{j*}|$. Using the result in the proof of Theorem \ref{Thm:BoundsTCwDCAS}, the outage probability of each pair in $\Pi_{ic}$ can be written as follows:
\begin{eqnarray*}
q(\lambda_{ic})&=&\mathbb{P}[\mathrm{SIR}(\lambda_{ic})<\beta|H_0D_0^{-\alpha}\geq \Delta_c(\lambda_{ic}), \tilde{H}_{j*}\tilde{D}^{-\alpha}_{j*}\leq\Delta_i(\lambda_{ic})]\\
&=&\frac{\mathbb{P}[\beta I_0\geq \Delta_c(\lambda_{ic})|\tilde{H}_{j*}\tilde{D}^{-\alpha}_{j*}\leq\Delta_i(\lambda_{ic})]}{\mathbb{P}[H_0D_0^{-\alpha}\geq \Delta_c(\lambda_{ic})]}\mathbb{P}[\Delta_c(\lambda_{ic})\leq H_0D^{-\alpha}_0<\beta I_0|\tilde{H}_{j*}\tilde{D}^{-\alpha}_{j*}\leq\Delta_i(\lambda_{ic})].
\end{eqnarray*}
Now, consider the transmitters in $\Pi_{ic}$ are able to independently generate the interference at the reference receiver that is greater or equal to $\Delta_c(\lambda_{ic})/\beta$, i.e. the point process formed by them can be written as
\begin{equation}
\hat{\Pi}_{ic} = \{X_j\in\Pi_{ic} : \tilde{H}_j|X_j|^{-\alpha}\geq \Delta_c(\lambda_{ic})/\beta\}.
\end{equation}
According to the result in the proof of Theorem \ref{Thm:BoundsTCwDIAS}, we can infer that $\hat{\Pi}_{ic}$ is a nonhomogeneous PPP with the following density:
\begin{eqnarray}
\hat{\lambda}_{ic}(r) &=& \lambda_{ic}\mathbb{P}[\Delta_c(\lambda_{ic})/\beta\leq \tilde{H} r^{-\alpha}]\mathbb{P}[\tilde{H}_*\tilde{D}^{-\alpha}_*\leq \Delta_i(\lambda_{ic})]\nonumber\\
&=&\lambda_{ic}e^{-r^{\alpha}\Delta_c(\lambda_{ic})/\beta}\left(1-\mathbb{E}[e^{-\tilde{D}_*^{\alpha}\Delta_i(\lambda_{ic})}]\right).
\end{eqnarray}
Thus, the lower bound on the probability $\mathbb{P}[I_0\geq\Delta_c(\lambda_{ic})/\beta]$ can be calculated by $1-\exp(-2\pi\int_0^{\infty}\hat{\lambda}_{ic}(r) r\dif r)$, which can be carried out as follows
\begin{equation*}
\mathbb{P}[I_0\geq\Delta_c(\lambda_{ic})/\beta]\geq 1-\exp\left[-\lambda_{ic}\beta^{\frac{2}{\alpha}} A(\Delta_c(\lambda_{ic}))(1-\mathbb{E}[e^{-\tilde{D}^{\alpha}_*\Delta_i(\lambda_{ic})}])\right].
\end{equation*}
Moreover, the probability $\mathbb{P}[\Delta_c(\lambda_{ic})\leq H_0D^{-\alpha}_0<\beta I_0|\tilde{H}_{j*}\tilde{D}^{-\alpha}_{j*}\leq\Delta_i(\lambda_{ic})]$ can be further written as
\begin{eqnarray*}
\mathbb{P}[\Delta_c(\lambda_{ic})\leq H_0D^{-\alpha}_0<\beta I_0|\tilde{H}_{j*}\tilde{D}^{-\alpha}_{j*}\leq\Delta_i(\lambda_{ic})]&=&\mathbb{E}\left[e^{-D_0^{\alpha}\Delta_c(\lambda_{ic})}\right]\\
&&-\mathbb{E}\left[e^{-\beta D_0^{\alpha}\max\{\beta I_0,\Delta_c(\lambda_{ic})\}}\big|\tilde{H}_{j*}\tilde{D}^{-\alpha}_{j*}\leq\Delta_i(\lambda_{ic})\right]\\
&=& \mathbb{E}\left[e^{-D_0^{\alpha}\Delta_c(\lambda_{ic})}\right]-\mathbb{E}\left[e^{-\lambda_{ic}p_{ic}(\lambda_{ic})\beta^{\frac{2}{\alpha}}\psi D_0^2-D_0^{\alpha}\Delta_c(\lambda_{ic})}\right]\\
&=& \mathbb{E}\left[e^{-D_0^{\alpha}\Delta_c(\lambda_{ic})}\right]\left(1-\mathbb{E}\left[e^{-\lambda_{ic}p_{ic}(\lambda_{ic})\beta^{\frac{2}{\alpha}}\psi D_0^2}\right]\right).
\end{eqnarray*}
The lower bound on the outage probability is
\begin{eqnarray*}
\underline{q}(\lambda_{ic}) = \left[1-\exp\left(-A_{ic}\right)\right](1-B_{ic}).
\end{eqnarray*}

Comparing the above result with the lower bound in \eqref{Eqn:LowBoundOutProbDCAS} in Theorem \ref{Thm:BoundsTCwDCAS}, we found that they are exactly the same except the density term. The density term has been changed from $\lambda_c$ to $\lambda_{ic}p_{ic}$. Hence, the upper bound on the outage probability here can be obtained by following the same steps in the proof of Theorem \ref{Thm:BoundsTCwDCAS} and it is exactly the same as the result in \eqref{Eqn:UppBoundOutProbDCAS} by replacing $\lambda_c$ with $\lambda_{ic}p_{ic}(\lambda_{ic})$. That is,
\begin{eqnarray*}
\overline{q}(\lambda_{ic})=\left[1-\left(1-\frac{(\alpha-1)A_{ic}}
{[(\alpha-1)-A_{ic}]^2}\right)^+ e^{-A_{ic}}\right](1-B_{ic}).
\end{eqnarray*}
Bounds on TC can be found by solving $\underline{q}(\lambda_{ic})=\epsilon$ and $\overline{q}(\lambda_{ic})=\epsilon$ for $\overline{\lambda}_{ic}$ and $\underline{\lambda}_{ic}$, respectively.

\subsection{Proof of Theorem \ref{Thm:BoundsOutProbInterCanDCIAS}}\label{App:ProofBoundsOutProbInterCanDCIAS}

Here we only prove the lower bound since the proof for the upper bound is similar. The proof consists of three parts. \textbf{(i)} First, we only consider the DCAS technique is adopted (i.e. DICAS with $\Delta_i(\lambda_{ic})=0$) and find the lower bound on the outage probability as follows. According to the proof of Theorem \ref{Thm:BoundsTCwDCAS}, the outage probability for DCAS with interference cancellation has the following identity:
\begin{equation*}
q(\lambda_{ic})= \mathbb{P}[\beta I^{\texttt{nc}}_0\geq\Delta_c(\lambda_{ic})] \left(1-\mathbb{E}[e^{-\beta D^{\alpha}_0 I^{\texttt{nc}}_0}]\right).
\end{equation*}
Using the lower bound result in Theorem \ref{Thm:BoundsCCDFinter} and the Laplace transform of a shot-noise process\cite{FBBB10,MHJGAFBODMF10}, we can obtain
\begin{eqnarray*}
q(\lambda_{ic})&\geq& \left(1-\exp\left(-\frac{2\pi}{\alpha}\left(\frac{\beta}{\Delta_c(\lambda_{ic})}\right)^{\frac{2}{\alpha}}\int^{\infty}_0\lambda_{ic}^{\texttt{nc}}\left(\sqrt[\alpha]{\frac{\beta u}{\Delta_c(\lambda_{ic})}}\right) u^{\frac{2}{\alpha}-1} e^{-u}\dif u\right)\right)\cdot\\
&& \left(1-\mathbb{E}\left[\exp\left(-\frac{2\pi}{\alpha}\beta^{\frac{2}{\alpha}}D_0^2\int^{\infty}_0 \lambda^{\texttt{nc}}_{ic}\left(D_0\sqrt[\alpha]{\beta t}\right)\frac{t^{\frac{2}{\alpha}-1}}{(1+t)}\dif t\right)\right]\right).
\end{eqnarray*}
Let $\tilde{\beta}=\frac{\beta}{1+\beta}$ and the density $\lambda^{\texttt{nc}}_{ic}$ in \eqref{Eqn:NonCanIntensityDCAS} can be simplified as
\begin{eqnarray*}
\lambda^{\texttt{nc}}_{ic}(r) &=& \lambda_{ic} \left(1-\mathbb{E}\left[e^{-\tilde{\beta} r^{\alpha}I_0}\right]\mathbb{E}\left[e^{-\tilde{\beta} r^{\alpha}H_0D^{-\alpha}_0}\Big|H_0D^{-\alpha}_0\geq\Delta_c(\lambda_{ic})\right]\right)\\
&=& \lambda_{ic} \left(1-\exp\left(-\pi\psi\lambda_{ic}\tilde{\beta}^{\frac{2}{\alpha}}r^2-\tilde{\beta} r^{\alpha}\Delta_c(\lambda_{ic}) \right)\mathbb{E}\left[\frac{D^{\alpha}_0 }{D^{\alpha}_0+\tilde{\beta} r^{\alpha}}\right]\right)\\
&=& \lambda_{ic}-\lambda^{\texttt{c}}_{ic}(r).
\end{eqnarray*}
Substituting the above result into the lower bound, it follows that
\begin{eqnarray*}
q(\lambda_{ic})&\geq& \left(1-\exp\left(-\frac{A_{ic}}{\lambda_{ic}} \left[\lambda_{ic}-\frac{1}{\Gamma(\frac{2}{\alpha})}\int^{\infty}_0 \lambda^{\texttt{c}}_{ic}\left(\sqrt[\alpha]{\beta u/\Delta_c(\lambda_{ic})}\right) u^{\frac{2}{\alpha}-1} e^{-u}\dif u \right] \right)\right)\cdot\\
&& \left(1-\mathbb{E}\left[\exp\left(-\beta^{\frac{2}{\alpha}}\psi D_0^2\left[\lambda_{ic}-\frac{1}{\Gamma(\frac{2}{\alpha})\Gamma(1-\frac{2}{\alpha})}\int^{\infty}_0 \lambda^{\texttt{c}}_{ic}\left(D_0\sqrt[\alpha]{\beta t}\right)\frac{t^{\frac{2}{\alpha}-1}}{(1+t)}\dif t\right]\right)\right]\right)\\
&=& \left\{1-\exp\left(-A_{ic}\left[1-\frac{1}{\lambda_{ic}}\mathcal{G}\left(\lambda^{\texttt{c}}_{ic};\frac{\alpha}{2}\right)\right]\right)\right\}\cdot\\
&&\left\{1-\mathbb{E}\left[\exp\left(-\beta^{\frac{2}{\alpha}}\lambda_{ic}\psi D_0^2\left[1-\frac{1}{\lambda_{ic}}\mathcal{B}\left(\lambda^{\texttt{c}}_{ic};\frac{\alpha}{2},1-\frac{\alpha}{2}\right)\right]\right)\right]\right\}\\
&=& \left(1-e^{-\hat{A}_{ic}}\right)(1-\hat{B}_{ic}).
\end{eqnarray*}

\textbf{(ii)} Secondly, we find the lower bound on the outage probability for the DIAS technique (i.e. DICAS with $\Delta_c(\lambda_{ic})=0$). According to the proof of Theorem \ref{Thm:BoundsTCwDIAS}, we know the PPP $\hat{\Pi}_{ic}$ in which any single transmitter can cause outage at the reference receiver has the density  $\hat{\lambda}_{ic}(r)=\lambda_{ic} p_i(\lambda_{ic})\mathbb{E}\left[\frac{\beta D_0^{\alpha}}{\beta D_0^{\alpha}+r^{\alpha}}\right]$. Since the probability of a transmitter in $\Pi^{\texttt{nc}}_{ic}$ is $\mathbb{P}\left[\frac{\tilde{H}_{k}|X_k|^{-\alpha}}{I_0+H_0D^{-\alpha}_0}<\tilde{\beta}\right]$, the cancelable part of $\hat{\lambda}_{ic}(r)$ is
\begin{eqnarray*}
\hat{\lambda}^{\texttt{c}}_{ic}(r)&=&\lambda_{ic} p_i(\lambda_{ic})\mathbb{E}\left[\frac{\beta D_0^{\alpha}}{\beta D_0^{\alpha}+r^{\alpha}}\right]\mathbb{P}\left[\frac{\tilde{H}_{k}r^{-\alpha}}{I_0+H_0D^{-\alpha}_0}<\tilde{\beta}\right]\\
&=& \lambda_{ic} p_i(\lambda_{ic})\mathbb{E}\left[\frac{\beta D_0^{\alpha}}{\beta D_0^{\alpha}+r^{\alpha}}\right]\mathbb{E}\left[\frac{D_0^{\alpha}}{D_0^{\alpha}+\tilde{\beta}r^{\alpha}}\right]
e^{-\pi\psi\tilde{\beta}^{\frac{2}{\alpha}}r^2\lambda_{ic}}.
\end{eqnarray*}
Using $\hat{\lambda}^{\texttt{nc}}_{ic}(r)=\hat{\lambda}_{ic}(r)-\hat{\lambda}^{\texttt{c}}_{ic}(r)$ to find the average number of transmitters which are in the dominant interference coverage and noncancelable, we can obtain the lower bound on $q(\lambda_{ic})$ because $\lambda^{\texttt{nc}}_{ic}(r)>\hat{\lambda}^{\texttt{nc}}_{ic}(r)$. So it follows that
\begin{eqnarray*}
\underline{q}(\lambda{ci})&=& 1-\exp\left(-2\pi\int_0^{\infty}\hat{\lambda}_{ic}(r)\left[1-\hat{\lambda}^{\texttt{c}}_{ic}(r)/\hat{\lambda}_{ic}(r)\right]r\dif r \right)\\
&=& 1-\exp\left(-2\pi\int^{\infty}_0 \lambda_{ic} p_i(\lambda_{ic})\mathbb{E}\left[\frac{\beta D_0^{\alpha}r}{\beta D_0^{\alpha}+r^{\alpha}}\right]\left(1-\mathbb{E}\left[\frac{D_0^{\alpha}}{D_0^{\alpha}+\tilde{\beta}r^{\alpha}}\right]
e^{-\pi\psi\tilde{\beta}^{\frac{2}{\alpha}}r^2\lambda_{ic}}\right) \dif r\right)\\
&\stackrel{(a)}{=}& 1-\exp\left(-\frac{2\pi}{\alpha}\lambda_{ic} p_i(\lambda_{ic})\beta^{\frac{2}{\alpha}}\mathbb{E}\left[\int^{\infty}_0 D_0^2\left[1-h(t)\right]\frac{t^{\frac{2}{\alpha}-1}}{1+t} \dif t\right]\right)\\
&=& 1-\exp\left(-\lambda_{ic} p_i(\lambda_{ic})\beta^{\frac{2}{\alpha}}\psi\mathbb{E}\left[D_0^2\left(1-\mathcal{B}\left(h(t);\frac{\alpha}{2},1-\frac{\alpha}{2}\right)\right)\right]\right),
\end{eqnarray*}
where $(a)$ is obtained by doing variable change with $t=\frac{r^{\alpha}}{\beta D_0^{\alpha}}$ and $h(t)$ is given by
$$h(t)=\left(\frac{1}{1+t\tilde{\beta}\beta}\right)\exp(D_0^2(\tilde{\beta}\beta t)^{\frac{2}{\alpha}}\lambda_{ic}).$$
\textbf{(iii)} Since DIAS only reduces interference, the lower bound found in part \textbf{(ii)} indicates that the effect of DIAS is to change the density of the interferers by multiplying it with $p_i(\lambda_{ic})$. Therefore, the lower bound for the DICAS technique can be obtained by replacing $\lambda_{ic}$ in the lower bound found in part \textbf{(i)} with $\lambda_{ic}p_i(\lambda_{ic})$. This completes the proof.

\section*{Acknowledgement}
The authors would like to thank Dr. R. K. Ganti and Prof. S. Shakkottai for their suggestions and comments on this work.

% section of references
\bibliographystyle{ieeetran}
\bibliography{IEEEabrv,Ref_DistOppSch}

% Section of Figures

\begin{figure}[!h]
  \centering
  \includegraphics[scale=0.8]{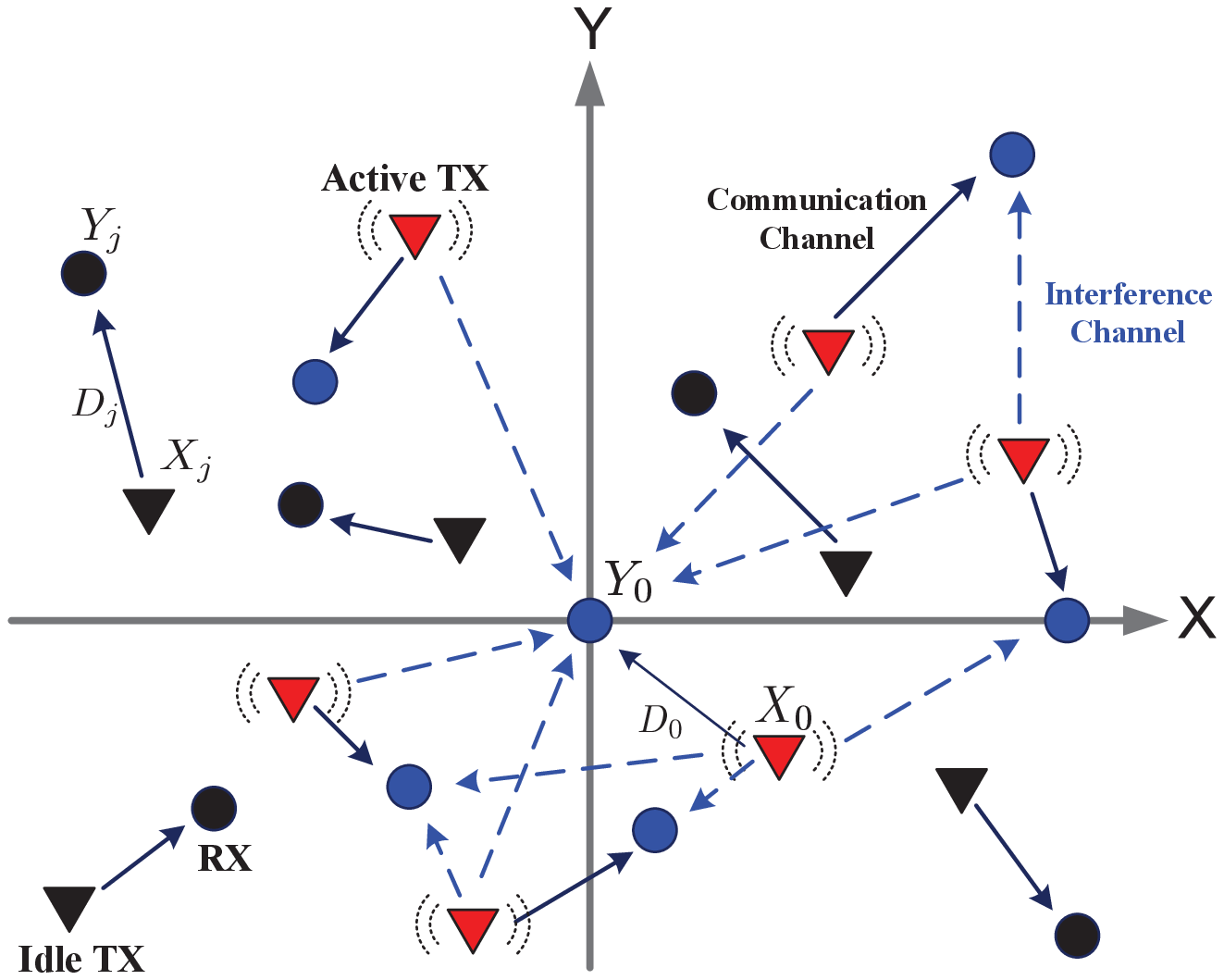}\\
  \caption{\small The network model: All transmitters form a homogeneous PPP with certain density. Red and black triangles represent active transmitters and idle transmitters, respectively. Receiver $Y_0$ located at the origin is called reference receiver and its transmitter $X_0$ is called reference transmitter.}
  \label{Fig:NetworkModel}
\end{figure}

\begin{figure}[!h]
  \centering
  \includegraphics[scale=1.2]{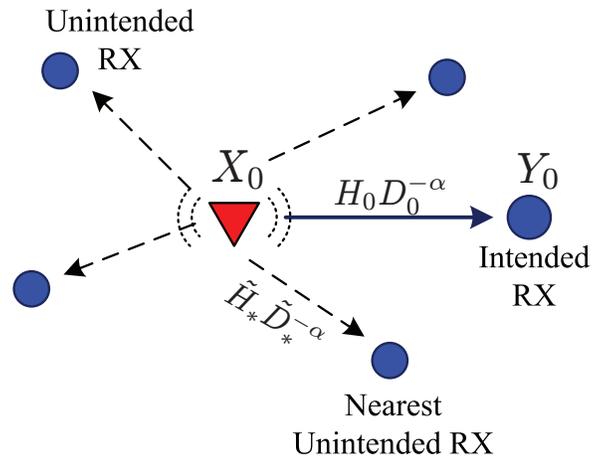}\\
  \caption{A schematic explanation for the three proposed DOS techniques: The communication channel of the $X_0-Y_0$ pair is the solid arrow and its fading channel gain is denoted by $H_0D_0^{-\alpha}$. All other dashed arrows stand for interference channels, and $\tilde{H}_*\tilde{D}_*^{-\alpha}$ represents the channel gain of the interference channel from transmitter $X_0$ to its nearest unintended receiver.}
  \label{Fig:DisOppTrans}
\end{figure}

\begin{figure}[!h]
  \centering
  \includegraphics[scale=0.65]{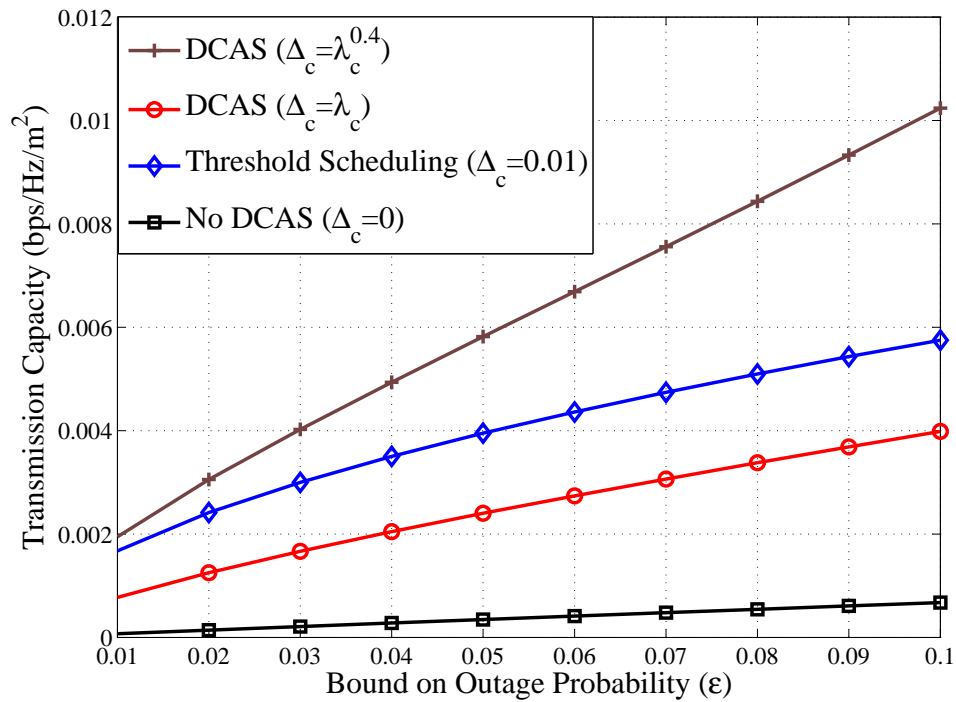}\\
  \caption{The simulation results of TC for the DCAS technique with different $\Delta_c(\lambda_c)$. The network parameters for simulation are: $\alpha=4$, $\beta=2$, $\Delta_c=\lambda^{\gamma}_c$ and $D=8m$. }
  \label{Fig:TCwDCAS}
\end{figure}

\begin{figure}[!h]
  \centering
  \includegraphics[scale=0.6]{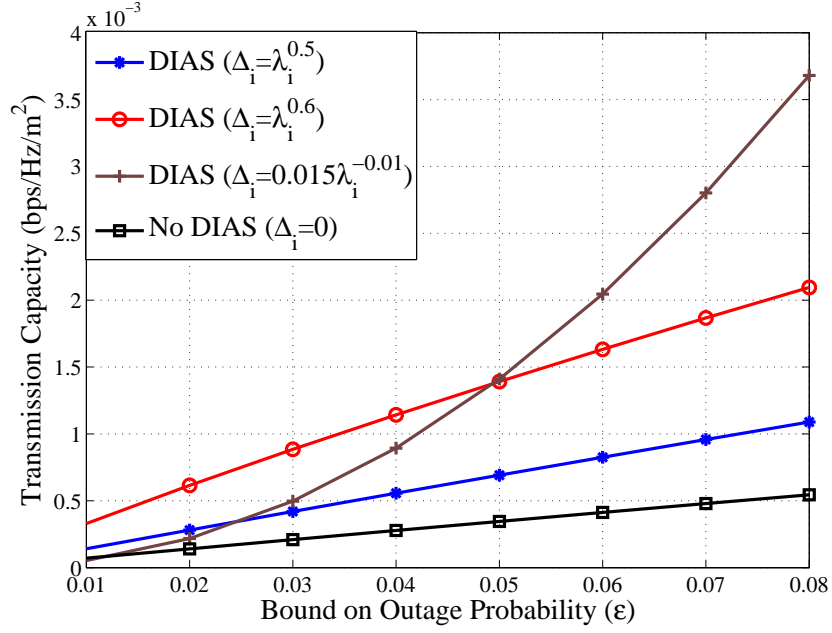}\\
  \caption{The simulation results of TC for the DIAS technique with different $\Delta_i(\lambda_i)$. The network parameters for simulation are: $\alpha=4$, $\beta=2$ and $D_0=8m$. }
  \label{Fig:TCwDIAS}
\end{figure}

\begin{figure}[!h]
  \centering
  \includegraphics[scale=0.6]{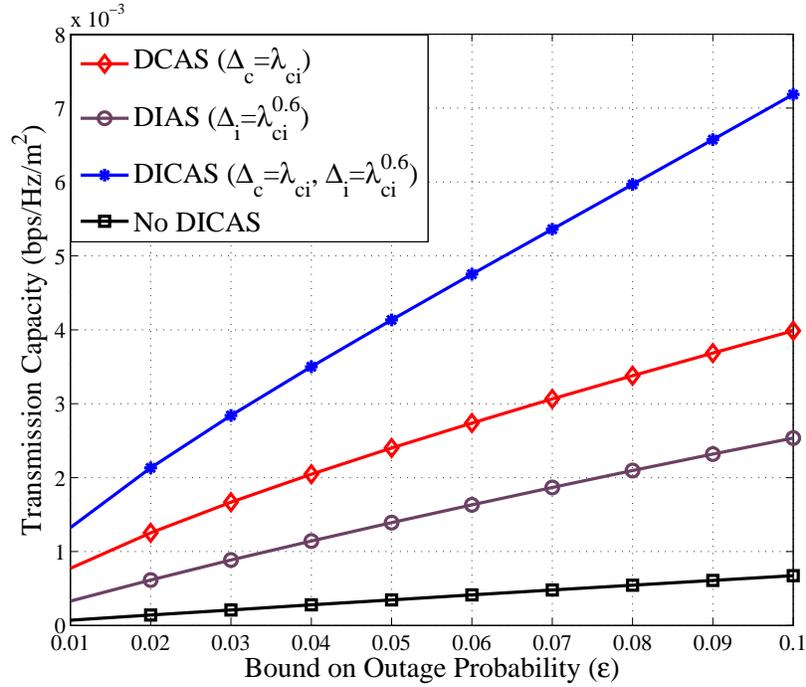}\\
  \caption{The simulation results of TC for the DCAS, DIAS and DICAS techniques. The network parameters for simulation are: $\alpha=4$, $\beta=2$, $D_0=8m$, $\Delta_c(\lambda_{ic})=\lambda_{ic}$ and $\Delta_i(\lambda_{ic})=\lambda^{0.6}_{ic}$.}
  \label{Fig:TCwDCIAS}
\end{figure}

\end{document}